\numberwithin{equation}{section}
\newcommand{\abs}[1]{\left\lvert#1\right\rvert}
\newcommand{\norma}[1]{\left\lVert#1\right\rVert}
\newcommand{\normaop}[1]{{\left\vert\kern-0.1ex\left\vert\kern-0.1ex\left\vert #1 
    \right\vert\kern-0.1ex\right\vert\kern-0.1ex\right\vert}}
\DeclareMathOperator{\Realpart}{Re}
\renewcommand{\Re}{\Realpart}
\DeclareMathOperator{\Imaginarypart}{Im}
\renewcommand{\Im}{\Imaginarypart}
\DeclareMathOperator{\Arg}{Arg}
\DeclareMathOperator{\Res}{Res}
\newcommand{\numberset}{\mathbb}
\newcommand{\N}{\numberset{N}}
\newcommand{\Z}{\numberset{Z}}
\newcommand{\R}{\numberset{R}}
\newcommand{\C}{\numberset{C}}
\theoremstyle{definition}
\newtheorem{definition}{Definition}[section]
\theoremstyle{plain}
\newtheorem{theorem}{Theorem}[section]
\newtheorem*{theorem*}{Theorem}
\theoremstyle{lemma}
\newtheorem{lemma}{Lemma}[section]
\theoremstyle{corollary}
\theoremstyle{proposition}
\newtheorem{proposition}{Proposition}[section]
\theoremstyle{remark}
\newtheorem{remark}{Remark}[section]
\providecommand{\customgenericname}{}
\newcommand{\newcustomtheorem}[2]{%
  \crefname{#2}{#2}{#2s}%
  \newenvironment{#1}[1]
  {%
   \renewcommand\customgenericname{#2}%
   \crefalias{innercustomgeneric}{#2}%
   \renewcommand\theinnercustomgeneric{##1}%
   \innercustomgeneric
  }
  {\endinnercustomgeneric}
}
\DeclareMathOperator{\ran}{Ran}
\DeclareMathOperator{\sgn}{sgn}
\DeclareMathOperator{\Span}{span}
\newcommand*\diff{\mathop{}\!\mathrm{d}}
\DeclareMathOperator{\supp}{supp}
\DeclareMathOperator{\diam}{diam}
\DeclareRobustCommand{\stirlingone}{\genfrac[]{0pt}{}}
\title[Point interactions on domains]{Resonances and resonance expansions\\ for $\delta$-interactions on the half-space}
\author[D.Noja]{Diego Noja}
\address{Dipartimento di Matematica e Applicazioni, Universit\`a
 di Milano Bicocca,  via Roberto Cozzi 55, 20125 Milano, Italy}
\email{diego.noja@unimib.it}
\author[F.Raso Stoia]{Francesco Raso Stoia}
\address{Dipartimento di Matematica e Applicazioni, Universit\`a
 di Milano Bicocca,  via Roberto Cozzi 55, 20125 Milano, Italy}
\email{f.rasostoia@campus.unimib.it}
\begin{document}

\begin{abstract}

In this paper we describe the resonances of the singular perturbation of the Laplacian on the half space $\Omega =\mathbb R^3_+$ given by the self-adjoint operator named $\delta$-interaction. We will assume Dirichlet or Neumann boundary conditions on $\partial \Omega$. At variance with the well known case of $\R^3$, the resonances constitute an infinite set, here completely characterized. Moreover, we prove that resonances have an asymptotic distribution satisfying a modified Weyl law and we give the semiclassical asymptotics. Finally we give applications of the results to the asymptotic behavior of the abstract wave and Schr\"odinger dynamics generated by the Laplacian with a point interaction on the half-space. 
\end{abstract}

\maketitle

\begin{footnotesize}
 \emph{Keywords:} Point interactions; Spectral theory; Resonances
 
 \emph{MSC 2020:}  35J10,  35Q55, 35A21
 
\end{footnotesize}
\section{Introduction}
Schr\"odinger operators $H$ on unbounded domains, besides a rich and much developed spectral theory, are characterized by the existence of resonances, or more precisely {\em continuation resonances}, by which we mean the poles of the meromorphic continuation in the lower complex half-plane (in suitable topologies) of the resolvent $(H-z^{2})^{-1}$ (see the seminal paper \cite{Vainberg} for the rigorous introduction of the concept). Loosely speaking, resonances behave as a sort of complex eigenvalues; associated to them there are resonance functions, solving formally the eigenvalue equation but only locally square integrable. 
 Among the many reasons, resonances are of interest because the solution of the wave or Schr\"odinger equation propagating on non-compact domains can be expressed, asymptotically for large times and for initial data localized in space, as an expansion based on resonance functions \cite{AH-K:ResonanceExpansionGreenFunctionSchrodingerWave, LP:ScatteringTheory, DZ:MathematicalTheoryScatteringResonances}, as an alternative to the  expansion based on the spectral theorem. For these and other reasons, it is desirable to have as precise information as possible about the location and properties of resonances of self-adjoint operators. While a lot of rigorous results have been accumulated along the years about general qualitative properties of resonances and their several applications in Physics, Dynamical Systems and Geometry  (see the treatise \cite{DZ:MathematicalTheoryScatteringResonances} for an encyclopedic and recent account), precise and quantitative information also for specific classes of operators is however rare. In this paper we want to begin the analysis of the continuation resonances of the class of self-adjoint singular perturbations of the Laplacian called  $\delta$-interactions or point interactions on subdomains of $\R^n, \ n=2,3$. There is a huge literature on point interactions (see \cite{Albeverio:SolvableModelsQM} for the bibliography until 2005, still growing). Their interest is due mostly to the fact that, in many cases, they represent exactly solvable models in Quantum Mechanics and that they are suitable scaling limit of more realistic, non solvable models. However, they have been studied almost exclusively on the whole space $\R^n, \ n=1,2,3$. Only a handful of papers treats the case of bounded domains (see \cite {BFM:PointInteractionBoundedDomains, EM:OptimizationPrincipalEigenvalue1PointInteractionBounded}) of hyperbolic space (see the recent \cite{Derezinski24}, including also point interactions on spheres), while the case of unbounded domains not coinciding with $\R^3$ has been treated essentially only in the geometries of infinite tubes and layers, especially relevant to the subject of quantum waveguides (see the book \cite{EK2015}).  As regards resonances, as it is well known, the Laplacian with a single point interaction in $\R^n,\  n=1,2,3$ has at most a single resonance, a strong difference with typical Schr\"odinger operators $H=-\Delta+V$ (see for example \cite{CH2005}). On the contrary, $N$-point delta interactions have infinitely many resonances. This has been recognized originally in \cite{AH-K:PerturbationsResonances}, and later the distribution of complex resonances has been investigated again by Albeverio and Karabash \cite{Albeverio2020,AK:MultilevelAsymptoticDistributiojResonances,AK:ResonanceFreeRegionsandNonHermitianSpectralOptimization} and by Lipovsk\'{y} and Lotoreichik \cite{LL:AsymptoticsResonancesInducedPointInteractions}. The special case of real resonances in dimension 3 is discussed in \cite{MS:RealResonances3dSchrodingerPoint}. Properties of the resolvent in dimension two are instead studied in \cite{CMY:2dSchrodingerPointInteractionsThresholdExpansionsZeroModes}, but the analysis of resonances of point interactions in dimension two is still rather poor. The one dimensional case is the best known, see in particular the recent papers \cite{DMW24} and \cite{Sacchetti23} and references therein. We stress that the treatment of the one dimensional case is greatly simplified by the fact that the $\delta$ distribution can be considered as a form bounded perturbation of the Laplacian and can be treated, somewhat formally but safely, as a standard potential. We mention in one dimension the half-line with a delta potential and a Dirichlet boundary conditions, which is called the Winter model in the physical literature and has received some attention in \cite{DM:SemiclassicalResonanceAsymptoticDeltaHalfLine, Sacchetti23} (see also the references therein). This is by no means the case in higher dimensions, where the point interaction cannot be properly interpreted as a kind of Schr\"odinger operator "$-\Delta+\delta$". 
As regards resonances, little is known when the domain considered is an unbounded subset of $\R^{n}$. In the higher dimensional case we refer to some examples concerning waveguides (see \cite{EK2015}).

Here we consider the case of the half-space with Dirichlet or Neumann boundary conditions on the boundary, and a point interaction at a generic position $y$. The plan of the paper is the following. In Section 2 point interactions on the half-space are introduced, giving their complete definition and relevant properties. Because a rather standard construction is involved in the rigorous definition, we postpone the proofs of Proposition \ref{definition:onepointD} and Proposition \ref{definition:onepointN} to the Appendix. We stress that these model cases do not reduce to any combinations of point interactions in $\R^3$ (see Remark 2.5 and Remark 3.8). Then, in Section 3 we obtain the explicit computation of of all the resonances of the considered models, here with full details and relevant remarks. The explicit knowledge of the set of resonances allows us to give in Section 4 a result about the number of resonances in complex discs of radius  $R$ as $R\to\infty$ . It should be noticed that the  distribution does not resemble the analogous one for smooth radial potentials for the whole space. This different behavior was already noticed in \cite{LL:AsymptoticsResonancesInducedPointInteractions} for several point interactions in $\mathbb R^3$ and it is here confirmed for a domain not coinciding with $\R^3$ (see remark\ \ref{asympdim} for more details on this). In Section 5 we study the semiclassical asymptotic of resonances, which parallels the one dimensional case treated recently in \cite{DM:SemiclassicalResonanceAsymptoticDeltaHalfLine}. Finally, in Section 6 the expansion in resonances is deduced, both for the wave and Schr\"odinger evolution. More precisely, we give the expansion in resonances of the flow for the wave equation and of the propagator for the Schr\"odinger operator of a point interaction in the half-space.
The resonance expansion for point interactions was previously considered in models containing point interactions in \cite{AH-K:ResonanceExpansionGreenFunctionSchrodingerWave}, where an array of point interactions in $\R^3$ is studied, and in \cite{BNP05}, where a quantized model of matter and radiation introduced in \cite{NP98,NP99} is studied. Finally, we notice that a possible alternative to the direct analysis of continuation resonances given here is by means of the black box scattering theory developed by Sjostrand and Zworski in \cite{Sjostrand:LectureResonances, SZ:ComplexScalingDistributionScatteringPoles}. Black box analysis of resonances and the extension of the results obtained in the present paper to domains more general than the half-space, as well as the analysis of the two dimensional case, seem to be interesting tasks and they will be the object of future research.

\section{Preliminaries: Point Interaction on the half-space}
\label{preliminaries}
In this section we introduce the definition of the Laplacian with a point interaction in the half-space. A  definition of this kind of operators suitable for rather general domains in $\mathbb R^3$ could be obtained by means of the theory of self-adjoint extensions of symmetric operators, but in the special case here studied a direct definition will suffice.\par We set $\R_{+}^{3}=\Set{x=(x_{1}, x_{2}, x_{3})\in\R^{3}| x_{3}>0}$ and its boundary is $\partial \R^3_+=\left\{x=(x_{1}, x_{2}, x_{3})\in\R^{3}\ |\ x_{3}=0\right\}$.  \\
We denote with $G_{z, y}^{0}$ the Green's function centered at $y\in \R^{3}$ for the operator $-\Delta-z^{2}$ on the whole space
\begin{equation*}
G_{z, y}^{0}(x)=\frac{e^{iz\abs{x-y}}}{4\pi\abs{x-y}}\ ,\quad \quad \quad \Im z>0\ .
\end{equation*}
Notice that $\Im z>0$ implies $z^2\in \C\setminus [0,+\infty)$, the resolvent set of the free Laplacian on $\R^3$.\par
The Green function centered at $y\in\R^3_+$ of the operator $-\Delta-z^{2}$ on the half-space $\R_{+}^{3}$ with Dirichlet or Neumann boundary conditions is given respectively by
\begin{equation*}
G_{z, y}^{\text{D}}=G_{z, y}^{0}-h_{z, y}^{\textup{D}}\quad\textup{or}\quad G_{z, y}^{\textup{N}}=G_{z, y}^{0}-h_{z, y}^{\textup{N}},
\end{equation*}
where $h_{z, y}^{\textup{D}}$ and $h_{z, y}^{\textup{N}}$ solve in the classical sense respectively the following boundary value problems
\begin{equation*}
\begin{cases}
(-\Delta-z^{2})h_{z, y}^{\textup{D}}=0 \quad & \textup{in}\ \R_{+}^{3}\\
h_{z, y}^{\textup{D}}=G_{z, y}^{0} \quad   &\textup{in}\ \partial\R^3_+
\end{cases}\quad\text{or}\quad 
\begin{cases}
(-\Delta-z^{2})h_{z, y}^{\textup{N}}=0 \quad & \textup{in}\,\R_{+}^{3}\\
\frac{\partial h_{z, y}^{\textup{N}}}{\partial x_{3}}=\frac{\partial G_{z, y}^{0}}{\partial x_{3}} \quad   &\textup{in}\, \partial\R^3_+
\end{cases}.
\end{equation*}
By a simple symmetry argument one obtains that
\begin{equation*}
h_{z, y}^{\textup{D}}(x)=\frac{e^{iz\abs{x-\overline{y}}}}{4\pi\abs{x-\overline{y}}}\quad\textup{and}\quad h_{z, y}^{\textup{N}}(x)=-\frac{e^{iz\abs{x-\overline{y}}}}{4\pi\abs{x-\overline{y}}}, \ \ \ \ \ \ \Im z>0
\end{equation*}
where $\overline{y}=(y_{1}, y_{2}, -y_{3})$ denotes the reflection of the point $y$ with respect to the plane $x_3=0$.\par 
Moreover we will denote with $-\Delta^{\textup{D}}$ and $-\Delta^{\textup{N}}$ the free Laplacian on $\R_{+}^{3}$ with Dirichlet and Neumann boundary conditions respectively on $\partial\R^3_+$, considered as self-adjoint operators on $L^2(\R^3_+)$. These operators have the respective domains

\begin{align*}
D(-\Delta^{\textup{D}})&=H^{2}\left(\R_{+}^{3}\right)\cap H^{1}_{0}\left(\R_{+}^{3}\right) \\
D(-\Delta^{\textup{N}})&=\left\{\psi\in H^{2}(\R_{+}^{3}):\left.\frac{\partial\psi}{\partial x_{3}}\right\rvert_{x_{3}=0}=0\right\}.
\end{align*}
where in the Neumann case the boundary value of the partial derivative has to be intended in the sense of trace (see for closely related framework \cite{BRS:eigenvalueinequalities} and for the specific case of the half-space \cite{BS:QuasiBoundary}).
Let us fix an arbitrary $y\in \R_{+}^3$, the place where the singular interaction is placed. We consider the symmetric non self-adjoint restrictions of the Dirichlet and Neumann Laplacians 
\begin{align*}
-\mathring\Delta_y^{\textup{D}}& \ \ \text{with domain}\ \ D(-\mathring\Delta_y^{\textup{D}})=\left\{\psi\in D(-\Delta^{\textup{D}}), \ \psi(y)=0 \right\}\\
-\mathring\Delta_y^{\textup{N}}& \ \ \text{with domain} \ \ D(-\mathring\Delta_y^{\textup{N}})=\left\{\psi\in D(-\Delta^{\textup{N}}), \ \psi(y)=0 \right\}.
\end{align*}
These restrictions are well defined because, by Sobolev embedding, domain elements admit continuous representatives. Moreover, the restrictions are closed symmetric operators. It can be proven (see the Appendix) that they both have defect indices $(1,1)$. So, by the von Neumann-Kre\u{\i}n theory, they admit a one parameter family of self-adjoint extensions. Their definition and properties are given in the two following propositions, the proof of which is an adaptation of the analogous proof valid for the case of $\R^3$ or for the case of bounded domains, and it is reported in detail in the Appendix.


\begin{proposition}
\label{definition:onepointD}
Let $\alpha\in \mathbb R^*:=\mathbb R\cup\{\infty\}$ and $-\Delta_{\alpha, y}^{\textup{D}}$ be the one parameter family of self-adjoint extensions of the closed symmetric non self-adjoint operator $-\mathring\Delta_y^{\textup{D}}$ with domain $D(-\mathring\Delta_y^{\textup{D}})=\Set{\psi\in L^2(\R^3_+) | \psi\in D(-\Delta^{\textup{D}}), \ \psi(y)=0}$.
Then, the following representation holds for the operator domain, operator action and resolvent of $-\Delta_{\alpha, y}^{\textup{D}}$ (where it is understood that $\Im z>0$):
\begin{align}\label{dirichletoperator}
D\Bigl(-\Delta_{\alpha, y}^{\textup{D}}\Bigr)&=\Set{\psi\in L^{2}\Bigl(\R_{+}^{3}\Bigr)|\,\exists q\in \C \, :\ \psi=\varphi^{z}+qG_{z, y}^{\textup{D}}, \, \varphi^{z}\in D(-\Delta^{\textup{D}}),\, \varphi^{z}(y)=\Gamma_{\alpha, y}^{\textup{D}}q}\nonumber\\
-\Delta_{\alpha, y}^{\textup{D}}\psi&=-\Delta\varphi^{z}+z^{2}qG_{z, y}^{\textup{D}}\nonumber\\
(-\Delta_{\alpha, y}^{\textup{D}}-z^{2})^{-1}&=(-\Delta^{\textup{D}}-z^{2})^{-1}+(\Gamma_{\alpha, y}^{\textup{D}})^{-1}\langle \overline{G_{z, y}^{\textup{D}}}, \cdot\rangle_{L^{2}\left(\R_{+}^{3}\right)}G_{z, y}^{\textup{D}}\\
\Gamma_{\alpha, y}^{\textup{D}}(z)&=\alpha-\frac{iz}{4\pi}+h_{z, y}^{\textup{D}}(y)=\alpha-\frac{iz}{4\pi}+\frac{e^{iz\abs{y-\overline{y}}}}{4\pi\abs{y-\overline{y}}}=\alpha-\frac{iz}{4\pi}+\frac{e^{2iy_{3}z}}{8\pi y_{3}}\nonumber ,
\end{align}
\end{proposition}
\begin{remark}
\label{remark:onepointD}
In particular, $\alpha=\infty$ corresponds to $-\Delta^{\textup{D}}$.
\end{remark}
\begin{proposition}
\label{definition:onepointN}
Let  $\alpha\in \mathbb R^*:=\mathbb R\cup\{\infty\}$ and $-\Delta_{\alpha, y}^{\textup{N}}$ be the one parameter family of self-adjoint extensions of the closed symmetric non self-adjoint operator $-\mathring\Delta_y^{\textup{N}}$ with domain $D(-\mathring\Delta_y^{\textup{N}})=\Set{\psi\in L^2(\R^3_+)|\psi\in D(-\Delta^{\textup{N}}), \ \psi(y)=0}$.
Then, the following representation holds for the operator domain, operator action and resolvent of $-\Delta_{\alpha, y}^{\textup{N}}$ (where it is understood that $\Im z>0$):
\begin{align}
\label{neumannoperator}
D\Bigl(-\Delta_{\alpha, y}^{\textup{N}}\Bigr)&=\Set{\psi\in L^{2}\Bigl(\R_{+}^{3}\Bigr)|\,\exists q\in \C\, :\ \psi=\varphi^{z}+qG_{z, y}^{\textup{N}}, \, \varphi^{z}\in D(-\Delta^{\textup{N}}),\, \varphi^{z}(y)=\Gamma_{\alpha, y}^{\textup{N}}q}\nonumber\\
-\Delta_{\alpha, y}^{\textup{N}}\psi&=-\Delta\varphi^{z}+z^{2}qG_{z, y}^{\textup{N}}\nonumber\\
(-\Delta_{\alpha, y}^{\textup{N}}-z^{2})^{-1}&=(-\Delta^{\textup{N}}-z^{2})^{-1}+(\Gamma_{\alpha, y}^{\textup{N}})^{-1}\langle \overline{G_{z, y}^{\textup{N}}}, \cdot\rangle_{L^{2}\left(\R_{+}^{3}\right)}G_{z, y}^{\textup{N}}\\
\Gamma_{\alpha, y}^{\textup{N}}(z)&=\alpha-\frac{iz}{4\pi}+h_{z, y}^{\textup{N}}(y)=\alpha-\frac{iz}{4\pi}-\frac{e^{iz\abs{y-\overline{y}}}}{4\pi\abs{y-\overline{y}}}=\alpha-\frac{iz}{4\pi}-\frac{e^{2iy_{3}z}}{8\pi y_{3}}\nonumber
\end{align}
\end{proposition}
\begin{remark}
\label{remark:onepointN}
In particular, $\alpha=\infty$ corresponds to $-\Delta^{\textup{N}}$.
\end{remark}
\begin{remark}
The generic domain element of a point interaction is decomposed in the sum of a regular part ($\varphi^z$) belonging to the domain of the Laplacian with relevant boundary conditions and a singular part ($qG_{z, y}^{\textup{D}}$ or $qG_{z, y}^{\textup{N}}$) proportional to the Green's function for the given domain and boundary condition. The decomposition is complemented by a relation linking the regular part at $y$ and the coefficient $q$ of the singular part, playing the role of a further boundary condition at the singular point $y$.  For Dirichlet or Neumann case, the boundary conditions at the singularity are given by $\varphi^{z}(y)=\Gamma_{\alpha, y}^{\textup{D}}q$ or $\varphi^{z}(y)=\Gamma_{\alpha, y}^{\textup{N}}q$ respectively. The latter contains the parameter $\alpha$ which specifies the particular self-adjoint extension of the free Laplacian among the whole family of its self-adjoint extensions. 
This pattern is typical of point interactions in their various incarnations. We add that the physical interpretation of $\alpha$ is related to the scattering length of the interaction.
\end{remark}

\section{Resonances of point interactions in the half-space}

A first result of this work concerns the localization of the continuation resonances for the operators $-\Delta_{\alpha, y}^{\textup{D}}$ and $-\Delta_{\alpha, y}^{\textup{N}}$. To introduce the concept of resonances for these perturbations of the Laplacian we first recall the corresponding definition for the free Laplacian on the half space. The resolvent $\left(-Delta^{\textup{D}}-z^{2}\right)$ is defined as follows (we consider the Dirichlet case)
\begin{gather*}
(-\Delta^{\textup{D}}-z^{2})^{-1}: L^{2}(\R_{+}^{3})\to H^{2}(\R_{+}^{3}), \quad\Im{z}>0\\
\left[(-\Delta^{\textup{D}}-z^{2})^{-1}f\right](x)=\int_{\R_{+}^{3}}G_{z, y}^{\textup{D}}(x)f(y)\diff y.
\end{gather*}
This is defined only for $\Im{z}>0$ because the kernel $G_{z, y}^{\textup{D}}(x)$ fails to be in $L^{2}(\R_{+}^{3})$ for $\Im{z}\le 0$ since it becomes non square integrable in a neighborhood of infinity. If we restrict its domain only to functions with compact support, this behavior of the kernel ceases to be problematic:
\begin{equation}
(-\Delta^{\textup{D}}-z^{2})^{-1}: L_{\textup{comp}}^{2}(\R_{+}^{3})\to H_{\textup{loc}}^{2}(\R_{+}^{3}).
\end{equation}
The issue is that $L_{\textup{comp}}^{2}(\R_{+}^{3})$ and $H_{\textup{loc}}^{2}(\R_{+}^{3})$ (or also $L_{\textup{loc}}^{2}(\R_{+}^{3})\supset H_{\textup{loc}}^{2}(\R_{+}^{3})$) are not Banach spaces and to introduce in the simplest way the procedure of holomorphic continuation, a common and useful trick consists in bracketing the resolvent by a smooth cut-off function $\rho$,  so obtaining the so called truncated resolvent
\begin{equation}
\left[\rho(-\Delta^{\textup{D}}-z^{2})^{-1}\rho f\right](x)=\rho(x)\int_{\Omega}G_{z, y}^{\textup{D}}(x)\rho(y)f(y)\diff y.
\end{equation}
This is a map between Banach spaces: $\rho(-\Delta^{\textup{D}}-z^{2})^{-1}\rho: L^{2}(\R_{+}^{3})\to L^{2}(\R_{+}^{3})$ for every cut-off function $\rho$ that turns out to be well defined and holomorphic $\forall z\in\C$. So, by definition, the free resolvent can be holomorphically continued to $\mathbb C$ if the truncated resolvent can be holomorphically continued to $\mathbb C$ as a map $\rho(-\Delta^{\textup{D}}-z^{2})^{-1}\rho: L^{2}(\R_{+}^{3})\to L^{2}(\R_{+}^{3})$.

Now we consider the single point interactions on $\R_{+}^{3}$. We want to carry out the same extension as done for the free Laplacian. Looking at the action of the resolvent in formula \cref{dirichletoperator} we observe that:
\begin{itemize}
\item[-]The first term is simply the free resolvent, which can be holomorphically extended to the whole $\C$ as recalled above.
\item[-] In the second term, the projection onto the function $G_{z, y}^{\textup{D}}$, viewed as an operator $L_{\textup{comp}}^{2}(\R_{+}^{3})\to L_{\textup{loc}}^{2}(\R_{+}^{3})$, can be holomorphically extended to all $\C$ in the sense defined above.
\item[-] In the second term, the factor $(\Gamma_{\alpha, y}^{\textup{D}}(z))^{-1}$ in front of the projector, is well defined out of the zeroes of $\Gamma_{\alpha, y}^{\textup{D}}(z)$.
\end{itemize}
The same procedure can be performed for $-\Delta_{\alpha, y}^{\textup{N}}$. 
Summarizing we can state the following proposition and definitions (here $H$ denotes either $-\Delta_{\alpha, y}^{\textup{D}}$ or $-\Delta_{\alpha, y}^{\textup{N}}$ and $\Gamma_{\alpha, y}$ either $\Gamma_{\alpha, y}^{\textup{D}}$ or $\Gamma_{\alpha, y}^{\textup{N}}$).
\begin{proposition}
The map
\begin{equation*}
z\mapsto(H-z^{2})^{-1}:  L_{\textup{comp}}^{2}(\R_{+}^{3})\to L_{\textup{loc}}^{2}(\R_{+}^{3})
\end{equation*}
extends as a meromorphic family of operators to $z\in\C$. 
\end{proposition}
\begin{definition}\label{defres}
We call {\em resonances} the poles of the meromorphic continuation of $H$, or explicitly, the solutions $z\in \mathbb C$ of the equation
\begin{equation}
\label{polesequation}
\Gamma_{\alpha, y}(z)=0.
\end{equation}
\end{definition}
\begin{remark} The previous definition agrees with the treatment based on \cite{Albeverio:SolvableModelsQM} in the case of one or more point interactions. Due to the structure of the resolvent operator, it is also consistent with the treatment of Sj\"{o}strand and Zworski in \cite{SZ:ComplexScalingDistributionScatteringPoles} and \cite{Sjostrand:LectureResonances} (see also the treatise \cite{DZ:MathematicalTheoryScatteringResonances} by Dyatlov and Zworski). Notice that from the point of view of spectral theory the relevant object is $z^2$ and not directly $z$. In particular, a solution $z$ of \cref{polesequation} such that $\Im{z}\ge 0$ and the corresponding solution $\psi$ of $H\psi=z^{2}\psi$ is in $L^{2}(\R_{+}^{3})$ gives rise to an eigenvalue $z^2$ of $H$. Due to self-adjointness, such $z$ can only belong to the real axis or to the positive imaginary axis. Moreover, still due to self-adjointness, for $\Im z>0$ the only possible solutions to \cref{polesequation} correspond to eigenvalues.
A solution $z$ of \cref{polesequation} such that $\Im{z}\le 0$ and not corresponding to an eigenvalue, gives rise to a genuine continuation resonance of $H$. We prefer however for the sake of simplicity to retain the spectral parameter $z$ as the main object. According to \cref{defres} there is no distinction between eigenvalues and continuation resonances. In the following, however, we will always make clear when a solution of \cref{polesequation} gives rise to an eigenvalue or to a continuation resonance.
\end{remark}
\begin{remark}
We recall that the Dirichlet and Neumann Laplacian in the half-space have no resonances at all. 
On the other hand the point interactions $\Delta_{\alpha,y}$ on the whole $\mathbb R^3$ have, whatever be $y$, a single resonance, existing for every $\alpha\in \mathbb R$ and given in that case by $z=-4\pi i\alpha$. For $\alpha<0$ the resonance corresponds actually to the  eigenvalue $-(4\pi\alpha)^2$, while for $\alpha\geq 0$ is a genuine continuation resonance.
\end{remark}

\begin{definition}\label{multiplicitydef}
The multiplicity of a resonance $z$ of the operator $H$ is the algebraic multiplicity of the corresponding solution of \cref{polesequation}.
\end{definition}

\begin{remark} 
In particular, the resonances (eigenvalues or continuation resonances) are always simple for the operators here treated apart for the Dirichlet case when $\alpha=-1/(8\pi y_{3})$, where the multiplicity is two. For a resonance $z$ to have multiplicity greater than one it must hold that
\begin{equation*}
\begin{cases}
\Gamma_{\alpha, y}(z)=0\\
\frac{\partial\Gamma_{\alpha, y}}{\partial z}(z)=0
\end{cases}
\end{equation*}
or explicitly
\begin{equation*}
\begin{cases}
\alpha-\frac{iz}{4\pi}\pm\frac{e^{2iy_{3}z}}{8\pi y_{3}}=0\\
\pm e^{2iy_{3}z}=1
\end{cases}
\end{equation*}
where the plus sign denotes the Dirichlet case and the minus sign the Neumann one. 

The second equation of the Dirichlet system has solutions $z_{\textup{D}}=k\pi/y_{3}$, $k\in\Z$. Substituting it in the first equation leads to the condition
\begin{equation*}
\alpha-\frac{ik}{4y_{3}}+\frac{1}{8\pi y_{3}}=0,
\end{equation*}
which is verified if and only if $k=0$ (hence $z=0$) and $\alpha=-1/(8\pi y_{3})$.

The second equation of the Neumann system has instead solutions $z_{\textup{N}}=\pi/(2y_{3})+k\pi/y_{3}$, $k\in\Z$. Substitution easily shows that for thew system to be solvable it needs to be $k=-1/2$ and hence there cannot be resonances with multiplicity grater than one in the Neumann case. Finally, noticing that $\frac{\partial^{2}\Gamma_{\alpha, y}}{\partial z^{2}}(z)\ne 0$ for all $z\in\C$ we conclude that the multiplicity of a resonance is at most two.
\end{remark}
The spectrum of point interactions $-\Delta_{\alpha, y}^{\textup{D}}$ and $-\Delta_{\alpha, y}^{\textup{N}}$ is given in the following proposition. Other properties about the spectrum of point interaction in more general unbounded sub-domains of $\mathbb R^2$ and $\mathbb R^3$ will appear in the forthcoming article \cite{NR-S24} by these authors. 
\begin{proposition}
\label{thspectrumhalfspacedirichlet}{}
The spectra of the operators $-\Delta_{\alpha, y}^{\textup{D}}$ and $-\Delta_{\alpha, y}^{\textup{N}}$ are characterized as follows. 
\begin{itemize} 
\item[a) ]$\sigma_c(-\Delta_{\alpha, y}^{\textup{D}})=\sigma_{ac}(-\Delta_{\alpha, y}^{\textup{D}})=[0,+\infty)$. Moreover,  $\sigma_{p}(-\Delta_{\alpha, y}^{\textup{D}})$ is empty or it consists of a single non positive simple eigenvalue existing if and only if $\alpha<-\frac{1}{8\pi y_{3}}$.
\label{thspectrumhalfspaceneumann}{}
\item[b)] $\sigma_c(-\Delta_{\alpha, y}^{\textup{N}})=\sigma_{ac}(-\Delta_{\alpha, y}^{\textup{N}})=[0,+\infty)$. Moreover,  $\sigma_{p}(-\Delta_{\alpha, y}^{\textup{N}})$ is empty or it consists of a single simple negative eigenvalue existing if and only if $\alpha<\frac{1}{8\pi y_{3}}$.
\end{itemize}
\end{proposition}
\begin{proof}
The general theory of self-adjoint extensions of positive symmetric operators with equal and finite deficiency indices assures that all the members of the family of self-adjoint extension share the same continuous spectrum, and that the number of negative eigenvalues of member of the family cannot exceed the deficiency index (counting multiplicity), which here is one (see for example \cite{AG63}, Section 85). The absence of singular continuous spectrum is a consequence of the explicit form of the resolvent given in formulae \cref{dirichletoperator} and \cref{neumannoperator} and of the direct application of the standard criterion in \cite{RS4}, Theorem XIII.19. The quantitative threshold values of the existence of eigenvalues are obtained by inspection of formulae \eqref{complexsystemhalfspace} with $a=0$ and $b\geq0$ as given in the proof of the subsequent Proposition \ref{DirSpec}, to which we refer for a similar analysis.
\end{proof}

The properties of resonances of the point interactions $-\Delta_{\alpha, y}^{\textup{D}}$ and $-\Delta_{\alpha, y}^{\textup{N}}$ are completely and explicitly described by the following propositions.
\begin{proposition}\label{realresonances}
Operators $-\Delta_{\alpha, y}^{\textup{D}}$ and $-\Delta_{\alpha, y}^{\textup{N}}$ have no continuation resonances on the real axis apart possibly $z=0$. Moreover, $z=0$ is a resonance if and only if $\alpha=-1/(8\pi y_{3})$ in the Dirichlet case and if and only if $\alpha= 1/(8\pi y_{3})$ in the Neumann case.
\end{proposition}
\begin{proof}
Let be $z\in\C$ such that $\Gamma_{\alpha, y}(z)=0$. Depending on the boundary condition, according to the form of $\Gamma_{\alpha, y}$ given in \cref{dirichletoperator} and \cref{neumannoperator} the conditions takes the form ($+$ sign Dirichlet, $-$ sign Neumann)
\begin{equation*}
\alpha-\frac{iz}{4\pi}\pm\frac{e^{2iy_{3}z}}{8\pi y_{3}}=0.
\end{equation*}
To investigate the existence of real resonances we substitute $z=a\in\R$ in the above equation. Separating the real and imaginary part leads to the following system
\begin{equation*}
\begin{cases}
\cos{(2y_{3}a)}=\mp 8\pi\alpha y_{3}\\
\sin{(2y_{3}a)}=\pm 2ay_{3}
\end{cases}.
\end{equation*}
Since $y_{3}>0$ the second equation admits only $a=0$ as possible solution. In this case, because of the first equation, it must be $\alpha=\mp 1/(8\pi y_{3})$.
\end{proof}

\begin{remark}
Distributional solutions (where distributions are meant in $\mathbb R^3_+$) of the equation $(H-z^{2})\psi_z=0$ where again $H$ denotes either $-\Delta_{\alpha, y}^{\textup{D}}$ or $-\Delta_{\alpha, y}^{\textup{N}}$, are called resonance functions corresponding  to the resonance $z$. Resonance functions belong to the space $H^2_{\text{loc}}(\mathbb R^3_+\setminus\{y\})$, satisfy the boundary conditions on the boundary plane $x_3=0$ and at the singularity $y$ (see remark 2.3), and for $\Im z<0$ they are exponentially growing at infinity. For $z=0$ it can be shown that resonance functions belong to certain weighted Sobolev spaces depending on the dimension and in particular they decay at infinity in dimension three (see \cite{Scandone:21} for the case of several point interactions, which in this respect is completely analogous to the present case, and the seminal paper \cite{JK:JK} for the case of Schr\"odinger operators). However, will not need these informations in the present analysis.
\end{remark}

\begin{remark}
If $z=0$ is a solution of \cref{polesequation} two possibilities arise. The first is that the resonance function $\psi_0$ is actually an eigenvector, so belonging to the operator domain and corresponding to a zero resonance which is also an eigenvalue. The second is that $\psi_0\notin L^2$, it is a genuine resonance function and in this case $\psi_0$ behaves at infinity as the fundamental solution of the Laplacian: $\psi_0\sim \frac{c}{|x-y|}$. These facts are explicit in subsequent formula \eqref{eigenvectorD} for the Dirichlet case where a threshold eigenvalue appears, and in the formula \eqref{thresholdN} for the Neumann case where a genuine threshold resonance appears.
\end{remark}

\begin{proposition}
\label{DirSpec}
Consider the operator $-\Delta_{\alpha, y}^{\textup{D}}$\ . Then the following properties hold:
\begin{itemize}
\item[i)] For all $\alpha\in\R$ and $y\in\R_{+}^{3}$, in every region of the form $\frac{k\pi}{y_{3}}<\abs{\Re{z}}<\frac{(2k+1)\pi}{2y_{3}}$, with $k=1, 2, \dots$, there are exactly two resonances $z$ for $-\Delta_{\alpha, y}^{\textup{D}}$ with opposite real part.\\ Moreover
\begin{equation}
\label{bintermsofahalfspace2}
\Im{z}=\frac{1}{2y_{3}}\ln{\Biggl(\frac{\sin{(2y_{3}\Re{z})}}{2y_{3}\Re{z}}\Biggr)}.
\end{equation}
\item[ii)] The value $z=0$ is a resonance if and only if $\alpha=-\frac{1}{8\pi y_{3}}$ and in that case it is an eigenvalue. The corresponding eigenvector is 
\begin{equation}\label{eigenvectorD}
\psi_0(x)=\frac{1}{4\pi\abs{x-{y}}}-\frac{1}{4\pi\abs{x-\overline{y}}}
\end{equation}
\item[iii)] There is a resonance $z$ on the negative imaginary semi-axis if and only if $\alpha<-\frac{1}{8\pi y_{3}}$.\\ When it exists this $z$ is unique.
\item[iv)] The region $\Re{z}\in\Bigl(-\frac{\pi}{2y_{3}}, \frac{\pi}{2y_{3}}\Bigr)$ contains resonances if and only if $\alpha>-\frac{1}{8\pi y_{3}}$. In such a case there are two resonances with opposite real part.
\item[v)] If $\alpha=\frac{1}{8\pi y_{3}}\ln{\Bigl(\frac{\pi}{2}+k\pi\Bigr)}$ for some non-negative even $k$, then there is a resonance in $z=\frac{\pi}{4y_{3}}+\frac{k\pi}{2y_{3}}-\frac{i}{2y_{3}}\ln{\Bigl(\frac{\pi}{2}+k\pi\Bigr)}$.
\item[vi)] If $\alpha=\frac{1}{8\pi y_{3}}\ln{\Bigl(-\frac{\pi}{2}-k\pi\Bigr)}$ for some negative odd $k$, then there is a resonance in $z=\frac{\pi}{4y_{3}}+\frac{k\pi}{2y_{3}}-\frac{i}{2y_{3}}\ln{\Bigl(-\frac{\pi}{2}-k\pi\Bigr)}$.
\end{itemize}
\end{proposition}

The point interaction with Neumann boundary conditions is treated in the following result.
\begin{proposition}
\label{NeuSpec}
Consider the operator $-\Delta_{\alpha, y}^{\textup{N}}$. Then the following properties hold:
\begin{itemize}
\item[i)] For all $\alpha\in\R$ and $y\in\R_{+}^{3}$, in every region of the form $\frac{(2k+1)\pi}{2y_{3}}<\abs{\Re{z}}<\frac{(k+1)\pi}{y_{3}}$ with $k=0, 1, \dots$, there are exactly two resonances $z$ for $-\Delta_{\alpha, y}^{\textup{N}}$ with opposite real part. Moreover
\begin{equation}
\label{bintermsofahalfspaceNeumann}
\Im{z}=\frac{1}{2y_{3}}\ln{\Biggl(-\frac{\sin{(2y_{3}\Re{z})}}{2y_{3}\Re{z}}\Biggr)}.
\end{equation}
\item[ii)] The value $z=0$ is a resonance if and only if  $\alpha=\frac{1}{8\pi y_{3}}$ and in that case it is not an eigenvalue. The corresponding resonance function is 
\begin{equation}\label{thresholdN}
\psi_0(x)=\frac{1}{4\pi\abs{x-{y}}}+\frac{1}{4\pi\abs{x-\overline{y}}}
\end{equation}

\item[iii)] There is a resonance $z$ on the negative imaginary semi-axis if and only if $\alpha>\frac{1}{8\pi y_{3}}$.\\ When it exists this $z$ is unique.
\item[iv)] If $\alpha=\frac{1}{8\pi y_{3}}\ln{\Bigl(\frac{\pi}{2}+k\pi\Bigr)}$ for some positive odd $k$, then there is a resonance in $z=\frac{\pi}{4y_{3}}+\frac{k\pi}{2y_{3}}-\frac{i}{2y_{3}}\ln{\Bigl(\frac{\pi}{2}+k\pi\Bigr)}$.
\item[v)] If $\alpha=\frac{1}{8\pi y_{3}}\ln{\Bigl(-\frac{\pi}{2}-k\pi\Bigr)}$ for some negative even $k$, then there is a resonance in $z=\frac{\pi}{4y_{3}}+\frac{k\pi}{2y_{3}}-\frac{i}{2y_{3}}\ln{\Bigl(-\frac{\pi}{2}-k\pi\Bigr)}$.
\end{itemize}
\end{proposition}

\begin{remark}The resonances on the negative imaginary axis, existing under the conditions stated in Proposition \ref{DirSpec}(iii) and \ref{NeuSpec}(iii) are usually called, especially in the physical literature, \emph{antibound states}.
\end{remark}
In the following we prove \cref{DirSpec}. The proof of \cref{NeuSpec} is not reported because, barring some minor details, it follow the same path of the Dirichlet case.\\ A plot of Dirichlet resonances with $\Re z>0$ is given in Figure 1.\\
We premit the following lemma, needed in the proof of part i) of the Proposition \ref{DirSpec}.
\begin{lemma}
\label{ghlemma}
Let us define
$$g(a)=\frac{\sin{(2y_{3}a)}}{2y_{3}a},\ \ \ \ h(a)=e^{-8\pi\alpha y_{3}-\frac{2y_{3}a}{\tan{(2y_{3}a)}}}\ .$$ \ 
Then 
\begin{itemize}
\item[i)] in each interval of the form $\Bigl(\frac{k\pi}{y_{3}}, \frac{(2k+1)\pi}{2y_{3}}\Bigr)$ with $k=1, 2, \dots$, the function $g$ is positive and has a unique maximum $M_{k}\in\Biggl(\frac{k\pi}{y_{3}}, \frac{\frac{\pi}{2}+2k\pi}{2y_{3}}\Biggr)$;
\item[ii)] $g$ is concave in every interval of the form $\Biggl(\frac{k\pi}{y_{3}}, \frac{\frac{\pi}{2}+2k\pi}{2y_{3}}\Biggr)$, $k=1, 2, \dots$.
\item[iii)] the function $h$ is monotone increasing in every interval of the form $\Bigl(\frac{k\pi}{y_{3}}, \frac{(2k+1)\pi}{2y_{3}}\Bigr)$ with $k=0, 1, 2, \dots$. When $k\ge 1$ it is also convex in the same intervals.
\end{itemize}
\end{lemma}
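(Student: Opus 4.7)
The plan is to pass to the variable $u:=2y_3 a$, which turns the intervals $(k\pi/y_3,(2k+1)\pi/(2y_3))$ into $(2k\pi,(2k+1)\pi)$, where $\sin u>0$, and rewrites $g(a)=\sin u/u$ and $h(a)=e^{-8\pi\alpha y_3-u\cot u}$. All three parts then become elementary calculus statements in the $u$-variable.

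For (i), positivity of $g$ is immediate from $\sin u>0$. Critical points satisfy $\tan u=u$; on $(2k\pi,2k\pi+\pi/2)$ the function $\tan u-u$ rises monotonically from $-2k\pi$ to $+\infty$, producing a single root, while on $(2k\pi+\pi/2,(2k+1)\pi)$ one has $\cos u\le 0$ and $\sin u\ge 0$, so $u\cos u-\sin u<0$ and $g'<0$ throughout. Hence the unique maximum $M_k$ lies in the claimed subinterval. For (ii), a direct differentiation shows that $g''(a)$ is a positive multiple of $-u^2\sin u-2u\cos u+2\sin u$, so concavity is equivalent, after division by $\sin u>0$, to the inequality $u^2+2u\cot u-2>0$. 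On $(2k\pi,2k\pi+\pi/2)$ we have $\cot u\ge 0$, so this reduces to $u^2>2$, which holds for $k\ge 1$ since $u\ge 2\pi$.

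For (iii), differentiating $h$ gives $h'(a)=2y_3\,h(a)\bigl(u/\sin^2 u-\cot u\bigr)$, so monotonicity is equivalent to $2u>\sin 2u$, valid for every $u>0$. For convexity when $k\ge 1$, I would write $h''(a)=(2y_3)^2 h(a)\bigl(F'(u)^2+F''(u)\bigr)$ with $F(u)=-u\cot u$, collect the bracket over the common denominator $\sin^4 u$, and simplify the numerator to $u^2-2u\sin 2u+\sin^2 u\,(3-\sin^2 u)$. Using $|2u\sin 2u|\le 2u$ and $u\ge 2\pi$, the first two terms alone already contribute $u(u-2)>0$, while the last term is nonnegative; hence $h''>0$.

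The main obstacle is the convexity in (iii). Neither $F''(u)$ alone nor $\log h$ is positive throughout the interval (indeed $F''(u)<0$ near $u=2k\pi$), so the argument cannot rely on log-convexity and must genuinely combine $F''$ with the square $(F')^2$. The cancellation becomes visible only after reducing over $\sin^4 u$, and the final positivity depends on the lower bound $u\ge 2\pi$ supplied by the assumption $k\ge 1$; this is also why the convexity statement is restricted to $k\ge 1$. Everything else is routine once the substitution $u=2y_3 a$ is in place.
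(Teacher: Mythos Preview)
Your proof is correct and essentially follows the same route as the paper. The paper omits parts (i) and (ii) as ``entirely elementary'' and treats only (iii); for (iii) it uses the same substitution $t=2y_3a$ (your $u$), obtains the same numerator $t^2+\sin^2 t\cos^2 t+2\sin^2 t-4t\sin t\cos t$ (which equals your $u^2-2u\sin 2u+\sin^2 u\,(3-\sin^2 u)$), and bounds it below by $t^2-2t>0$ for $t>2$ exactly as you do. Your presentation via $h''=(2y_3)^2 h\,((F')^2+F'')$ is a bit cleaner and makes explicit why log-convexity alone would fail, but the underlying computation and the decisive bound are identical.
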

\begin{proof}
We only give the proof for $iii)$, the proof of $i)$ and $ii)$ being entirely elementary.
The derivative of $h$ is
\begin{equation*}
h'(a)=-2y_{3}e^{-8\pi\alpha y_{3}-\frac{2y_{3}a}{\tan{(2y_{3})}}}\frac{\tan{(2y_{3}a)}-2y_{3}a\Bigl(1+\tan^{2}{(2y_{3}a)}\Bigr)}{\tan^{2}{(2y_{3}a)}}.
\end{equation*}
We claim that $h'(a)\ge 0$ for all $ a\ge 0$ (where defined). This is a consequence of the elementary inequality
\begin{equation*}
p(t)=\frac{\tan{t}}{1+\tan^{2}{t}}\le t
\end{equation*}
being satisfied for all $t\ge 0$, with the equality only valid in $t=0$. The continuity of $h$ and its derivative in every interval of the form $\Bigl(\frac{k\pi}{y_{3}}, \frac{(2k+1)\pi}{2y_{3}}\Bigr)$ with $k=0, 1, 2, \dots$, then implies that $h$ is monotone increasing there. 
The second derivative $h''$ is given by
\begin{equation*}
h''(a)=4y_{3}^{2}e^{-8\pi\alpha y_{3}-\frac{2y_{3}a}{\tan{(2y_{3})}}}\Bigl(\cot^{2}{(2y_{3}a)}+2\csc^{2}{(2y_{3}a)}
+4y_{3}^{2}a^{2}\csc^{4}{(2y_{3}a)}-8y_{3}a\cot{(2y_{3}a)}\csc^{2}{(2y_{3}a)}\Bigr).
\end{equation*}
It holds that $h''>0$ when
$
t^{2}+\sin^{2}{t}\cos^{2}{t}+2\sin^{2}{t}-4t\sin{t}\cos{t}>0,
$
with $t=2y_{3}a$. This expression is greater or equal than $t^{2}-2t$, which is positive for $t>2$. So $h''(a)>0$ for all $a>1/y_{3}$ in its domain. Since $1/y_{3}<\pi/y_{3}\le k\pi/y_{3}$ for $k=1, 2, \dots$, then $h''>0$ in each interval $\Bigl(\frac{k\pi}{y_{3}}, \frac{(2k+1)\pi}{2y_{3}}\Bigr)$ with $k=1, 2, \dots$ and it is convex there.
\end{proof}

\begin{proof}[Proof of \cref{DirSpec}]
The definition of resonance and \cref{dirichletoperator} imply that the continuation resonances of $-\Delta_{\alpha, y}^{\textup{D}}$ are determined by the solutions of the equation $\Gamma_{\alpha, y}^{\textup{D}}(z)=0$ with $\Im{z}<0$. Hence, we set $z=a+ib$ in $\Gamma_{\alpha, y}^{\textup{D}}(z)=0$, with the prescription $b<0$. This way we get
\begin{equation*}
\alpha+\frac{b-ai}{4\pi}+e^{-2y_{3}b}\frac{\cos{(2y_{3}a)}+i\sin{(2y_{3}a)}}{8\pi y_{3}}=0.
\end{equation*}
Equating real and imaginary part of both sides, we obtain the system in $a$ and $b$
\begin{equation}
\label{complexsystemhalfspace}
\begin{cases}
-\alpha-\frac{b}{4\pi}=e^{-2y_{3}b}\frac{\cos{(2y_{3}a)}}{8\pi y_{3}}\\
\frac{a}{4\pi}=e^{-2y_{3}b}\frac{\sin{(2y_{3}a)}}{8\pi y_{3}}
\end{cases}.
\end{equation}

We start by looking for solutions on the negative imaginary semi-axis and we set $a=0$. We observe that in this case the second equation is satisfied for all $b$. The first equation of the system is then
\begin{equation*}
f(b)=\alpha+\frac{b}{4\pi}+\frac{e^{-2y_{3}b}}{8\pi y_{3}}=0,\quad b\leq0.
\end{equation*}
Since $\lim_{b\to -\infty}f(b)=+\infty$ and $f$ is a continuous monotonically decreasing function, a solution in $(-\infty, 0]$ exists only if $f(0)<0$. This happens when $\alpha\leq-(8\pi y_{3})^{-1}$, otherwise there are no resonances on the negative imaginary semi-axis. The possible solution is in fact unique because $f'(b)=(1-e^{-2y_{3}b})/(4\pi)<0$ for all $b<0$. This proves part iii) of the Proposition. In particular, one has $b=0$ if and only if  $\alpha=-(8\pi y_{3})^{-1}$. So, $z=0$ is a resonance when this condition is satisfied. In this case, it is easy to check that the resonance function 

\begin{equation}
\psi_0(x)=\frac{1}{4\pi\abs{x-{y}}}-\frac{1}{4\pi\abs{x-\overline{y}}}
\end{equation}
belongs to $L^2(\mathbb R^3_+)$; so it is an eigenvector and $z=0$ is an eigenvalue. This proves part ii).\\
Another case of interest in the study of \cref{complexsystemhalfspace} is when $a=\frac{k\pi}{2y_{3}}$, for some $k=\pm 1, \pm 2, \dots$. In this case it is easy to see that the system does not admit any solution, because the right side of the second equation of the system is zero, while the left one isn't. So there are no resonances on the lines $\Re{z}=\frac{k\pi}{2y_{3}}$, $k=\pm 1, \pm 2, \dots$ outside the real axis.

Also, since we already considered the $a=0$ case, we can rewrite the second equation of the system \cref{complexsystemhalfspace} as
\begin{equation*}
\frac{\sin{(2y_{3}a)}}{2y_{3}a}=e^{2y_{3}b}, \quad a\ne 0.
\end{equation*}
Notice that, as expected, because of the self-adjointness of the operator, the solutions with $\Im{z}\ge 0$ can only be found on the real axis or on the positive imaginary semi-axis.

Since the left side is always smaller than 1, a solution is possible only if $b<0$. It is immediate to see that there are no solution when $\pi+2k\pi<2y_{3}\abs{a}<2\pi+2k\pi$, $k=0, 1, 2, \dots$, because in that case the left side would be negative.

We now consider the cases when  $a=\frac{\pi}{4y_{3}}+\frac{k\pi}{2y_{3}}$ for some $k\in\Z$. From formula \cref{complexsystemhalfspace} we get
\begin{equation*}
\begin{cases}
b=-4\pi\alpha\\
e^{8\pi\alpha y_{3}}=(-1)^{k}\Bigl(\frac{\pi}{2}+k\pi\Bigr)
\end{cases}.
\end{equation*}
This system is overdetermined and so can have a solution only for certain values of $\alpha$ and $y_{3}$. It is easy to see that if $k$ is positive odd or negative even, then no solution exists regardless of $\alpha$ and $y_{3}$ ($a$ would be in one of the intervals mentioned above for which no solution exists); instead, 
if $k$ is non-negative even and $\alpha=\frac{1}{8\pi y_{3}}\ln{\Bigl(\frac{\pi}{2}+k\pi\Bigr)}$, then there is a resonance in $z=\frac{\pi}{4y_{3}}+\frac{k\pi}{2y_{3}}-\frac{i}{2y_{3}}\ln{\Bigl(\frac{\pi}{2}+k\pi\Bigr)}$; while, if $k$ is negative odd and $\alpha=\frac{1}{8\pi y_{3}}\ln{\Bigl(-\frac{\pi}{2}-k\pi\Bigr)}$, then there is a resonance in $z=\frac{\pi}{4y_{3}}+\frac{k\pi}{2y_{3}}-\frac{i}{2y_{3}}\ln{\Bigl(-\frac{\pi}{2}-k\pi\Bigr)}$. This proves part v) and vi).\par
After treating the exceptional cases, we turn to part $i)$ and $ii)$. As previously, we locate resonances thanks to the equation $\Gamma_{\alpha, y}^{\textup{D}}(z)=0$. This leads to \cref{complexsystemhalfspace}. We divide side by side the second equation for the first one. We notice that all the values leading to a vanishing denominator have been considered previously and hence performing this ratio does not cause loss of solutions. So, we get
$\tan{(2y_{3}a)}=-\frac{a}{4\pi\alpha+b}$,
which, solved for $b$ gives
\begin{equation*}
b=-4\pi\alpha-\frac{a}{\tan{(2y_{3}a)}}.
\end{equation*}
Substituting $b$ in the second equation of \cref{complexsystemhalfspace}, we get the following equation in $a$
\begin{equation}
\label{gandheqhalfspace}
g(a)=\frac{\sin{(2y_{3}a)}}{2y_{3}a}=e^{-8\pi\alpha y_{3}-\frac{2y_{3}a}{\tan{(2y_{3}a)}}}=h(a).
\end{equation}

Now we claim that in each interval of the form $\Bigl(\frac{k\pi}{y_{3}}, \frac{(2k+1)\pi}{2y_{3}}\Bigr)$ with $k=1, 2, \dots$, \cref{gandheqhalfspace} has exactly one solution.

Let $a_{k}$ be the smallest solution in the $k-th$ interval. We distinguish the two cases $a_{k}\ge M_{k}$ and $a_{k}<M_{k}$, where $M_{k}$ is the maximum of $g$ in $\Bigl(\frac{k\pi}{y_{3}}, \frac{(2k+1)\pi}{2y_{3}}\Bigr)$. If $a_{k}\ge M_{k}$, the solution is unique because in $\Bigl[a_{k}, \frac{(2k+1)\pi}{2y_{3}}\Bigr)$, $h$ is known to be increasing and $g$ is decreasing thanks to \cref{ghlemma} . If $a_{k}<M_{k}$, given that $M_{k}\in\Biggl(\frac{k\pi}{y_{3}}, \frac{\frac{\pi}{2}+2k\pi}{2y_{3}}\Biggr)$, the function $r(a)=h(a)-g(a)$ is strictly convex in $\Biggl(\frac{k\pi}{y_{3}}, M_{k}\Biggr)$  by the just above mentioned lemma. But this, together with $\lim_{a\to\frac{k\pi^{+}}{y_{3}}}r(a)=0$ implies that at there is at most a zero for $r$ in $\Biggl(\frac{k\pi}{y_{3}}, M_{k}\Biggr)$. In fact, if there were two internal zeros $a_{k}<\tilde{a}_{k}$, there would be two numbers $b_{k}$ and $\tilde{b}_{k}$ such that $\frac{k\pi}{y_{3}}<b_{k}<a_{k}<\tilde{b}_{k}<\tilde{a}_{k}<M_{k}$ and $r'(b_{k})=r'(\tilde{b}_{k})=0$. This contradicts the function $r$ being convex ($r'$ is strictly increasing); then there is a solution at most. 
For each interval, this possible unique solution in fact exists. This is true because the conditions 
\begin{equation}
\label{ghconditions}
h\Bigl(\frac{k\pi^{+}}{y_{3}}\Bigr)=g\Bigl(\frac{k\pi}{y_{3}}\Bigr)=0,\quad h'\Bigl(\frac{k\pi^{+}}{y_{3}}\Bigr)=0<g'\Bigl(\frac{k\pi}{y_{3}}\Bigr)=\frac{y_{3}}{k\pi},\quad h\Bigl(\frac{\pi^{-}+2k\pi}{2y_{3}}\Bigr)=+\infty\quad\textup{and}\quad g\Bigl(\frac{\pi^{-}+2k\pi}{2y_{3}}\Bigr)=0
\end{equation}
ensure that $r$ changes sign throughout the interval.

It remains to consider the interval $\Bigl[0, \frac{\pi}{2y_{3}}\Bigr)$. Here $g$ is monotone decreasing and $h$ is monotone increasing, so, at most there is one solution. The solution actually exists only if $\lim_{a\to 0}r(a)<0$, which happens if and only if $\alpha>-\frac{1}{8\pi y_{3}}$. This proves part $ii)$.

For each solution of \cref{gandheqhalfspace}, we can recover the corresponding imaginary part through the relation $b=\frac{1}{2y_{3}}\ln{g(a)}$, which corresponds to formula \eqref{bintermsofahalfspace2}. This finally proves part $i)$ of the proposition. 
\end{proof}
\begin{remark}
It is well known (see \cite{JK:JK}) that for Schr\"odinger operators $-\Delta +V$, zero energy resonances and zero energy eigenvalues can be distinguished by means of the different low energy behavior of the resolvent. Namely, the resolvent $(-\Delta + V-z^2)^{-1}$ behaves as $\frac{c_{-2}}{z^2}$ at a zero energy eigenvalue and as $\frac{c_{-1}}{z}$ at a zero energy resonance, for certain   operators $c_{-1}, c_{-2}$ that are bounded in suitable operator topologies. The same is true in the present example, not covered by the standard theory.
Let us consider the Dirichlet case. 
At the critical value $\alpha_{\textup{c}}=-\frac{1}{8\pi y_{3}}$ one has
\begin{equation*}
\begin{split}
\biggl(\Gamma_{\alpha, y}^{\textup{D}}(z)\biggr)^{-1} & = \Biggl(-\frac{1}{8\pi y_{3}}-\frac{iz}{4\pi}+\frac{e^{2iy_{3}z}}{8\pi y_{3}}\Biggr)^{-1}
                                                                                                            = \Biggl(-\frac{1}{8\pi y_{3}}-\frac{iz}{4\pi}+\frac{1+2iy_{2}z-\frac{1}{2}4y_{3}^{2}z^{2}-\frac{1}{6}8iy_{3}^{3}z^{3}}{8\pi y_{3}}+O(z^{4})\Biggr)^{-1}\\
                                                                                                           & = \Biggl(-\frac{y_{3}}{4\pi}z^{2}-\frac{iy_{3}^{2}}{6\pi}z^{3}+O(z^{4})\Biggr)^{-1}= -\frac{4\pi}{y_{3}}z^{-2}\Biggl(1+\frac{2iy_{3}}{3}z+O(z^{2})\Biggr)^{-1}
                                                                                                           = -\frac{4\pi}{y_{3}}z^{-2}+\frac{8}{3}\pi iz^{-1}+O(1).
\end{split}
\end{equation*}

For the Neumann case,
the critical value is now $\alpha_{\textup{c}}=\frac{1}{8\pi y_{3}}$. So
\begin{equation*}
\begin{split}
\biggl(\Gamma_{\alpha, y}^{\textup{N}}(z)\biggr)^{-1} & = \Biggl(\frac{1}{8\pi y_{3}}-\frac{iz}{4\pi}-\frac{e^{2iy_{3}z}}{8\pi y_{3}}\Biggr)^{-1}
                                                                                                             = \Biggl(\frac{1}{8\pi y_{3}}-\frac{iz}{4\pi}-\frac{1+2iy_{3}z-\frac{1}{2}4y_{3}^{2}z^{2}-\frac{1}{6}8iy_{3}^{3}z^{3}}{8\pi y_{3}}+O(z^{4})\Biggr)^{-1}\\
                                                                                                            & = \Biggl(-\frac{iz}{2\pi}+\frac{y_{3}}{4\pi}z^{2}+\frac{iy_{3}^{2}}{6\pi}z^{3}+O(z^{4})\Biggr)^{-1}= \frac{2\pi i}{z}\Biggl(1+\frac{iy_{3}}{2}z-\frac{y_{3}^{2}}{3}z^{2}+O(z^{3})\Biggr)^{-1}
                                                                                                             = 2\pi i z^{-1}+\pi y_{3}+ O(1).
\end{split}
\end{equation*}
\end{remark}
\begin{remark}{[\emph{Bifurcation of eigenvalues and resonances at zero energy}]} Two different pictures show up in the Dirichlet and Neumann case as regards the bifurcation of eigenvalues and resonances at the origin of the complex plane. In the Neumann case, things are simple: a resonance (an anti-bound state more precisely) moves on the negative imaginary axis when $\alpha\in (\frac{1}{8\pi y_3},\infty)$, and becomes an eigenvalue after crossing the threshold $\alpha=\frac{1}{8\pi y_3}$. At threshold, a zero energy resonance appears. In the Dirichlet case, a couple bound state-resonance separate at $z=0$ moving respectively on the positive imaginary axis (eigenvalue) and on the negative imaginary axis (anti-bound state); the couple bifurcates from the branch of resonances colliding at zero, which is an eigenvalue, as described in Proposition \ref{DirSpec}, iv). In both cases, with a different phenomenology, it should be noted that the eigenvalue persists (see for this terminology the recent paper \cite{CDG25}) and it does not disappear.
A general understanding of the phenomenon of persistence/disappearing of eigenvalues at threshold seems still lacking. 
\end{remark}

\begin{figure}
\centering
\includegraphics[width=0.8\textwidth]{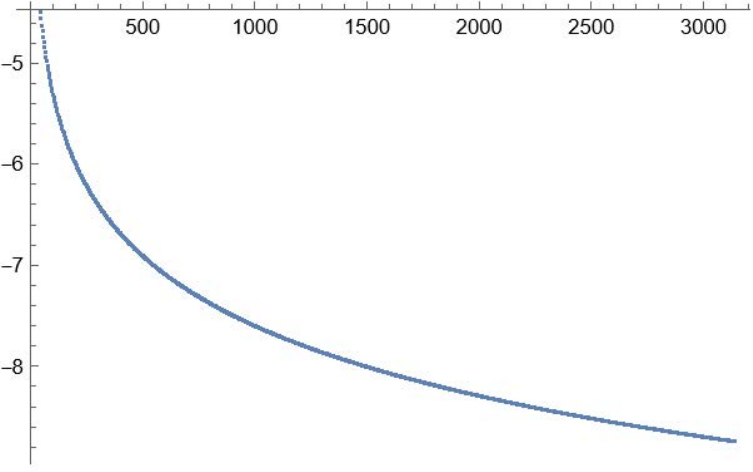}
\caption{Plot of the imaginary part versus real part of the first 100 values $z$ with $\Re{z}>\pi$ for which $z^{2}$ is a resonance for $\alpha=0$ and $y_{3}=1$ in the Dirichlet half-space case. }
\end{figure}

\section{Asymptotics of the Resonance Counting Function}
In the present section we study the asymptotic behaviour of the resonance counting function for the one center Laplacian in the half-space. Consider the set $\Sigma\Bigl(-\Delta_{\alpha, y}^{\textup{D}}\Bigr)$ of resonances of $-\Delta_{\alpha, y}^{\textup{D}}$. Let's define
\begin{equation*}
\mathcal{N}_{-\Delta_{\alpha, y}^{\textup{D}}}(R)=\#\Set{z\in\Sigma\Bigl(-\Delta_{\alpha, y}^{\textup{D}}\Bigr) |\abs{z}<R}
\end{equation*}
the counting function for the the elements of $\Sigma\Bigl(-\Delta_{\alpha, y}^{\textup{D}}\Bigr)$, counted with the appropriate multiplicity, contained in the disk of radius $R$ centered in zero. We now prove that (see \cite{LL:AsymptoticsResonancesInducedPointInteractions, Albeverio2020} for an analogous result for $N$-point interactions in $\R^{n}$)
\begin{proposition}
\label{propasymptotichalf-spaceDirichlet}
For all $\alpha\in\R$ and $y\in\mathbb{R}^{3}_+$ it holds that
\begin{equation}\label{CountingN}
\lim_{R\to +\infty}\frac{\mathcal{N}_{-\Delta_{\alpha, y}^{\textup{D}}}(R)}{R}=2 \frac{y_{3}}{\pi}.
\end{equation}
\end{proposition}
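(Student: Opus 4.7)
The plan is to use the explicit enumeration of resonances furnished by \cref{DirSpec}. By that proposition, after removing a finite exceptional set (the resonances on the imaginary axis from parts (ii)--(iii), the pair in the central strip $|\Re z|<\pi/(2y_3)$ from part (iv), and the isolated ones from (v)--(vi)), the remaining resonances form an infinite family $\{z_k^{\pm}\}_{k\ge 1}$ where $z_k^{\pm}=\pm a_k+ib_k$, with $a_k\in\bigl(k\pi/y_3,(2k+1)\pi/(2y_3)\bigr)$ and $b_k$ determined by \eqref{bintermsofahalfspace2}. The exceptional set is finite (with multiplicity at most $2$ for the zero resonance in the critical case $\alpha=-(8\pi y_3)^{-1}$), so it contributes $O(1)$ to $\mathcal{N}(R)$ and is irrelevant for the asymptotic \eqref{CountingN}.

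The technical core is to show $|z_k^{\pm}|=k\pi/y_3+o(k)$ as $k\to\infty$. The real part satisfies $a_k=k\pi/y_3+O(1)$ directly from the interval constraint. For the imaginary part, \eqref{bintermsofahalfspace2} gives
\[
|b_k|=\frac{1}{2y_3}\bigl(\ln(2y_3 a_k)-\ln\sin(2y_3 a_k)\bigr).
\]
The first term is $O(\log k)$. For the second I would revisit \eqref{gandheqhalfspace}: writing $a_k=(4k+1)\pi/(4y_3)-\delta_k$ with $\delta_k>0$, and using $h\bigl((4k+1)\pi/(4y_3)\bigr)=e^{-8\pi\alpha y_3}$ (a constant) while $g\bigl((4k+1)\pi/(4y_3)\bigr)=2/((4k+1)\pi)\to 0$, the identity $g(a_k)=h(a_k)$ together with the monotonicity of $h$ from \cref{ghlemma}(iii) forces $\delta_k\to 0$, and a short expansion of $g$ and $h$ about the midpoint yields $\delta_k=O(\log k/k)$. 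Consequently $\sin(2y_3 a_k)=\cos(2y_3\delta_k)\to 1$, so $-\ln\sin(2y_3 a_k)=O(1)$ and $|b_k|=O(\log k)$.

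Combining these estimates gives $|z_k^{\pm}|^2=a_k^2+b_k^2=a_k^2\bigl(1+O((\log k/k)^2)\bigr)$, i.e. $|z_k^{\pm}|=a_k+O((\log k)^2/k)=k\pi/y_3+O(1)$. Therefore the number of indices $k$ such that $|z_k^{\pm}|<R$ is $Ry_3/\pi+O(1)$. Doubling this to account for the two choices of sign in the real part and adding the $O(1)$ exceptional contribution, one obtains $\mathcal{N}_{-\Delta_{\alpha,y}^{\textup{D}}}(R)=2y_3 R/\pi+O(\log R)$, which yields the claimed asymptotic.

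The main obstacle is the concentration claim $\delta_k=O(\log k/k)$: it is not literally stated in \cref{DirSpec} and requires a secondary asymptotic analysis of the transcendental equation \eqref{gandheqhalfspace}. However, this is essentially routine once the Taylor expansions of $g$ and $h$ near the midpoint are combined with the monotonicity/convexity of $h$ already established in \cref{ghlemma}. Any weaker bound $|b_k|=o(k)$ would in fact suffice for the leading asymptotic $\mathcal{N}(R)\sim 2y_3R/\pi$, so the precise rate is not a bottleneck.
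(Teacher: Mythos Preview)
Your proposal is correct and follows the same strategy as the paper: reduce to the explicit enumeration from \cref{DirSpec}, show $|z_k^{\pm}|=k\pi/y_3+o(k)$ by establishing that $a_k$ concentrates at the midpoint $(4k+1)\pi/(4y_3)$ so that $|b_k|=O(\log k)$, and count.

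One step is stated too loosely. The claim ``monotonicity of $h$ forces $\delta_k\to 0$'' does not follow from the ingredients you list: knowing $h(m_k)=e^{-8\pi\alpha y_3}$, $g(m_k)\to 0$, and $h$ increasing only tells you the root lies in the first half of the interval, not that it approaches the midpoint. What is actually needed is the quantitative fact that for any fixed $\eta>0$ one has $h(m_k-\eta)\le e^{-c(\eta)k}$ (since $(2k\pi+\theta)/\tan\theta$ grows linearly in $k$ when $\theta=\pi/2-2y_3\eta$ is fixed) while $g(m_k-\eta)\sim\cos(2y_3\eta)/(2k\pi)$ decays only like $1/k$; hence eventually $g>h$ at $m_k-\eta$ and the unique root lies to its right. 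This is exactly the paper's comparison argument via the translated sequence $w_l=w+l\pi/y_3$. Your Taylor expansion about $m_k$ then correctly yields the rate $\delta_k=O(\log k/k)$, but the expansion presupposes $\delta_k\to 0$ and so cannot replace this first step. As you note, any bound $|b_k|=o(k)$ suffices, so this is an easily filled gap rather than a structural problem.
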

\begin{remark}
We actually count solutions of $\Gamma_{\alpha, y}^{\textup{D}}(z)=0$. So the result is an consequence of well known properties of distribution of zeroes of exponential polynomials (see for example Section 2 in  \cite{LL:AsymptoticsResonancesInducedPointInteractions} and references therein), but here we give an independent direct proof. 
\end{remark}
\begin{proof}
Let $z_{k}$ be the resonance of $-\Delta_{\alpha, y}^{\textup{D}}$ such that $\Re z_{k}:=a_k\in\Bigl(\frac{k\pi}{y_{3}}, \frac{(2k+1)\pi}{2y_{3}}\Bigr)$ and $\{a_{k}\}_{k=1, 2, \dots}$  the associated sequence. We start by proving that
\begin{equation}
\label{asymptoticroots}
\lim_{k\to +\infty}\frac{a_{k}}{2k+1/2}=\frac{\pi}{2y_{3}}.
\end{equation}
We observe that, even if it can occur that $a_{1}\in\Bigl(\frac{5\pi}{4y_{3}}, \frac{3\pi}{2y_{3}}\Bigr)$, it still happens that, definitely in $k$, $a_{k}\in\Bigl(\frac{k\pi}{y_{3}}, \frac{(2k+1/2)\pi}{2y_{3}}\Bigr)$. This is true because, recalling \cref{gandheqhalfspace}, we have that
\begin{equation*}
g\biggl(\frac{(2k+1/2)\pi}{2y_{3}}\biggr)=\frac{1}{(2k+1/2)\pi}\quad\textup{and}\quad \lim_{a\to\frac{(2k+1/2)\pi}{2y_{3}}}h(a)=e^{-8\pi\alpha y_{3}}
\end{equation*}
and so there exists a $\overline{k}$, such that for all $ k\ge\overline{k}$, $g<h$ when both evaluated at the middle point of the $k$-th interval. This fact together with \cref{ghconditions} implies that the unique solution is in the first half of the interval.\\ Let $w$ be in $\Bigl(\frac{\overline{k}\pi}{y_{3}}, \frac{(2\overline{k}+1/2)\pi}{2y_{3}}\Bigr)$ and $\{w_{l}\}_{l=0, 1,\dots}=\Bigl\{w+\frac{2l\pi}{2y_{3}}\Bigr\}_{l=0, 1, \dots}$. We have that
\begin{equation*}
g(w_{l})=\frac{\sin{(2y_{3}w)}}{2y_{3}w+2l\pi}\quad\textup{and}\quad h(w_{l})=e^{-8\pi\alpha y_{3}-\frac{2y_{3}w+2l\pi}{\tan{(2y_{3}w)}}}.
\end{equation*}
Both sequences are vanishing, but $\{h(w_{l})\}$ does it faster than $\{g(w_{l})\}$, which implies that for sufficiently large $l$, $g(w_{l})>h(w_{l})$ and so the solution in the $(\overline{k}+l)$-th interval must be in $\Bigl(w_{l}, \frac{(2(\overline{k}+l)+1/2)\pi}{2y_{3}}\Bigr)$. The arbitrarity of $w$ in $\Bigl(\frac{\overline{k}\pi}{y_{3}}, \frac{(2\overline{k}+1/2)\pi}{2y_{3}}\Bigr)$ leads to \cref{asymptoticroots}.

So we have that $a_{k}=\frac{(2k+1/2)\pi}{2y_{3}}+o(1)$ for $k\to+\infty$. (the symbol $o(1)$ here denotes a generic term vanishing as $k\to +\infty$. We use this property to estimate $\mathcal{N}_{-\Delta_{\alpha, y}^{\textup{D}}}(R)$ as $R\to+\infty$ excluding at most finitely many resonances. This is equivalent to find the highest $k$ for which
\begin{equation*}
a_{k}^{2}+\frac{1}{4y_{3}^{2}}\ln^{2}{\Biggl(\frac{\sin{(2y_{3}a_{k})}}{2y_{3}a_{k}}\Biggr)}<R^{2}.
\end{equation*}
Substituting the asymptotic expression for $k\to+\infty$, we get
\begin{equation*}
\biggl(2k+\frac{1}{2}\biggr)^{2}\pi^{2}+\ln^{2}{\Biggl(\biggl(2k+\frac{1}{2}\biggr)\pi\Biggr)}+o(1)<4y_{3}^{2}R^{2}.
\end{equation*}
If we set $s=\Bigl(2k+\frac{1}{2}\Bigr)\pi$, it becomes
\begin{equation*}
s^{2}+\ln^{2}{(s)}+o(1)<4y_{3}^{2}R^{2}.
\end{equation*}
We are concerned about large value of $s$, so, since the left side happens to be increasing for $s$ sufficiently large, the largest $s$ satisfying the inequality will be the one satisfying
\begin{equation*}
s^{2}+\ln^{2}{(s)}+o(1)=4y_{3}^{2}R^{2}.
\end{equation*}
We look for the asymptotic for $s$ solving the equation, when $R\to+\infty$. We write
\begin{equation}
\label{recursives}
s=\sqrt{4y_{3}^{2}R^{2}-\ln^{2}{(s)}}+o(1).
\end{equation}
We see that $s<2y_{3}R$ and $s=O(R)$. This implies that $\ln^{2}{(s)}=O\Bigl(\ln^{2}(R)\Bigr)$. Now we substitute this in \cref{recursives}, obtaining
\begin{equation*}
s=2y_{3}R\sqrt{1+O\Biggl(\frac{\ln^{2}{(R)}}{R^{2}}}\Biggr)+o(1)=2y_{3}R+o(1).
\end{equation*}
And so for $k$ holds ($\lfloor \cdot \rfloor$ and $\{\cdot\}$ indicate respectively the integer part and the fractional part)
$k=\Biggl\lfloor\frac{y_{3}}{\pi}R-\frac{1}{4}\Biggr\rfloor.$\\
But in this way we only counted the resonances with positive real part, so to count them all we just double the number:
\begin{equation}\label{CountingNfloor}
\mathcal{N}_{-\Delta_{\alpha, y}^{\textup{D}}}(R)=2\Biggl\lfloor\frac{y_{3}}{\pi}R-\frac{1}{4}\Biggr\rfloor+o(1)
\end{equation}
which implies equation \eqref{CountingN} in the statement
\end{proof}
\begin{remark} \label{asympdim}
From \eqref{CountingN} or from \eqref{CountingNfloor} it follows
\begin{equation*}
\begin{split}
\lim_{R\to+\infty}\frac{\ln{\Bigl(\mathcal{N}_{-\Delta_{\alpha, y}^{\textup{D}}}(R)\Bigr)}}{\ln{R}}=1
\end{split}
\end{equation*}
We give a comparison with analogous results for standard Schr\"odinger operators $-\Delta+V$. Firstly, for real potentials in $C^{\infty}_{\text{comp}}(\mathbb R^3)$ (see \cite{Melrosebook95}) or $L^{\infty}_{\text{comp}}(\mathbb R^3)\cap H^{\frac{n-3}{2}}$ (see \cite{SmithZworski16}) it is known that infinitely many resonances exist. Concerning asymptotic distribution of resonances, in \cite{CH2008}  (see also \cite{CH2005}) it is shown that for Schr\"odinger operators $-\Delta+V$ with generic $C^{\infty}_{\text{comp}}(\mathbb R^n)$ one has $$\limsup_{R\to+\infty}\frac{\ln{\Bigl(\mathcal{N}_{-\Delta +V}(R)\Bigr)}}{\ln{R}}=n$$ from which it follows the upper bound  $\mathcal{N}_{-\Delta +V} \leq C R^n$, where the optimal bound $\mathcal{N}_{-\Delta +V} = C R^n$ is attained for radial potentials (see \cite{Zworski89a}), but not in general. The asymptotic behavior was already noticed in \cite{LL:AsymptoticsResonancesInducedPointInteractions} for finitely many point interactions in $\R^3$, and compared with the analogous behavior on quantum graphs. See also the earlier work \cite{Zerzeri91}. While the asymptotic distribution given by \eqref{CountingN} seems somewhat anomalous in a three dimensional model, one has to take into account that point interactions can be recovered as limits of standard Schr\"odinger operators with rescaled smooth potentials, that however have to satisfy some very special conditions (see for example \cite{Albeverio:SolvableModelsQM}).
\end{remark}

\begin{remark}
The same result holds also for $\mathcal{N}_{-\Delta_{\alpha, y}^{\textup{N}}}(R)$. The proof for the Neumann case follows closely the one given above and it will be omitted.
\end{remark}

\section{Semiclassical Asymptotics of Resonances for the Half-Space}
In general, as also the previous treatment shows, it is difficult to obtain explicit information on all resonances. In order to understand some properties of them, one can study their asymptotic behavior on certain regimes. An example of these regimes is the high energy limit. To fix ideas we limit ourself to the single point interaction in the half-space. This limit consist in studying the distribution of the poles of
\begin{equation*}
\Bigl(-\Delta_{\alpha, y}^{{\textup{D}}}-z^{2}\Bigr)^{-1}, \quad \textup{for}\,\abs{z}\to +\infty
\end{equation*}
and the analogous distribution in the Neumann case $\Delta_{\alpha, y}^{{\textup{N}}}$.\
The high energy limit is an example of \emph{semiclassical limit}. This is obtained by rescaling $z$ trough a reference constant $h>0$ (who mimics Planck's constant) and passing to the limit $h\to 0$. In this way, the energy $\Bigl(\frac{z}{h}\Bigr)^{2}$ diverges, implying that the energies of the considered phenomena are well higher than the reference energy $h$. Correspondingly, we consider the behavior of the resolvent
\begin{equation*}
\Biggl(-\Delta_{\alpha, y}^{\textup D}-\biggl(\frac{z}{h}\biggr)^{2}\Biggr)^{-1}=h^{2}\Bigl(-h^{2}\Delta_{\alpha, y}^{\textup D}-z^{2}\Bigr)^{-1},\quad \textup{for}\, h\to 0.
\end{equation*}
In applications, typically, the potential which describes the interaction (here the point interaction) is usually assumed to be energy dependent and hence dependent on $h$ in this setting. A common choice for the interaction is of the form $h^{-\beta}$ with $\beta>0$ (see for example \cite{DM:SemiclassicalResonanceAsymptoticDeltaHalfLine} and \cite{Galkowski:ResonancesThinBarrierCircle}). We consider both positive and negative sign of the point charge. So the operator considered is $-h^{2}\Delta_{\pm h^{-\beta}, y}^{\textup D}$ and its resonances are the poles in the complex lower half-plane of
\begin{equation*}
\Biggl(-\Delta_{\pm h^{-\beta}, y}^{\textup D}-\biggl(\frac{z}{h}\biggr)^{2}\Biggr)^{-1}=h^{2}\Bigl(-h^{2}\Delta_{\pm h^{-\beta}, y}^{\textup D}-z^{2}\Bigr)^{-1},\quad \textup{for}\, h\to 0.
\end{equation*}
Below we will study the asymptotics of these resonances for $\Omega=\mathbb{R}^{3}_+$ with both Dirichlet and Neumann boundary conditions. The approach parallels the analysis in \cite{DM:SemiclassicalResonanceAsymptoticDeltaHalfLine}
where the one dimensional case of the half-line with a repulsive delta interaction at a point and Dirichlet boundary conditions at the origin is treated. This is sometimes called Winter's model in the literature (see \cite{Sacchetti23} for a recent analysis and references).\\ Here we study the half-space with both the signs of the delta interaction and we include the results for the Neumann b.c. at the end of the Section.
\subsection{Dirichlet Boundary Condition}
We start by studying the semiclassical resonance asymptotics for $-h^{2}\Delta_{\pm h^{-\beta}, y}^{\textup{D}}$. The form of the resolvent implies that the equation for the resonances is
\begin{equation}
\label{semiclassicalequation}
\pm h^{-\beta}-\frac{iz}{4\pi h}+\frac{e^{2iy_{3}\frac{z}{h}}}{8\pi y_{3}}=0,
\end{equation}
which can be rewritten as
\begin{equation*}
-e^{\pm8\pi y_{3}h^{-\beta}}=\biggl(-2iy_{3}\frac{z}{h}\pm8\pi y_{3}h^{-\beta}\biggr)e^{-2iy_{3}\frac{z}{h}\pm 8\pi y_{3}h^{-\beta}}.
\end{equation*}
Setting $-w^{\pm}=-e^{\pm8\pi y_{3}h^{-\beta}}$ and $x=-2iy_{3}\frac{z}{h}\pm8\pi y_{3}h^{-\beta}$, the previous equation becomes $-w=xe^{x}$. The main difference in the two cases is that in the first $w\to+\infty$ as $h\to 0^{+}$, while in the second $w\to 0$. 

The solutions $x$ in the complex plane of $-w=xe^{x}$ are a denumerable family denoted by $x=W_{k}(-w)$, where $W$ is the Lambert $W$ function (see \cite{CHK:LambertW}) and $k$ varies over $\Z$. Specifically we use the fact that $W_{k}$ can be expanded in a convergent series (note that this expansion is valid for both $w$ approaching zero and infinity)
\begin{equation*}
\begin{split}
W_{k}(-w) & = \ln{(-w)}+2\pi ik-\ln{\Bigl(\ln{(-w)}+2\pi i k\Bigr)}+R_{k}\\
               & = \ln{(w)}+(2k+1)i\pi-\ln{\Bigl(\ln{(w)}+(2k+1)i\pi\Bigr)}+R_{k},
\end{split}
\end{equation*}
where 
\begin{equation}
\label{seriesLambert}
R_{k}=\sum_{j=0}^{+\infty}\sum_{m=1}^{+\infty}c_{j, m}\frac{\ln^{m}{\Bigl(\ln{(w)}+(2k+1)i\pi\Bigr)}}{\Bigl(\ln{(w)}+(2k+1)i\pi\Bigr)^{j+m}}\ \ \ \ \text{and}\ \ \ \
c_{j, m}=\frac{(-1)^{j}}{m!}\stirlingone{j+m}{j+1}.
\end{equation}
Here $\stirlingone{p}{q}$ are the Stirling numbers, which are the number of ways to arrange $p$ objects in $q$ cycles.\\
The following Lemma holds.
\begin{lemma}
\label{lemmaboundresto}
The series \cref{seriesLambert} is absolutely convergent for $w$ large (or small) enough and $k\in\Z$. More precisely we have the tail estimate
\begin{equation}
\label{boundresto}
\begin{split}
\abs{R_{k}-\frac{\ln{\Bigl(\ln{(w)}+(2k+1)i\pi\Bigr)}}{\ln{(w)}+(2k+1)i\pi}} \le \sum_{j\ge 0, m\ge 1, (j, m)\ne (0, 1)}\abs{c_{j, m}\frac{\ln^{m}{\Bigl(\ln{(w)}+(2k+1)i\pi\Bigr)}}{\Bigl(\ln{(w)}+(2k+1)i\pi\Bigr)^{j+m}}}
                                                                                                            \le 2\abs{\frac{\ln{\Bigl(\ln{(w)}+(2k+1)i\pi\Bigr)}}{\ln{(w)}+(2k+1)i\pi}}^{2}
\end{split}
\end{equation}
and
\begin{equation}
\label{bounditerativelogarithm}
\abs{\frac{\ln{\Bigl(\ln{(w)}+(2k+1)i\pi\Bigr)}}{\ln{(w)}+(2k+1)i\pi}}\le\frac{1}{2}.
\end{equation}
\end{lemma}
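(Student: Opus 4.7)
The plan is to set $L := \ln w + (2k+1)i\pi$ and $\ell := \ln L$, so that
\[
R_k = \sum_{j \ge 0,\,m \ge 1} c_{j,m}\,\frac{\ell^m}{L^{j+m}}.
\]
A direct computation gives $c_{0,1} = \stirlingone{1}{1}/1! = 1$, so the term with $(j,m)=(0,1)$ equals $\ell/L$ exactly; the first inequality in \eqref{boundresto} is therefore nothing but the triangle inequality applied to the tail $R_k-\ell/L$. The auxiliary estimate \eqref{bounditerativelogarithm} is elementary: in both asymptotic regimes $|L|\to\infty$, whence $|\ell|/|L|=|\ln L|/|L|\to 0$, so $|\ell/L|\le 1/2$ for $|w|$ large (or small) enough; at the same time $|L|\ge e$ forces $|\ell|\ge 1$, which gives the useful identity $|L|^{-1}=|\ell/L|\,|\ell|^{-1}$.

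The heart of the argument is the quantitative bound $\le 2|\ell/L|^2$ on the tail. I plan to treat the $j=0$ and $j\ge 1$ slices of the series separately, bounding each by $|\ell/L|^2$. Setting $x:=|\ell/L|$, the $j=0$ slice reduces to a plain logarithm series: since $c_{0,m}=\stirlingone{m}{1}/m!=(m-1)!/m!=1/m$,
\[
\Bigl|\sum_{m\ge 2} c_{0,m}\,(\ell/L)^m\Bigr|\le \sum_{m\ge 2}\frac{x^m}{m}\le \frac{x^2}{2(1-x)}\le x^2\qquad (x\le 1/2).
\]
For the $j\ge 1$ slice the identity $|L|^{-1}=x/|\ell|$ rewrites each term as
\[
|c_{j,m}|\,\frac{|\ell|^m}{|L|^{j+m}}=\frac{x^{j+m}}{|\ell|^j}|c_{j,m}|\le \frac{x^2}{|\ell|}\,|c_{j,m}|\,x^{j+m-2}\qquad (j\ge 1,\ |\ell|\ge 1).
\]
The whole $j\ge 1$ contribution is then bounded by $(x^2/|\ell|)\,M(x)$, with $M(x):=\sum_{j\ge 1,\,m\ge 1}|c_{j,m}|\,x^{j+m-2}$. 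Taking $|w|$ large (or small) enough so that $|\ell|\ge M(1/2)$ forces the $j\ge 1$ slice below $x^2$, producing the total bound $2x^2$ and, in view of the first inequality in \eqref{boundresto}, also the absolute convergence asserted at the top of the lemma.

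The principal obstacle is the convergence and boundedness of $M(x)$ on $[0,1/2]$, since the Stirling numbers $\stirlingone{j+m}{j+1}$ behave as polynomials in $j+m$ whose degree grows without bound in $m$, so a uniform term-by-term bound is delicate. I plan to handle it via the classical generating-function identity
\[
\sum_{N\ge k}\stirlingone{N}{k}\,\frac{t^N}{N!}=\frac{(-\ln(1-t))^k}{k!},
\]
which, after summing $m$ first with $k=j+1$ fixed, yields $\sum_{m\ge 1}\stirlingone{m+j}{j+1}\,x^m/m!=(d/dx)^j\bigl[(-\ln(1-x))^{j+1}/(j+1)!\bigr]$. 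Summing this in $j$ against $x^{j-2}$ collapses $M(x)$ into an explicit combination of logarithms and their derivatives, which is finite on $[0,1/2]$ by elementary calculus; this is the only step where serious combinatorial bookkeeping is required, the remainder of the argument being essentially term-by-term.
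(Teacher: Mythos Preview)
Your overall architecture --- split off the $(j,m)=(0,1)$ term, bound the $j=0$ tail by $x^{2}$ via $c_{0,m}=1/m$ and a geometric series, then bound the $j\ge1$ slice separately by $x^{2}$ --- is exactly the paper's decomposition into $S_{1}$ and $S_{2}$, and your treatment of $S_{1}$ is identical to the paper's. The gap is in the $j\ge1$ slice: $M(1/2)$ diverges, so the step ``take $|\ell|\ge M(1/2)$'' is vacuous. When you replace $|\ell|^{-j}$ by $|\ell|^{-1}$ you are trading the genuinely small parameter $1/|L|$ for the larger one $x=|\ell|/|L|$, and the resulting one-variable series $M(x)=\sum_{n\ge2}a_{n}x^{n-2}$, with $a_{n}=\sum_{k=2}^{n}\stirlingone{n}{k}/(n-k+1)!$, has radius of convergence strictly below $1/2$. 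Your own generating-function route shows this: summing first in $m$ and then in $j$ yields $x^{2}M(x)=\tilde v(x)+\ln(1-x)$, where $\tilde v$ is the branch through the origin of $1-e^{-\tilde v}=x(1+\tilde v)$; this implicit curve has its first real branch point where $e^{-\tilde v}=x$, i.e.\ at the root $x_{*}\approx 0.317$ of $\ln x+1/x=2$, well inside $1/2$. Numerically one sees the same thing, since $a_{n}(1/2)^{n-2}=1,\ 1.25,\ 1.46,\ 1.74,\ 2.15,\ 2.73,\ 3.56,\dots$ is increasing. So the promised ``explicit combination of logarithms, finite on $[0,1/2]$'' does not materialise.

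The paper handles $S_{2}$ by a different mechanism. It takes the absolute convergence of the Lambert series for large $|L|$ as a known input from Corless et al., fixes a reference value $w=N$ at which the series converges, uses that the rows of a convergent double series are bounded to get a uniform-in-$m$ bound on $\sum_{j\ge1}|c_{j,m}|\,|L|^{-(j-1)}$, and then exploits monotonicity in $w$ to propagate that bound to all larger (resp.\ smaller) $w$ and close a geometric series. To repair your argument you must keep the two scales $x=|\ell|/|L|$ and $1/|L|$ separate rather than collapsing them into one variable; the bound $S_{2}\le x^{2}$ then follows once the two-variable series is known to converge in some fixed bidisk about the origin, which is precisely the external input the paper invokes.
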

\begin{proof}
We prove the Lemma for both $w$ large and small simultaneously. Inequality \cref{bounditerativelogarithm} is clearly fulfilled for $w$ sufficiently large (or small). To prove \cref{boundresto} we write
\begin{multline*}
\abs{R_{k}-\frac{\ln{\Bigl(\ln{(w)}+(2k+1)i\pi\Bigr)}}{\ln{(w)}+(2k+1)i\pi}}\le \sum_{j\ge 0, m\ge 1, (j, m)\ne (0, 1)}\abs{c_{j, m}\frac{\ln^{m}{\Bigl(\ln{(w)}+(2k+1)i\pi\Bigr)}}{\Bigl(\ln{(w)}+(2k+1)i\pi\Bigr)^{j+m}}}\\
= \sum_{m=2}^{+\infty}\abs{c_{0, m}\frac{\ln^{m}{\Bigl(\ln{(w)}+(2k+1)i\pi\Bigr)}}{\Bigl(\ln{(w)}+(2k+1)i\pi\Bigr)^{m}}}+\sum_{m=1}^{+\infty}\sum_{j=1}^{+\infty}\abs{c_{j, m}\frac{\ln^{m}{\Bigl(\ln{(w)}+(2k+1)i\pi\Bigr)}}{\Bigl(\ln{(w)}+(2k+1)i\pi\Bigr)^{j+m}}}=S_{1}+S_{2}.
\end{multline*}
We observe that $\stirlingone{m}{1}=(m-1)!$, and so $c_{0, m}=1/m$. Then
\begin{equation*}
S_{1}\le \frac{1}{2}\sum_{m=2}^{+\infty}\abs{\frac{\ln{\Bigl(\ln{(w)}+(2k+1)i\pi\Bigr)}}{\ln{(w)}+(2k+1)i\pi}}^{m}=\frac{1}{2}\sum_{m=2}^{+\infty}q^{m}= \frac{q^{2}}{2(1-q)}\le \abs{\frac{\ln{\Bigl(\ln{(w)}+(2k+1)i\pi\Bigr)}}{\ln{(w)}+(2k+1)i\pi}}^{2},
\end{equation*}
where the last inequality is valid because $q\le 1/2$ for $w$ large (or small) enough. We consider the other sum
\begin{equation}
\label{secondterm}
S_{2}=\frac{1}{\abs{\ln{(w)}+(2k+1)i\pi}}\sum_{m=1}^{+\infty}\abs{\frac{\ln{\Bigl(\ln{(w)}+(2k+1)i\pi\Bigr)}}{\ln{(w)}+(2k+1)i\pi}}^{m}\sum_{j=1}^{+\infty}\frac{\abs{c_{j, m}}}{\abs{\ln{(w)}+(2k+1)i\pi}^{j-1}}.
\end{equation}
Since the double sum is absolutely convergent for large or small $w$, then it must hold that
\begin{equation*}
\abs{\frac{\ln{\Bigl(\ln{(w)}+(2k+1)i\pi\Bigr)}}{\ln{(w)}+(2k+1)i\pi}}^{m}\sum_{j=1}^{+\infty}\frac{\abs{c_{j, m}}}{\abs{\ln{(w)}+(2k+1)i\pi}^{j-1}}\to0\quad\textup{for}\quad m\to +\infty.
\end{equation*}
Moreover, for large $w$, the first multiplicative term is decreasing in $w$ and each term in the sum is decreasing as well. Since all terms are positive, we can conclude that the expression above is decreasing for large $w$. This means that there is $N$ such that, setting $w=N$, for all $m\ge 1$ we have
\begin{equation*}
\abs{\frac{\ln{\Bigl(\ln{(N)}+(2k+1)i\pi\Bigr)}}{\ln{(N)}+(2k+1)i\pi}}^{m}\sum_{j=1}^{+\infty}\frac{\abs{c_{j, m}}}{\abs{\ln{(N)}+(2k+1)i\pi}^{j-1}}\le N.
\end{equation*}
In the same way we have that, for small $w$, the first multiplicative term is increasing in $w$ and each term in the sum is increasing and positive as well. So the whole expression is increasing for small $w$ and that means that ther is $N$ such that if $w=\frac{1}{N}$, for all $m\ge 1$
\begin{equation*}
\abs{\frac{\ln{\Bigl(-\ln{(N)}+(2k+1)i\pi\Bigr)}}{-\ln{(N)}+(2k+1)i\pi}}^{m}\sum_{j=1}^{+\infty}\frac{\abs{c_{j, m}}}{\abs{-\ln{(N)}+(2k+1)i\pi}^{j-1}}\le N
\end{equation*}
holds. To consider together both cases, we can just put $\sgn{\Bigl(\ln{(w)}\Bigr)}$ in front of $\ln{(N)}$.
Then, for $\abs{w}\ge N$ large (or $\abs{w}\le \frac{1}{N}$ respectively), it holds that
\begin{equation*}
\begin{split}
\sum_{j=1}^{+\infty}\frac{\abs{c_{j, m}}}{\abs{\ln{(w)}+(2k+1)i\pi}^{j-1}} \le \sum_{j=1}^{+\infty}\frac{\abs{c_{j, m}}}{\abs{\sgn{\Bigl(\ln{(w)}\Bigr)}\ln{(N)}+(2k+1)i\pi}^{j-1}}
                                                                                                               \le N\abs{\frac{\sgn{\Bigl(\ln{(w)}\Bigr)}\ln{(N)}+(2k+1)i\pi}{\ln{\Bigl(\sgn{\Bigl(\ln{(w)}\Bigr)}\ln{(N)}+(2k+1)i\pi\Bigr)}}}^{m}.
\end{split}
\end{equation*}
Substituting this in \cref{secondterm}, we get
\begin{equation*}
\begin{split}
S_{2} & \le \frac{N}{\abs{\ln{(w)}+(2k+1)i\pi}}\sum_{m=1}^{+\infty}\abs{\frac{\ln{\Bigl(\ln{(w)}+(2k+1)i\pi\Bigr)}}{\ln{(w)}+(2k+1)i\pi}\cdot\frac{\sgn{\Bigl(\ln{(w)}\Bigr)}\ln{(N)}+(2k+1)i\pi}{\ln{\Bigl(\sgn{\Bigl(\ln{(w)}\Bigr)}\ln{(N)}+(2k+1)i\pi\Bigr)}}}^{m}\\
         & = 2N\frac{\abs{\ln{\Bigl(\ln{(w)}+(2k+1)i\pi\Bigr)}}}{\abs{\ln{(w)}+(2k+1)i\pi}^{2}}\abs{\frac{\sgn{\Bigl(\ln{(w)}\Bigr)}\ln{(N)}+(2k+1)i\pi}{\ln{\Bigl(\sgn{\Bigl(\ln{(w)}\Bigr)}\ln{(N)}+(2k+1)i\pi\Bigr)}}}\frac{1}{2}\sum_{m=0}^{+\infty}p^{m},
\end{split}
\end{equation*}
where
\begin{equation*}
\begin{split}
p = \abs{\frac{\ln{\Bigl(\ln{(w)}+(2k+1)i\pi\Bigr)}}{\ln{(w)}+(2k+1)i\pi}\cdot\frac{\sgn{\Bigl(\ln{(w)}\Bigr)}\ln{(N)}+(2k+1)i\pi}{\ln{\Bigl(\sgn{\Bigl(\ln{(w)}\Bigr)}\ln{(N)}+(2k+1)i\pi\Bigr)}}}
  =
\begin{cases}
\frac{s(w)}{s(N)}\quad w\,\textup{large}\\
\frac{s(w)}{s\Bigl(\frac{1}{N}\Bigr)}\quad w\,\textup{small}
\end{cases}.
\end{split}
\end{equation*}
It also holds that $\lim_{w\to+\infty}s(w)=0$, so for large $w$, $p<\frac{1}{2}$. For small $w$ instead $s$ is increasing and $\lim_{w\to 0}s(w)=0$, which means that for sufficiently small $w$, $p<\frac{1}{2}$. But,
$\frac{1}{2}\sum_{m=0}^{+\infty}p^{m}=\frac{1}{2(1-p)}<1$
for $p<\frac{1}{2}$. This implies
\begin{equation*}
S_{2}\le\frac{2N\abs{\sgn{\Bigl(\ln{(w)}\Bigr)}\ln{(N)}+(2k+1)i\pi\Bigr)}}{\abs{\ln{\Bigl(\sgn{\Bigl(\ln{(w)}\Bigr)}\ln{(N)}+(2k+1)i\pi\Bigr)}}}\cdot\frac{\abs{\ln{\Bigl(\ln{(w)}+(2k+1)i\pi\Bigr)}}}{\abs{\ln{(w)}+(2k+1)i\pi}^{2}}.
\end{equation*}
Given that $\abs{\ln{\Bigl(\ln{(w)}+(2k+1)i\pi\Bigr)}}\to+\infty$ for both $w\to+\infty$ and $w\to 0$, we have that
\begin{equation*}
\frac{2N\abs{\sgn{\Bigl(\ln{(w)}\Bigr)}\ln{(N)}+(2k+1)i\pi\Bigr)}}{\abs{\ln{\Bigl(\sgn{\Bigl(\ln{(w)}\Bigr)}\ln{(N)}+(2k+1)i\pi\Bigr)}}}\le\abs{\ln{\Bigl(\ln{(w)}+(2k+1)i\pi\Bigr)}},
\end{equation*}
for $w$ large (respectively small) enough. So one concludes that
$S_{2}\le\abs{\frac{\ln{\Bigl(\ln{(w)}+(2k+1)i\pi\Bigr)}}{\ln{(w)}+(2k+1)i\pi}}^{2}$
and the thesis follows.
\end{proof}

So we can write the resonances $z_{k}^{\pm}$ in terms of the Lambert $W$ function. For $-h^{2}\Delta_{h^{-\beta}, y}^{\textup{D}}$ they are given by
\begin{equation}
\label{zkexpression}
\begin{split}
z_{k}^{+} &= \frac{ih}{y_3}\biggl(W_{k}(-w)-8\pi y_{3}h^{-\beta}\biggr)\\
        & = \frac{ih}{y_3}\biggl(\ln{(w)}+(2k+1)i\pi-\ln{\Bigl(\ln{(w)}+(2k+1)i\pi\Bigr)}-8\pi y_3h^{-\beta}+R_{k}\biggr)\\
        & = \frac{ih}{y_3}\biggl((2k+1)i\pi-\ln{\Bigl(\ln{(w)}+(2k+1)i\pi\Bigr)}+R_{k}\biggr),
\end{split}
\end{equation}
where $k$ varies over $\Z$ and $w=e^{8\pi y_{3}h^{-\beta}}$. Instead, for $-h^{2}\Delta_{-h^{-\beta}, y}^{\textup{D}}$ (here $w=e^{-8\pi y_{3}h^{-\beta}}$)
\begin{equation}
\label{zkexpressionnegative}
\begin{split}
z_{k}^{-} &= \frac{ih}{y_3}\biggl(W_{k}(-w)+8\pi y_{3}h^{-\beta}\biggr)\\
        & = \frac{ih}{y_3}\biggl(\ln{(w)}+(2k+1)i\pi-\ln{\Bigl(\ln{(w)}+(2k+1)i\pi\Bigr)}+8\pi y_3h^{-\beta}+R_{k}\biggr)\\
        & = \frac{ih}{y_3}\biggl((2k+1)i\pi-\ln{\Bigl(\ln{(w)}+(2k+1)i\pi\Bigr)}+R_{k}\biggr)\ .
\end{split}
\end{equation}

Now we show that under some assumptions over $z_{k}$, $k$ is roughly of size $h^{-1}$.
\begin{lemma}
\label{lemmainequalityk}
Let $\varepsilon\in (0, 1)$ be given. Then, for $k$ such that $z_{k}$ is given by \cref{zkexpression} or \cref{zkexpressionnegative} and 
\begin{equation}
\label{inequalityzk}
\varepsilon\le\abs{z_{k}}\le1/\varepsilon,
\end{equation}
we have
\begin{equation}
\label{inequalityk}
\frac{\varepsilon}{2}\le\frac{\abs{k}\pi h}{y_{3}}\le\frac{2}{\varepsilon}.
\end{equation}
\end{lemma}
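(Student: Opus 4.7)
The plan is to translate the explicit representations into two-sided bounds on $|k|$ through the triangle inequality, using the tail estimate from \cref{lemmaboundresto} to control the remainder. From \eqref{zkexpression}---and identically from \eqref{zkexpressionnegative}, since in both cases the explicit term $\pm 8\pi y_3 h^{-\beta}$ cancels exactly with $\ln w = \pm 8\pi y_3 h^{-\beta}$---one has the common form
\begin{equation*}
z_k = \frac{ih}{y_3}\Bigl((2k+1)i\pi - \ln\bigl(\ln w + (2k+1)i\pi\bigr) + R_k\Bigr),
\end{equation*}
whence
\begin{equation*}
|2k+1|\pi - L_k - |R_k| \le \frac{y_3|z_k|}{h} \le |2k+1|\pi + L_k + |R_k|,
\end{equation*}
with $L_k := \bigl|\ln(\ln w + (2k+1)i\pi)\bigr|$, the lower bound being meaningful as soon as $|2k+1|\pi \ge L_k + |R_k|$.

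Next I would control the two error quantities uniformly in $k$. By \cref{lemmaboundresto}, once $h$ is small enough that \eqref{bounditerativelogarithm} holds, $|R_k|$ is bounded by an absolute constant. For $L_k$, the identity $|\ln w| = 8\pi y_3 h^{-\beta}$ gives
\begin{equation*}
L_k \le \ln\bigl(8\pi y_3 h^{-\beta} + (2|k|+1)\pi\bigr) + \pi,
\end{equation*}
which grows only like $\ln h^{-1} + \ln|k|$ and is in particular much smaller than $1/h$.

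The two asserted inequalities now follow. From $|z_k| \le 1/\varepsilon$ combined with the lower half of the bound, $|2k+1|\pi \le y_3/(h\varepsilon) + L_k + |R_k|$; a one-step bootstrap (use the crude consequence $|k| = O(1/h)$ to deduce $L_k = O(\ln h^{-1})$, then reinsert) yields $|k|\pi h/y_3 \le 1/(2\varepsilon) + O(h \ln h^{-1})$, which is $\le 2/\varepsilon$ for $h$ small enough depending on $\varepsilon$. Symmetrically, $|z_k| \ge \varepsilon$ combined with the upper half gives $|2k+1|\pi \ge y_3\varepsilon/h - L_k - |R_k|$; since $h L_k \to 0$, this produces $|k|\pi h/y_3 \ge \varepsilon/2 - O(h \ln h^{-1}) \ge \varepsilon/2$ in the same semiclassical regime.

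The only delicate point is the bootstrap in the upper estimate, where $L_k$ depends on the very $k$ being estimated; however, the double-logarithmic growth of $L_k$ against the expected $|k| \asymp 1/h$ closes the argument cleanly, with no genuine obstacle.
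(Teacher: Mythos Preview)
Your approach is essentially the paper's: start from the explicit formula for $z_k$, bound $|R_k|$ by an absolute constant via \cref{lemmaboundresto}, control the logarithmic term $L_k$, and compare $|2k+1|\pi$ with $y_3|z_k|/h$. The difference is structural: the paper argues both inequalities by contradiction, while you run a direct triangle-inequality estimate with a bootstrap. Your bootstrap for the upper bound is sound, since $L_k$ grows only logarithmically in $|k|$ while the left side grows linearly; the crude first step gives $|k|=O(h^{-\max(1,\beta)})$, hence $L_k=O(\ln h^{-1})$, and reinsertion yields $|k|\pi h/y_3 \le 1/(2\varepsilon)+O(h\ln h^{-1}) \le 2/\varepsilon$.

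There is, however, a small genuine gap in your lower bound. Your chain ends with $|k|\pi h/y_3 \ge \varepsilon/2 - O(h\ln h^{-1})$, and you then assert this is $\ge \varepsilon/2$ for $h$ small. But the error term is positive, so you have only established $\ge \varepsilon/2 - o(1)$, strictly weaker than the stated constant. The paper's contradiction argument avoids this: assuming $|k|\pi h/y_3 < \varepsilon/2$, the contradiction hypothesis itself furnishes the bound on $|k|$ needed to control $L_k$, and one is led to the absurd inequality $e^{c/h} \le C h^{-\beta} + C' h^{-1}$. If you want to keep the direct route and recover the exact constant $\varepsilon/2$, you need a separate argument---for instance, showing that the error term is actually nonpositive (it is not), or, more honestly, rephrasing the conclusion with a slightly smaller constant. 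Since the applications in the subsequent propositions only use $|k|\asymp h^{-1}$ and absorb $O(1)$ losses, the gap is harmless for the rest of the paper, but as a proof of the lemma \emph{as stated} it falls just short.
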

\begin{proof}
Using \cref{bounditerativelogarithm} and the inverse triangular inequality we have that
\begin{equation*}
\abs{R_{k}-\frac{\ln{\Bigl(\ln{(w)}+(2k+1)i\pi\Bigr)}}{\ln{(w)}+(2k+1)i\pi}}\ge\abs{\abs{R_{k}}-\frac{1}{2}}.
\end{equation*}
Combining this with \cref{boundresto} and \cref{bounditerativelogarithm} we get
$\abs{\abs{R_{k}}-\frac{1}{2}}\le\frac{1}{2}$,
equivalent to $\abs{R_{k}}\le 1$. We now prove the first inequality in \cref{inequalityk}. For the sake of contradiction we suppose that $\frac{\abs{k}\pi h}{y_{3}}<\frac{\varepsilon}{2}$. Combining the first inequality of \cref{inequalityzk}, $\abs{R_{k}}\le 1$ and \cref{zkexpression}, we get
\begin{equation*}
\varepsilon\le\abs{z_{k}}\le\frac{\abs{k}\pi h}{y_{3}}+\frac{h}{y_3}\abs{\ln{\Bigl(\ln{(w)}+(2k+1)i\pi\Bigr)}}+\frac{\pi+1}{2y_{3}}h.
\end{equation*}
Then, using $\frac{\abs{k}\pi h}{y_{3}}<\frac{\varepsilon}{2}$, we have
$\varepsilon<\frac{h}{y_{3}}\abs{\ln{\Bigl(\ln{(w)}+(2k+1)i\pi\Bigr)}}+\frac{\pi+1}{y_{3}}h$.
Since the second term in the right side goes to zero faster than the first, it must hold that, for $h$ small enough
$\frac{\varepsilon}{2}<\frac{h}{y_{3}}\abs{\ln{\Bigl(\ln{(w)}+(2k+1)i\pi\Bigr)}}$,
which implies
\begin{equation*}
e^{\frac{\varepsilon y_{3}}{2h}}\le\abs{\ln{(w)}+(2k+1)i\pi}\le8\pi y_{3}h^{-\beta}+2\abs{k}\pi+\pi\le8\pi y_{3}h^{-\beta}+\varepsilon y_{3}h^{-1}
\end{equation*}
and this is absurd for $h$ small enough.

To prove the second inequality \cref{inequalityk} we suppose for the sake of contradiction that $\frac{\abs{k}\pi h}{y_{3}}>\frac{2}{\varepsilon}$. By $\abs{R_{k}}\le 1$, \cref{zkexpression} and triangular inequality we have that
\begin{equation*}
\abs{z_{k}}\ge\frac{\abs{k}\pi h}{y_{3}}-\frac{h}{2y_3}\abs{\ln{\abs{\ln{(w)}+(2k+1)i\pi}}}-\frac{h}{2y_{3}}\abs{\Arg{\Bigl(\ln{(w)}+(2k+1)i\pi\Bigr)}}-\frac{\pi+1}{2y_{3}}h.
\end{equation*}
This, combined with the second inequality of \cref{inequalityzk}, implies
\begin{equation*}
\begin{split}
\frac{\abs{k}\pi h}{y_{3}}-\frac{h}{2y_3}\abs{\ln{\abs{\ln{(w)}+(2k+1)i\pi}}} \le \frac{1}{\varepsilon}+\frac{h}{2y_{3}}\abs{\Arg{\Bigl(\ln{(w)}+(2k+1)i\pi\Bigr)}}+\frac{\pi+1}{2y_{3}}h
                                                                                                                                        \le \frac{3}{2\varepsilon},
\end{split}
\end{equation*}
the last inequality being valid for $h$ small enough. So it holds that
$2\abs{k}\pi-\frac{3y_{3}}{\varepsilon h}\le\abs{\ln{\abs{\ln{(w)}+(2k+1)i\pi}}}$ and hence
$e^{2\abs{k}\pi-\frac{3y_{3}}{\varepsilon h}}-2\abs{k}\pi\le\abs{\ln{(w)}}+\pi$.
But, since the function $f(x)=e^{x-a}-x$ is increasing in $(a, +\infty)$ and $2\abs{k}\pi>\frac{4y_{3}}{\varepsilon h}$ we have $f\Bigl(\frac{4y_{3}}{\varepsilon h}\Bigr)<f\Bigl(2\abs{k}\pi\Bigr)\le\ln{(w)}+\pi$ and finally
\begin{equation*}
e^{\frac{y_{3}}{\varepsilon h}}\le8\pi y_{3}h^{-\beta}+\frac{4y_{3}}{\varepsilon h}+\pi,
\end{equation*}
which is a contradiction for $h$ small enough.
\end{proof}
We now establish asymptotic behaviour for resonances apart from the ones near zero and near infinity. We distinguish between $0<\beta<1$ and $\beta>1$. This different treatment is needed because different values of $\beta$ changes the relative dominance between $\ln{(y)}$ and $\abs{2k+1}\pi$ in \cref{zkexpression}. We start with the case $\beta\in(0, 1)$.
\begin{proposition}
Let $\beta\in(0, 1)$ and $\varepsilon\in(0, 1)$ be given. Then there is $h_{0}>0$ such that, when $h\in(0, h_{0}]$, all solutions of \cref{semiclassicalequation} satisfying
$\varepsilon\le\abs{z}\le{1}/{\varepsilon}$,
obey
\begin{equation*}
0\le-\Im{z}-\frac{h}{2y_{3}}\ln{\Bigl(2y_{3}h^{-1}\abs{\Re{z}}\Bigr)}\le\frac{72\pi^{2}}{y_{3}}\varepsilon^{-2}h^{3-2\beta}.
\end{equation*}
\end{proposition}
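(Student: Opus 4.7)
My plan is to combine the explicit $\alpha$-independent relation between the real and imaginary parts of Dirichlet resonances encoded in \eqref{bintermsofahalfspace2} with the Lambert $W$ parametrization \eqref{zkexpression}. The identity \eqref{bintermsofahalfspace2} comes from the second equation in \eqref{complexsystemhalfspace}, which does not involve $\alpha$; hence it applies unchanged to the operator $-\Delta^{\textup{D}}_{\pm h^{-\beta}, y}$. Since $z$ is a resonance of $-h^2\Delta^{\textup{D}}_{\pm h^{-\beta}, y}$ if and only if $z/h$ is a resonance of $-\Delta^{\textup{D}}_{\pm h^{-\beta}, y}$, rescaling \eqref{bintermsofahalfspace2} gives, for every non-real semiclassical resonance,
\begin{equation*}
\Im z = \frac{h}{2y_3}\ln\!\left(\frac{\sin(2y_3 \Re z/h)}{2y_3 \Re z/h}\right).
\end{equation*}
On the Dirichlet resonance strips, $\sin(2y_3\Re z/h)$ and $\Re z$ share sign, so this can be rewritten as the master identity
\begin{equation*}
-\Im z - \frac{h}{2y_3}\ln\!\left(\frac{2y_3|\Re z|}{h}\right) = -\frac{h}{2y_3}\ln\bigl|\sin(2y_3 \Re z/h)\bigr|,
\end{equation*}
from which the lower bound of the statement follows at once from $|\sin|\le 1$.

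For the upper bound I need to control $-\ln|\sin(2y_3\Re z/h)|$, which, by the master identity, means showing $|\sin(2y_3\Re z/h)|$ differs from $1$ by at most $O(h^{2-2\beta}/\varepsilon^2)$. Taking the real part of the Lambert $W$ expansion \eqref{zkexpression} yields $2y_3\Re z/h = -(2k+1)\pi + \Arg L - \Im R_k$, with $L = \ln w + (2k+1)i\pi$. By Lemma~\ref{lemmainequalityk}, $|2k+1|\pi$ is of order $2y_3|\Re z|/h$, so
\begin{equation*}
\left|\Arg L - \sgn(2k+1)\tfrac{\pi}{2}\right| \le \arctan\!\frac{|\ln w|}{|2k+1|\pi} \le \frac{4\pi h^{1-\beta}}{|\Re z|},
\end{equation*}
and Lemma~\ref{lemmaboundresto}, together with the estimates $|L|\asymp y_3/h$ and $\ln|L|=O(\ln(1/h))$, yields $|R_k| = O\bigl(h\ln(1/h)/(\varepsilon y_3)\bigr)$, which for $\beta\in(0,1)$ and $h$ small is dominated by $h^{1-\beta}/|\Re z|$. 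Hence $2y_3\Re z/h$ differs from an odd multiple of $\pi/2$ by some $\eta$ with $|\eta|\le C h^{1-\beta}/|\Re z|$ for an explicit constant $C$. Since $|\sin(\tfrac{\pi}{2}+\eta)| = \cos|\eta|$ and $-\ln\cos\eta = \eta^2/2 + O(\eta^4)$ as $\eta\to 0$, we obtain $-\ln|\sin(2y_3\Re z/h)| \le C' h^{2-2\beta}/|\Re z|^2$; using that $|\Re z|\ge c\,\varepsilon$ for $h$ small (a consequence of the a priori smallness of $|\Im z|$ in this regime), multiplication by $\tfrac{h}{2y_3}$ produces the claimed bound $\tfrac{72\pi^2}{y_3}\varepsilon^{-2}h^{3-2\beta}$.

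The main obstacle is the calibration of the numerical constant $72\pi^2$: this requires tracking the precise interplay between the linearisations of $\arctan$ and of $-\ln\cos$, the sharp Lambert remainder bounds from \eqref{boundresto}--\eqref{bounditerativelogarithm}, and the worst-case ratio $\varepsilon^2/|\Re z|^2$ within the annulus $\{\varepsilon\le|z|\le 1/\varepsilon\}$. The sign and branch analysis determined by $\sgn(\pm h^{-\beta})$ and $\sgn(2k+1)$ is routine: it only controls whether $\Arg L$ tends to $+\pi/2$ or $-\pi/2$, and since only $|\sin|$ enters the master identity, every case reduces to the same estimate.
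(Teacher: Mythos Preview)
Your argument is correct, and it takes a genuinely different route from the paper. The paper does \emph{not} work with the master identity coming from the sine equation alone. Instead it separates the real and imaginary parts of \eqref{semiclassicalequation}, \emph{squares and adds} them to obtain
\[
e^{-4y_3\Im z/h}=\Bigl(\tfrac{2y_3}{h}\Bigr)^2\bigl((\Re z)^2+(\Im z)^2\bigr)\pm 32\pi y_3^2 h^{-\beta-1}\Im z+64\pi^2 y_3^2 h^{-2\beta},
\]
takes the logarithm in the form $-4\tfrac{y_3}{h}\Im z=2\ln\bigl(2y_3h^{-1}|\Re z|\bigr)+\ln(1+t)$, and then uses $0\le\ln(1+t)\le t$ together with the Lambert-based bounds $|\Re z_k|\ge\varepsilon/3$ and $|\Im z_k|\le\tfrac{h}{y_3}\ln(4y_3/(\varepsilon h))$ to estimate $t\le 288\pi^2\varepsilon^{-2}h^{2-2\beta}$; multiplication by $h/(4y_3)$ then yields the constant $72\pi^2$ exactly.

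Your approach extracts the identity $-\Im z-\tfrac{h}{2y_3}\ln(2y_3h^{-1}|\Re z|)=-\tfrac{h}{2y_3}\ln|\sin(2y_3\Re z/h)|$ from the sine equation alone, so that the lower bound is immediate, and reduces the upper bound to showing that $2y_3\Re z/h$ is within $O(h^{1-\beta}/\varepsilon)$ of an odd multiple of $\pi/2$, then applying $-\ln\cos\eta\le\eta^2$. This is more transparent and makes the lower bound a one-liner; the price is that the numerical constant does not come out as $72\pi^2$ automatically---with your bounds $|\eta|\lesssim 5\pi h^{1-\beta}/|\Re z|$ and $|\Re z|\ge\varepsilon/3$ one lands on a somewhat larger (but structurally identical) constant, whereas the paper's squaring-and-adding packages the error as a single quantity $t$ whose estimation gives $72\pi^2$ without further loss. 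Both proofs rely on Lemmas~\ref{lemmaboundresto} and \ref{lemmainequalityk} in the same way.
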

\begin{proof}
Since $\beta\in(0, 1)$, then the first inequality in \cref{inequalityk} implies
\begin{equation}
\label{dominanttermbeta<1}
\abs{(2k+1)i\pi}\ge 2\abs{k}\pi-\pi\ge\varepsilon y_{3}h^{-1}-\pi>\abs{\ln{(w)}}=8\pi y_{3}h^{-\beta},
\end{equation}
for $h$ small enough, so that it is more convenient to rewrite
\begin{equation*}
\ln{\Bigl(\ln{(w)}+(2k+1)i\pi\Bigr)}=\ln{\Bigl((2k+1)i\pi\Bigr)}+\ln{\biggl(1+\frac{\ln{(w)}}{(2k+1)i\pi}\biggr)}
\end{equation*}
and
\begin{equation*}
z_{k}^{\pm}=\frac{ih}{2y_{3}}\biggl((2k+1)i\pi-\ln{\Bigl((2k+1)i\pi\Bigr)}+R'_{k}\biggr), \quad\textup{where}\quad R'_{k}=R_{k}-\ln{\biggl(1+\frac{\ln{(w)}}{(2k+1)i\pi}\biggr)}.
\end{equation*}
Using $\abs{R_{k}}\le 1$ and \cref{dominanttermbeta<1}, we get
\begin{equation}
\label{boundRk'}
\abs{R'_{k}}\le\abs{R_{k}}+\abs{\ln{\abs{1+\frac{\ln{(w)}}{(2k+1)i\pi}}}}+\abs{\Arg{\biggl(1+\frac{\ln{(w)}}{(2k+1)i\pi}\biggr)}}\le 1+\ln{2}+\frac{\pi}{2}\le 4.
\end{equation}
Hence, by taking the real and imaginary part of $z_{k}^{\pm}$ we get
\begin{equation}
\label{realimaginaryzk}
\begin{cases}
\Re{z_{k}^{\pm}}=-\frac{(2k+1)\pi h}{2y_{3}}+\frac{h}{2y_{3}}\Arg{\Bigl((2k+1)i\pi\Bigr)}-\frac{h}{2y_{3}}\Im{R'_{k}}\\
\Im{z_{k}^{\pm}}=-\frac{h}{2y_{3}}\ln{\abs{(2k+1)\pi}}+\frac{h}{2y_{3}}\Re{R'_{k}}
\end{cases}.
\end{equation}
We now separate real and imaginary part of \cref{semiclassicalequation} 
\begin{equation*}
\begin{cases}
e^{-2\frac{y_3}{h}\Im{z_{k}^{\pm}}}\cos{\biggl(2\frac{y_{3}}{h}\Re{z_{k}^{\pm}}\biggr)}=-2\frac{y_3}{h}\Im{z_{k}^{\pm}}\mp8\pi y_{3}h^{-\beta}\\
e^{-2\frac{y_3}{h}\Im{z_{k}^{\pm}}}\sin{\biggl(2\frac{y_{3}}{h}\Re{z_{k}^{\pm}}\biggr)}=2\frac{y_3}{h}\Re{z_{k}^{\pm}}
\end{cases}.
\end{equation*}
Squaring the equations and adding them side by side we have
\begin{equation*}
e^{-4\frac{y_3}{h}\Im{z_{k}^{\pm}}}=\frac{4y_{3}^{2}}{h^{2}}\biggl(\Bigl(\Im{z_{k}^{\pm}}\Bigr)^{2}+\Bigl(\Re{z_{k}^{\pm}}\Bigr)^{2}\biggr)\pm32\pi y_{3}^{2}h^{-\beta-1}\Im{z_{k}^{\pm}}+64\pi^{2}y_{3}^{2}h^{-2\beta}
\end{equation*}
or equivalently
\begin{equation*}
-4\frac{y_3}{h}\Im{z_{k}^{\pm}}=2\ln{\Bigl(2y_{3}h^{-1}\abs{\Re{z_{k}^{\pm}}}\Bigr)}+\ln{(1+t)} \ \ \text{where}\ \ \ 
t=\frac{4y_{3}^{2}\Bigl(\Im{z_{k}^{\pm}}\Bigr)^{2}\pm32\pi y_{3}^{2}h^{1-\beta}\Im{z_{k}^{\pm}}+64\pi^{2}y_{3}^{2}h^{2-2\beta}}{4y_{3}^{2}\Bigl(\Re{z_{k}^{\pm}}\Bigr)^{2}}.
\end{equation*}
We recall that $0\le\ln{(1+t)}\le t$ to get
\begin{equation}
\label{lastpassage}
0\le-4\frac{y_3}{h}\Im{z_{k}^{\pm}}-2\ln{\Bigl(2y_{3}h^{-1}\abs{\Re{z_{k}^{\pm}}}\Bigr)}\le t.
\end{equation}
From \cref{realimaginaryzk} we have, using \cref{inequalityk}, \cref{boundRk'} and $\Arg{\Bigl((2k+1)i\pi\Bigr)}=\pi/2$, that, for sufficiently small $h$
\begin{equation*}
\begin{split}
\abs{\Re{z_{k}}}& \ge \frac{\abs{2k+1}\pi h}{2y_{3}}-\abs{\frac{h}{2y_{3}}\Im{R'_{k}}-\frac{h}{2y_{3}}\Arg{\Bigl((2k+1)i\pi\Bigr)}}\\
                         & \ge \frac{\abs{k}\pi h}{y_{3}}-\frac{\pi h}{2y_{3}}-\frac{h}{2y_{3}}\abs{\Im{R'_{k}}}-\frac{h}{2y_{3}}\abs{\Arg{\Bigl((2k+1)i\pi\Bigr)}}\\
                         & \ge \frac{\varepsilon}{2}-\frac{\pi h}{2y_{3}}-\frac{2h}{y_{3}}-\frac{\pi h}{4y_{3}}\ge\frac{\varepsilon}{3}
\end{split}
\end{equation*}
and
\begin{equation*}
\begin{split}
\abs{\Im{z_{k}^{\pm}}}& \le \frac{h}{2y_{3}}\ln{\Bigl(2\abs{k}\pi+\pi\Bigr)}+\frac{h}{2y_{3}}\abs{\Re{R'_{k}}}\\
                         & = \frac{h}{2y_{3}}\ln{\Bigl(2\abs{k}\pi\Bigr)}+\frac{h}{2y_{3}}\ln{\biggl(1+\frac{1}{2\abs{k}\pi}\biggr)}+\frac{h}{2y_{3}}\abs{\Re{R'_{k}}}\\
                        & \le \frac{h}{2y_{3}}\ln{\biggl(\frac{4y_{3}}{\varepsilon h}\biggr)}+\frac{\ln{2}+2}{y_{3}}h\le\frac{h}{y_{3}}\ln{\biggl(\frac{4y_{3}}{\varepsilon h}\biggr)}.
\end{split}
\end{equation*}
For $z_{k}^{+}$ this, together with $\Im{z_{k}^{+}}<0$, implies
\begin{equation*}
t\le\frac{4h^{2}\ln^{2}{\Bigl(\frac{4y_{3}}{\varepsilon h}\Bigr)}+64\pi^{2}y_{3}^{2}h^{2-2\beta}}{4y_{3}^{2}\Bigl(\Re{z_{k}^{+}}\Bigr)^{2}}\le\frac{32\pi^{2}h^{2-2\beta}}{\Bigl(\Re{z_{k}^{+}}\Bigr)^{2}}\le 288\pi^{2}\varepsilon^{-2}h^{2-2\beta},
\end{equation*}
while for $z_{k}^{-}$
\begin{equation*}
t\le\frac{4h^{2}\ln^{2}{\Bigl(\frac{4y_{3}}{\varepsilon h}\Bigr)}+32\pi y_3h^{2-\beta}\ln{\Bigl(\frac{4y_{3}}{\varepsilon h}\Bigr)}+64\pi^{2}y_{3}^{2}h^{2-2\beta}}{4y_{3}^{2}\Bigl(\Re{z_{k}^{-}}\Bigr)^{2}}\le\frac{32\pi^{2}h^{2-2\beta}}{\Bigl(\Re{z_{k}^{-}}\Bigr)^{2}}\le 288\pi^{2}\varepsilon^{-2}h^{2-2\beta},
\end{equation*}
Plugging this inequality into \cref{lastpassage} gives the desired inequality.
\end{proof}
A different behavior happens for $\beta>1$.
\begin{proposition}
\label{propositionbeta>1}
Let $\beta>1$ and $\varepsilon\in (0, 1)$ be given. Then there is $h_{0}>0$ such that, when $h\in(0, h_{0}]$, all solutions to \cref{semiclassicalequation}$^{+}$ satisfying
$\varepsilon\le\abs{z}\le{1}/{\varepsilon}$
obey
\begin{equation*}
\abs{\Im{z}+\frac{2y_{3}\ln{\Bigl(8\pi y_{3}h^{-\beta}\Bigr)}}{h(2k+1)^{2}\pi^{2}}\Bigl(\Re{z}\Bigr)^{2}}\le\frac{1+96\varepsilon^{-4}}{4\pi y_{3}^{2}}h^{\beta+1}\ln{\Bigl(8\pi y_{3}h^{-\beta}\Bigr)}+\frac{\varepsilon^{-2}h^{2\beta-1}}{2\pi^{2}y_{3}},
\end{equation*}
which, using \cref{inequalityk}, implies
\begin{gather}
\label{upperboundbeta>1}
\Im{z}+\frac{\varepsilon^{2}}{32y_{3}}h\ln{\Bigl(8\pi y_{3}h^{-\beta}\Bigr)}\Bigl(\Re{z}\Bigr)^{2}\le\frac{1+96\varepsilon^{-4}}{4\pi y_{3}^{2}}h^{\beta+1}\ln{\Bigl(8\pi y_{3}h^{-\beta}\Bigr)}+\frac{\varepsilon^{-2}h^{2\beta-1}}{2\pi^{2}y_{3}}\\
\label{lowerboundbeta>1}
\Im{z}+\frac{8}{\varepsilon^{2}y_{3}}h\ln{\Bigl(8\pi y_{3}h^{-\beta}\Bigr)}\Bigl(\Re{z}\Bigr)^{2}\ge-\frac{1+96\varepsilon^{-4}}{4\pi y_{3}^{2}}h^{\beta+1}\ln{\Bigl(8\pi y_{3}h^{-\beta}\Bigr)}-\frac{\varepsilon^{-2}h^{2\beta-1}}{2\pi^{2}y_{3}}.
\end{gather}
\end{proposition}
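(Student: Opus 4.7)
The strategy parallels the proof of the preceding proposition, but with the roles of $\ln w = 8\pi y_3 h^{-\beta}$ and $(2k+1)i\pi$ interchanged. For $\beta>1$, Lemma~\ref{lemmainequalityk} forces $|(2k+1)\pi|\lesssim h^{-1}/\varepsilon$, so the magnitude $|\ln w|\sim h^{-\beta}$ of the first summand dominates the second. Accordingly, I would factor $\ln w$ out of the logarithm appearing in \eqref{zkexpression} and write
$$z_k^+=\frac{ih}{2y_3}\bigl((2k+1)i\pi - L + R_k''\bigr),\qquad R_k'':=R_k - \ln\!\bigl(1+(2k+1)i\pi/\ln w\bigr),$$
where $L=\ln(8\pi y_3 h^{-\beta})$. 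A quantitative control on $R_k''$ follows from Lemma~\ref{lemmaboundresto} combined with the elementary estimate $|\ln(1+x)-x|\le |x|^2$ for $|x|\le 1/2$, yielding $|R_k''-(2k+1)i\pi/\ln w| = O_\varepsilon(h^{2\beta-2})+O(h^\beta L/y_3)$.

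Taking real and imaginary parts of the resulting representation then gives
$$\Re z_k^+ = -\frac{h(2k+1)\pi}{2y_3}\Bigl(1-\frac{1}{\ln w}\Bigr)+\mathcal E_R,\qquad \Im z_k^+ = -\frac{hL}{2y_3} - \frac{h(2k+1)^2\pi^2}{4y_3 \ln^2 w} + \mathcal E_I,$$
with $\mathcal E_R,\mathcal E_I$ of order $O_\varepsilon(h^{\beta+1}L/y_3^2)+O_\varepsilon(h^{2\beta-1}/y_3)$ after invoking Lemma~\ref{lemmainequalityk} to bound $|k|\le 2y_3/(\pi h\varepsilon)$. Squaring the first relation and multiplying by $2y_3 L/(h(2k+1)^2\pi^2)$ produces
$$\frac{2y_3 L(\Re z_k^+)^2}{h(2k+1)^2\pi^2} = \frac{hL}{2y_3}\Bigl(1-\frac{1}{\ln w}\Bigr)^2 + \mathcal E' = \frac{hL}{2y_3} - \frac{hL}{y_3 \ln w} + \mathcal E',$$
so that adding $\Im z_k^+$ produces an exact cancellation of $-hL/(2y_3)$. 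The surviving residue $-hL/(y_3\ln w) = -h^{\beta+1}L/(8\pi y_3^2)$, together with $-h(2k+1)^2\pi^2/(4y_3\ln^2 w)$ and the combined error $\mathcal E'+\mathcal E_I$, assembles precisely into the right-hand side of the claim. The two-sided bounds \eqref{upperboundbeta>1}--\eqref{lowerboundbeta>1} then follow by inserting the two-sided estimate $\varepsilon/2\le |k|\pi h/y_3\le 2/\varepsilon$ of Lemma~\ref{lemmainequalityk} into the factor $2y_3 L/(h(2k+1)^2\pi^2)$.

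The delicate part of the argument is the simultaneous bookkeeping of three small-parameter expansions — the Lambert remainder $R_k$ controlled by Lemma~\ref{lemmaboundresto}, the Taylor remainder in $\ln(1+(2k+1)i\pi/\ln w)$, and the correction between $(\Re z)^2/(2k+1)^2\pi^2$ and $h^2/(4y_3^2)$ — all of which propagate through the real/imaginary decomposition and the squaring step. The main technical hurdle is verifying that the cancellation of the leading $-hL/(2y_3)$ is effective up to the stated orders $h^{\beta+1}L$ and $h^{2\beta-1}$, so that the explicit constants $\frac{1+96\varepsilon^{-4}}{4\pi y_3^2}$ and $\frac{\varepsilon^{-2}}{2\pi^2 y_3}$ emerge from the interplay between the residual $hL/(y_3\ln w)$ and the various $R_k''$ contributions.
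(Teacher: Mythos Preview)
Your overall strategy is correct and tracks the paper's proof closely: both arguments pivot on the observation that for $\beta>1$ the term $\ln w=8\pi y_3 h^{-\beta}$ dominates $(2k+1)i\pi$, factor it out of the iterated logarithm in \eqref{zkexpression}, and then control the remainder via Lemma~\ref{lemmaboundresto} together with Lemma~\ref{lemmainequalityk}. One small slip: since $R_k''=R_k-\ln(1+(2k+1)i\pi/\ln w)$ and $\ln(1+x)\approx x$, the leading term of $R_k''$ is $-(2k+1)i\pi/\ln w$, not $+(2k+1)i\pi/\ln w$; with the corrected sign your formula $\Re z_k^+\approx -\tfrac{h(2k+1)\pi}{2y_3}(1-1/\ln w)$ comes out consistently.

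Where your organisation differs from the paper is in how the cancellation of $-hL/(2y_3)$ is effected. You propose to expand $R_k''$ to second order, carry the term $-\tfrac{h(2k+1)^2\pi^2}{4y_3\ln^2 w}$ in $\Im z_k$, square the refined expression for $\Re z_k$, and then watch several sub-leading pieces recombine. The paper avoids this bookkeeping entirely: it bounds $\bigl|\Im z_k+\tfrac{hL}{2y_3}\bigr|$ and $\bigl|\Re z_k+\tfrac{(2k+1)\pi h}{2y_3}\bigr|$ separately (each by a direct estimate on the logarithmic correction plus the tail bound \eqref{boundresto}), and then handles the quadratic expression via the algebraic factoring
\[
\Bigl|\tfrac{2y_3L}{h(2k+1)^2\pi^2}(\Re z_k)^2-\tfrac{hL}{2y_3}\Bigr|=\tfrac{2y_3L}{h(2k+1)^2\pi^2}\,\Bigl|\Re z_k+\tfrac{(2k+1)\pi h}{2y_3}\Bigr|\cdot\Bigl|\Re z_k-\tfrac{(2k+1)\pi h}{2y_3}\Bigr|,
\]
so that a single triangle inequality finishes the job. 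This sidesteps the need for any second-order Taylor terms and makes the explicit constants $\tfrac{1+96\varepsilon^{-4}}{4\pi y_3^2}$ and $\tfrac{\varepsilon^{-2}}{2\pi^2 y_3}$ much easier to extract. Your route would also reach the result, but the squaring step propagates cross-terms with $\mathcal E_R$ that you have not yet controlled, and obtaining the stated numerical constants along your path would require noticeably more care than the factoring trick demands.
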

\begin{proof}
Since $\beta>1$, by the second inequality in \cref{inequalityk}, we have
\begin{equation*}
\abs{2k+1}\pi\le 2\abs{k}\pi+\pi\le4y_{3}\varepsilon^{-1}h^{-1}+\pi<8\pi y_{3}h^{-\beta}=\ln{(w)},
\end{equation*}
for $h$ small enough. Recalling \cref{zkexpression}, we write
\begin{equation*}
z_{k}=\frac{ih}{2y_{3}}\Biggl((2k+1)i\pi-\ln{\Bigl(8\pi y_{3}h^{-\beta}\Bigr)}-\ln{\biggl(1+\frac{i(2k+1)}{8y_{3}}h^{\beta}\biggr)}+R_{k}\Biggr)
\end{equation*}
and so we can ensure that the third term is small for $h$ small. Separating both the real and imaginary part gives
\begin{equation}
\label{realimaginarypartzkbeta>1}
\begin{cases}
\Re{z_{k}}=\frac{h}{2y_{3}}\Biggr(-(2k+1)\pi+\Arg{\biggr(1+\frac{i(2k+1)}{8y_{3}}h^{\beta}\biggr)}-\Im{R_{k}}\Biggr)\\
\Im{z_{k}}=\frac{h}{2y_{3}}\Biggr(-\ln{\Bigl(8\pi y_{3}h^{-\beta}\Bigr)}-\ln{\abs{1+\frac{i(2k+1)}{8y_{3}}h^{\beta}}}+\Re{R_{k}}\Biggr)
\end{cases}.
\end{equation}
We observe that
\begin{equation*}
\ln{\abs{1+\frac{i(2k+1)}{8y_{3}}h^{\beta}}}=\frac{1}{2}\ln\biggl(1+\frac{(2k+1)^{2}}{64y_{3}^{2}}h^{2\beta}\biggr)=\frac{1}{2}\ln{(1+t)},
\end{equation*}
with $t=\frac{(2k+1)^{2}}{64y_{3}^{2}}h^{2\beta}$. 
Considering the imaginary part in \cref{realimaginarypartzkbeta>1}, we get
\begin{equation}
\label{absImzk}
\abs{\Im{z_{k}}+\frac{h}{2y_{3}}\ln{\Bigl(8\pi y_{3}h^{-\beta}\Bigr)}}=\frac{h}{2y_{3}}\abs{\Re{R_{k}}-\ln{\abs{1+\frac{i(2k+1)}{8y_{3}}h^{\beta}}}}.
\end{equation}

To deal with $R_{k}$, we recall that $w=8\pi y_{3}h^{-\beta}$ and consider
\begin{equation*}
\begin{split}
\abs{\frac{\ln{\Bigl(\ln{(w)}+(2k+1)i\pi\Bigr)}}{\ln{(w)}+(2k+1)i\pi}} \le \frac{\abs{\ln{\abs{\ln{(w)}+(2k+1)i\pi}}+i\frac{\pi}{2}}}{\abs{8\pi y_{3}h^{-\beta}+(2k+1)i\pi}}
                                                                                                    =\frac{\ln{\Bigl(h^{-\beta}\Bigr)}}{h^{-\beta}}\frac{\abs{\ln{\abs{\ln{(w)}+(2k+1)i\pi}}+i\frac{\pi}{2}}}{\ln{\Bigl(h^{-\beta}\Bigr)}\abs{8\pi y_{3}+(2k+1)i\pi h^{\beta}}},
\end{split}
\end{equation*}
but
\begin{equation*}
\frac{\abs{\ln{\abs{\ln{(w)}+(2k+1)i\pi}}+i\frac{\pi}{2}}}{\ln{\Bigl(h^{-\beta}\Bigr)}\abs{8\pi y_{3}+(2k+1)i\pi h^{\beta}}}\le\frac{1}{h^{\beta}\ln{(h^{-\beta})}}\Biggl(\abs{\frac{\ln{\abs{\ln{(w)}+(2k+1)i\pi}}}{\ln{(w)}+(2k+1)\pi}}+\frac{\frac{\pi}{2}}{\abs{{\ln{(w)}+(2k+1)\pi}}}\Biggr).
\end{equation*}
Since
\begin{equation*}
\lim_{h\to 0^{+}}\frac{1}{h^{\beta}\ln{(h^{-\beta})}}\abs{\frac{\ln{\abs{\ln{(w)}+(2k+1)i\pi}}}{\ln{(w)}+(2k+1)\pi}}=\frac{1}{8\pi y_{3}}
\end{equation*}
\begin{equation*}
\lim_{h\to 0^{+}}\frac{1}{h^{\beta}\ln{(h^{-\beta})}}\frac{\frac{\pi}{2}}{\abs{{\ln{(w)}+(2k+1)\pi}}}=0\ ,
\end{equation*}
then
\begin{equation*}
\frac{\abs{\ln{\abs{\ln{(w)}+(2k+1)i\pi}}+i\frac{\pi}{2}}}{\ln{\Bigl(h^{-\beta}\Bigr)}\abs{8\pi y_{3}+(2k+1)i\pi h^{\beta}}}\le\frac{1}{6\pi y_{3}},
\end{equation*}
for $h$ small enough and finally this means that
\begin{equation}
\label{bounditerativelogarithmbeta>1}
\abs{\frac{\ln{\Bigl(\ln{(w)}+(2k+1)i\pi\Bigr)}}{\ln{(w)}+(2k+1)i\pi}}\le\frac{h^{\beta}\ln{\Bigl(h^{-\beta}\Bigr)}}{4\pi y_{3}}.
\end{equation}
By \cref{boundresto}, we can also write
\begin{equation}
\label{boundrestobeta>1}
\abs{R_{k}-\frac{\ln{\Bigl(\ln{(w)}+(2k+1)i\pi\Bigr)}}{\ln{(w)}+(2k+1)i\pi}}\le\frac{h^{2\beta}\ln^{2}{\Bigl(h^{-\beta}\Bigr)}}{8\pi^{2}y_{3}^{2}}.
\end{equation}

Turning back to \cref{absImzk}, we now have
\begin{multline*}
\abs{\Im{z_{k}}+\frac{h}{2y_{3}}\ln{\Bigl(8\pi y_{3}h^{-\beta}\Bigr)}}\le \frac{h}{2y_{3}}\abs{\Re{R_{k}}-\Re{\Biggl(\frac{\ln{\Bigl(\ln{(w)}+(2k+1)i\pi\Bigr)}}{\ln{(w)}+(2k+1)i\pi}\Biggr)}}\\
+\frac{h}{2y_{3}}\abs{\Re{\Biggl(\frac{\ln{\Bigl(\ln{(w)}+(2k+1)i\pi\Bigr)}}{\ln{(w)}+(2k+1)i\pi}\Biggr)}-\ln{\abs{1+\frac{i(2k+1)}{8y_{3}}h^{\beta}}}}\le\frac{h^{2\beta+1}\ln^{2}{\Bigl(h^{-\beta}\Bigr)}}{16\pi^{2}y_{3}^{3}}+\abs{A}.
\end{multline*}
We now find a bound for $\abs{A}$. After some manipulations we can write
\begin{equation*}
{A=\frac{\biggl(\ln{\Bigl(8\pi y_{3}h^{-\beta}\Bigr)}+\frac{1}{2}\ln{(1+t)}\biggr)\frac{h^{\beta+1}}{16\pi y_{3}^{2}}+\frac{(2k+1)h^{2\beta+1}}{128\pi y_{3}^{3}}\Arg{\biggl(8\pi y_{3}h^{-\beta}+(2k+1)i\pi\biggr)}}{1+t}-\frac{h}{4y_{3}}\ln{(1+t)}}.
\end{equation*}
By triangle inequality and $t>0$ follows that
\begin{align*}
\abs{A}&\le\frac{h^{\beta+1}}{16\pi y_{3}^{2}}\ln{\Bigl(8\pi y_{3}h^{-\beta}\Bigr)}+\frac{\abs{2k+1}h^{2\beta+1}}{128\pi y_{3}^{3}}\abs{\Arg{\biggl(8\pi y_{3}h^{-\beta}+(2k+1)i\pi\biggr)}}\\
&+\frac{h}{4y_{3}}\ln{(1+t)}+\frac{h^{\beta+1}}{32\pi y_{3}^{2}}\ln{(1+t)}.
\end{align*} 
This expression can be also bounded using $\abs{\Arg{\biggl(8\pi y_{3}h^{-\beta}+(2k+1)i\pi\biggr)}}\le\frac{\pi}{2}$, $\ln{(1+t)}\le t$, the expression for $t$, $(2k+1)^{2}\le2(4k^{2}+1)$ and the second inequality in \cref{inequalityk} as
\begin{equation*}
\abs{A}\le\frac{h^{\beta+1}}{16\pi y_{3}^{2}}\ln{\Bigl(8\pi y_{3}h^{-\beta}\Bigr)}+\frac{h^{3\beta+1}}{1024\pi y_3^{4}}+\frac{\varepsilon^{-2}h^{3\beta-1}}{64\pi^{3}y_{3}^{2}}+\frac{3h^{2\beta+1}}{256y_{3}^{3}}+\frac{\varepsilon^{-1}h^{2\beta}}{64\pi y_{3}^{2}}+\frac{\varepsilon^{-2}h^{2\beta-1}}{8\pi^{2}y_{3}}.
\end{equation*}
For $\beta>1$, the dominant terms for small $h$ are the first and the last term. More specifically the first is dominant for $\beta\ge 2$ and the other for $1<\beta<2$. So, for $h$ small enough
\begin{equation*}
\abs{A}\le\frac{h^{\beta+1}}{8\pi y_{3}^{2}}\ln{\Bigl(8\pi y_{3}h^{-\beta}\Bigr)}+\frac{\varepsilon^{-2}h^{2\beta-1}}{4\pi^{2}y_{3}},
\end{equation*}
which implies
\begin{equation*}
\begin{split}
\abs{\Im{z_{k}}+\frac{h}{2y_{3}}\ln{\Bigl(8\pi y_{3}h^{-\beta}\Bigr)}} \le \frac{h^{2\beta+1}\ln^{2}{\Bigl(h^{-\beta}\Bigr)}}{16\pi^{2}y_{3}^{3}}+\frac{h^{\beta+1}}{8\pi y_{3}^{2}}\ln{\Bigl(8\pi y_{3}h^{-\beta}\Bigr)}+\frac{\varepsilon^{-2}h^{2\beta-1}}{4\pi^{2}y_{3}}
                                                                                                                 \le \frac{h^{\beta+1}}{4\pi y_{3}^{2}}\ln{\Bigl(2\pi y_{3}h^{-\beta}\Bigr)}+\frac{\varepsilon^{-2}h^{2\beta-1}}{2\pi^{2}y_{3}},
\end{split}
\end{equation*}
for $h$ sufficiently small.

Now we look at the real part of \cref{realimaginarypartzkbeta>1}
\begin{multline*}
\abs{\Re{z_{k}}+\frac{(2k+1)\pi h}{2y_{3}}}=\frac{h}{2y_{3}}\abs{\Arg{\biggr(1+\frac{i(2k+1)}{8y_{3}}h^{\beta}\biggr)}-\Im{R_{k}}}\\
\le\frac{h}{2y_{3}}\abs{\Arg{\biggr(1+\frac{i(2k+1)}{8y_{3}}h^{\beta}\biggr)}}+\frac{h}{2y_{3}}\abs{\frac{\ln{\Bigl(\ln{(w)}+(2k+1)i\pi\Bigr)}}{\ln{(w)}+(2k+1)i\pi}-R_{k}}
+\frac{h}{2y_{3}}\abs{\frac{\ln{\Bigl(\ln{(w)}+(2k+1)i\pi\Bigr)}}{\ln{(w)}+(2k+1)i\pi}}.
\end{multline*}
Using that $\abs{\Arg{(1+is)}}\le 2\abs{s}$ for small $s$, the second inequality in \cref{inequalityk}, \cref{bounditerativelogarithmbeta>1} and \cref{boundrestobeta>1} we get the bound
\begin{equation}
\label{boundRezkbeta>1}
\abs{\Re{z_{k}}+\frac{(2k+1)\pi h}{2y_{3}}}\le\frac{\varepsilon^{-1}h^{\beta}}{2\pi y_{3}}+\frac{h^{\beta+1}}{8y_{3}^{2}}+\frac{h^{2\beta+1}\ln^{2}{\Bigl(h^{-\beta}\Bigr)}}{16\pi^{2}y_{3}^{3}}+\frac{h^{\beta+1}\ln{\Bigl(h^{-\beta}\Bigr)}}{8\pi y_{3}^{2}}\le\frac{\varepsilon^{-1}h^{\beta}}{\pi y_{3}},
\end{equation}
the last inequality being valid for $h$ small enough.

This new inequality can be used to bound the following quantity 
\begin{equation*}
B=\abs{\frac{2y_{3}}{h(2k+1)^{2}\pi^{2}}\ln{\Bigl(8\pi y_{3}h^{-\beta}\Bigr)}\biggl(\Re{z_{k}}\biggr)^{2}-\frac{h}{2y_{3}}\ln{\Bigl(8\pi y_{3}h^{-\beta}\Bigr)}}.
\end{equation*}
By factorizing, \cref{boundRezkbeta>1} and by the second inequalities in \cref{inequalityk} and \cref{inequalityzk} respectively
\begin{equation*}
\begin{split}
B  \le \frac{2y_{3}}{h(2k+1)^{2}\pi^{2}}\ln{\Bigl(8\pi y_{3}h^{-\beta}\Bigr)}\Biggl(\abs{z_{k}}+\frac{\abs{2k+1}\pi h}{2y_{3}}\Biggr)\abs{\Re{z_{k}}+\frac{(2k+1)\pi h}{2y_{3}}}
   \le \frac{6\varepsilon^{-2}}{(2k+1)^{2}\pi^{3}}h^{\beta-1}\ln{\Bigl(8\pi y_{3}h^{-\beta}\Bigr)}.
\end{split}
\end{equation*}
Now we observe that $(2k+1)^{2}\ge k^{2}$ for all $ k\in\Z$. Using this fact, together with the reciprocal of the first inequality of \cref{inequalityk}, we have
\begin{equation*}
B\le\frac{24\varepsilon^{-4}}{\pi y_{3}^{2}}h^{\beta+1}\ln{\Bigl(8\pi y_{3}h^{-\beta}\Bigr)}.
\end{equation*}
Finally we can find the bound
\begin{multline*}
\abs{\Im{z_{k}+\frac{2y_{3}\ln{\Bigl(8\pi y_{3}h^{-\beta}\Bigr)}}{h(2k+1)^{2}\pi^{2}}\biggl(\Re{z_{k}}\biggr)^{2}}}\le\abs{\Im{z_{k}}+\frac{h}{2y_{3}}\ln{\Bigl(8\pi y_{3}h^{-\beta}\Bigr)}}\\
+\abs{\frac{2y_{3}\ln{\Bigl(8\pi y_{3}h^{-\beta}\Bigr)}}{h(2k+1)^{2}\pi^{2}}\biggl(\Re{z_{k}}\biggr)^{2}-\frac{h}{2y_{3}}\ln{\Bigl(8\pi y_{3}h^{-\beta}\Bigr)}}
\le\frac{1+96\varepsilon^{-4}}{4\pi y_{3}^{2}}h^{\beta+1}\ln{\Bigl(8\pi y_{3}h^{-\beta}\Bigr)}+\frac{\varepsilon^{-2}h^{2\beta-1}}{2\pi^{2}y_{3}}.
\end{multline*}
To get \cref{upperboundbeta>1} and \cref{lowerboundbeta>1} we use respectively that for all $k\in\Z$, $k\ne 0$, $k^{2}\le(2k+1)^{2}\le16k^{2}$, together with \cref{inequalityk}.
\end{proof}
Now we consider $\beta>1$ for \cref{semiclassicalequation}$^{-}$.
\begin{proposition}
\label{propositionbeta>1negative}
Let $\beta>1$ and $\varepsilon\in (0, 1)$ be given. Then there is $h_{0}>0$ such that, when $h\in(0, h_{0}]$, all solutions to \cref{semiclassicalequation}$^{-}$ satisfying
\begin{equation}
\label{inequalityzknegative2}
\varepsilon\le\abs{z}\le {1}/{\varepsilon}
\end{equation}
obey
\begin{equation*}
\abs{\Im{z}+\frac{y_{3}\ln{\Bigl(8\pi y_{3}h^{-\beta}\Bigr)}}{2k^{2}\pi^{2}h}\biggl(\Re{z}\biggr)^{2}}\le\frac{1+24\varepsilon^{-4}}{4\pi y_{3}^{2}}h^{\beta+1}\ln{\Bigl(8\pi y_{3}h^{-\beta}\Bigr)}+\frac{\varepsilon^{-2}h^{2\beta-1}}{2\pi^{2}y_{3}},
\end{equation*}
which, using \cref{inequalityk}, implies
\begin{gather*}
\Im{z}+\frac{\varepsilon^{2}}{8y_{3}}h\ln{\Bigl(8\pi y_{3}h^{-\beta}\Bigr)}\Bigl(\Re{z}\Bigr)^{2}\le\frac{1+24\varepsilon^{-4}}{4\pi y_{3}^{2}}h^{\beta+1}\ln{\Bigl(8\pi y_{3}h^{-\beta}\Bigr)}+\frac{\varepsilon^{-2}h^{2\beta-1}}{2\pi^{2}y_{3}}\\
\Im{z}+\frac{2}{\varepsilon^{2}y_{3}}h\ln{\Bigl(8\pi y_{3}h^{-\beta}\Bigr)}\Bigl(\Re{z}\Bigr)^{2}\ge-\frac{1+24\varepsilon^{-4}}{4\pi y_{3}^{2}}h^{\beta+1}\ln{\Bigl(8\pi y_{3}h^{-\beta}\Bigr)}-\frac{\varepsilon^{-2}h^{2\beta-1}}{2\pi^{2}y_{3}}.
\end{gather*}
\end{proposition}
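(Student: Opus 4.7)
My plan is to adapt the proof of Proposition~\ref{propositionbeta>1} to the negative-sign case, where now $w = e^{-8\pi y_3 h^{-\beta}}$, so that $\ln w = -8\pi y_3 h^{-\beta}$ has large negative (rather than positive) real part. The starting point will be the Lambert~$W$ representation~\eqref{zkexpressionnegative} of $z_k^-$. Under the hypothesis $\varepsilon \leq |z_k^-| \leq 1/\varepsilon$ and $\beta > 1$, Lemma~\ref{lemmainequalityk} forces $|(2k+1)\pi| \ll 8\pi y_3 h^{-\beta} = |\ln w|$ for $h$ small, so I would factor
\begin{equation*}
\ln(\ln w + (2k+1)i\pi) = \ln(8\pi y_3 h^{-\beta}) + i\pi\sigma_k + \ln\!\left(1 - \tfrac{i(2k+1)h^\beta}{8y_3}\right),
\end{equation*}
where $\sigma_k \in \{\pm 1\}$ is the sign of $\Im\ln(-1)/\pi$ selected by the Lambert branch $W_k$ (essentially $\sigma_k = \sgn(k)$). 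Absorbing $i\pi\sigma_k$ into the leading $(2k+1)i\pi$ produces an effective even multiple of $i\pi$: specifically $2ki\pi$ for $k>0$ and $2(k+1)i\pi$ for $k<0$. Consequently $\Re z_k^-$ is asymptotically $-k\pi h/y_3$ instead of the $-(2k+1)\pi h/(2y_3)$ of the positive-sign case, which is the structural reason the $h(2k+1)^2\pi^2$ of Proposition~\ref{propositionbeta>1} is replaced by $2k^2\pi^2 h$ in the statement.

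Next I would separate real and imaginary parts of the resulting expression for $z_k^-$ and control $R_k$ via Lemma~\ref{lemmaboundresto}. The bounds~\eqref{bounditerativelogarithmbeta>1} and~\eqref{boundrestobeta>1} carry over unchanged, since they involve only $|\ln w| = 8\pi y_3 h^{-\beta}$, which is insensitive to the sign flip. This yields the pair of estimates
\begin{equation*}
\Bigl|\Im z_k^- + \tfrac{h}{2y_3}\ln(8\pi y_3 h^{-\beta})\Bigr| \leq \tfrac{h^{\beta+1}}{4\pi y_3^2}\ln(8\pi y_3 h^{-\beta}) + \tfrac{\varepsilon^{-2}h^{2\beta-1}}{2\pi^2 y_3}, \qquad \Bigl|\Re z_k^- + \tfrac{k\pi h}{y_3}\Bigr| \leq \tfrac{\varepsilon^{-1}h^\beta}{\pi y_3},
\end{equation*}
in direct parallel to the pair produced in the proof of Proposition~\ref{propositionbeta>1}, modulo the replacement of $(2k+1)/2$ by $k$ in the leading real part.

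The final step will be the same telescoping argument used in Proposition~\ref{propositionbeta>1}: I introduce $B := \bigl|\tfrac{y_3\ln(8\pi y_3 h^{-\beta})}{2k^2\pi^2 h}(\Re z_k^-)^2 - \tfrac{h}{2y_3}\ln(8\pi y_3 h^{-\beta})\bigr|$, factor the difference of squares, and bound $B$ using the real-part estimate above together with Lemma~\ref{lemmainequalityk}. Because the leading real part is now $-k\pi h/y_3$ rather than $-(2k+1)\pi h/(2y_3)$, the analogous chain gives $B \leq \tfrac{6\varepsilon^{-4}}{\pi y_3^2} h^{\beta+1}\ln(8\pi y_3 h^{-\beta})$, one quarter of the positive-sign counterpart, which explains the improved constant $1 + 24\varepsilon^{-4}$ (in place of $1 + 96\varepsilon^{-4}$). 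Adding this to the $\Im z$ estimate via the triangle inequality yields the main bound, and the two one-sided inequalities then follow by inserting $\varepsilon/2 \leq |k|\pi h/y_3 \leq 2/\varepsilon$ from Lemma~\ref{lemmainequalityk} to convert $1/k^2$ into the extremal factors $\varepsilon^2/(8y_3)$ and $2/(\varepsilon^2 y_3)$.

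The hardest point will be tracking the branch $\sigma_k$ cleanly, since the Lambert~$W_k$ series~\eqref{seriesLambert} was originally exploited in the large-$|w|$ regime of the positive-sign case and one must verify it remains valid here for $w \to 0^+$ with $\ln(-w)$ interpreted via the appropriate branch for each sign of $k$. Once this bookkeeping is resolved, the algebraic manipulations mirror those of Proposition~\ref{propositionbeta>1} essentially verbatim.
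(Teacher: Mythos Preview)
Your proposal is correct and follows essentially the same route as the paper's proof: the paper factors $\ln(\ln w + (2k+1)i\pi)$ exactly as you describe (absorbing the $i\pi$ from $\ln(-1)$ into the leading $(2k+1)i\pi$ to obtain $2ki\pi$), reuses the bounds \eqref{bounditerativelogarithmbeta>1} and \eqref{boundrestobeta>1} verbatim, and then runs the same telescoping argument with $B$ to arrive at the constant $1+24\varepsilon^{-4}$. Your attention to the branch $\sigma_k$ is actually more careful than the paper's own treatment, which simply writes $2ki\pi$ uniformly without distinguishing the sign of $k$.
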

\begin{proof}
Since $\beta>1$, by the second inequality in \cref{inequalityk}, we have
\begin{equation*}
\abs{2k+1}\pi\le 2\abs{k}\pi+\pi\le 4y_{3}\varepsilon^{-1}h^{-1}+\pi<8\pi y_{3}h^{-\beta}=\abs{\ln{(w)}},
\end{equation*}
for $h$ small enough. Recalling \cref{zkexpressionnegative}, we write
\begin{equation*}
z_{k}=\frac{ih}{2y_{3}}\Biggl(2i\pi k-\ln{\Bigl(8\pi y_{3}h^{-\beta}\Bigr)}-\ln{\biggl(1-\frac{i(2k+1)}{8y_{3}}h^{\beta}\biggr)}+R_{k}\Biggr)
\end{equation*}
and so we can ensure that the third term is small for $h$ small. Separating both the real and imaginary part gives
\begin{equation}
\label{realimaginarypartzkbeta>1negative}
\begin{cases}
\Re{z_{k}}=\frac{h}{2y_{3}}\Biggr(-2k\pi+\Arg{\biggr(1-\frac{i(2k+1)}{8y_{3}}h^{\beta}\biggr)}-\Im{R_{k}}\Biggr)\\
\Im{z_{k}}=\frac{h}{2y_{3}}\Biggr(-\ln{\Bigl(8\pi y_{3}h^{-\beta}\Bigr)}-\ln{\abs{1-\frac{i(2k+1)}{8y_{3}}h^{\beta}}}+\Re{R_{k}}\Biggr)
\end{cases}.
\end{equation}
We observe that
\begin{equation*}
\ln{\abs{1-\frac{i(2k+1)}{8y_{3}}h^{\beta}}}=\frac{1}{2}\ln\biggl(1+\frac{(2k+1)^{2}}{64y_{3}^{2}}h^{2\beta}\biggr)=\frac{1}{2}\ln{(1+t)},
\end{equation*}
with $t=\frac{(2k+1)^{2}}{64y_{3}^{2}}h^{2\beta}$. 

\cref{bounditerativelogarithmbeta>1} and \cref{boundrestobeta>1} are true and can be proved as done in \cref{propositionbeta>1}. So, by \cref{realimaginarypartzkbeta>1negative}
\begin{multline*}
\abs{\Im{z_{k}}+\frac{h}{2y_{3}}\ln{\Bigl(8\pi y_{3}h^{-\beta}\Bigr)}}\le \frac{h}{2y_{3}}\abs{\Re{R_{k}}-\Re{\Biggl(\frac{\ln{\Bigl(\ln{(w)}+(2k+1)i\pi\Bigr)}}{\ln{(w)}+(2k+1)i\pi}\Biggr)}}\\
+\frac{h}{2y_{3}}\abs{\Re{\Biggl(\frac{\ln{\Bigl(\ln{(w)}+(2k+1)i\pi\Bigr)}}{\ln{(w)}+(2k+1)i\pi}\Biggr)}-\ln{\abs{1-\frac{i(2k+1)}{8y_{3}}h^{\beta}}}}\le\frac{h^{2\beta+1}\ln^{2}{\Bigl(h^{-\beta}\Bigr)}}{16\pi^{2}y_{3}^{3}}+\abs{A},
\end{multline*}
where
\begin{equation*}
{A=\frac{-\biggl(\ln{\Bigl(8\pi y_{3}h^{-\beta}\Bigr)}+\frac{1}{2}\ln{(1+t)}\biggr)\frac{h^{\beta+1}}{16\pi y_{3}^{2}}+\frac{(2k+1)h^{2\beta+1}}{128\pi y_{3}^{3}}\Arg{\biggl(-8\pi y_{3}h^{-\beta}+(2k+1)i\pi\biggr)}}{1+t}-\frac{h}{4y_{3}}\ln{(1+t)}}.
\end{equation*}
By triangle inequality and $t>0$ it follows that
\begin{multline*}
\abs{A}\le\frac{h^{\beta+1}}{16\pi y_{3}^{2}}\ln{\Bigl(8\pi y_{3}h^{-\beta}\Bigr)}+\frac{\abs{2k+1}h^{2\beta+1}}{128\pi y_{3}^{3}}\abs{\Arg{\biggl(-8\pi y_{3}h^{-\beta}+(2k+1)i\pi\biggr)}}\\
+\frac{h}{4y_{3}}\ln{(1+t)}+\frac{h^{\beta+1}}{32\pi y_{3}^{2}}\ln{(1+t)}.
\end{multline*}
This expression can be also bounded using $\abs{\Arg{\biggl(8\pi y_{3}h^{-\beta}+(2k+1)i\pi\biggr)}}\le\pi$, $\ln{(1+t)}\le t$, the expression for $t$, $(2k+1)^{2}\le2(4k^{2}+1)$ and the second inequality in \cref{inequalityk} as
\begin{equation*}
\abs{A}\le\frac{h^{\beta+1}}{16\pi y_{3}^{2}}\ln{\Bigl(8\pi y_{3}h^{-\beta}\Bigr)}+\frac{\varepsilon^{-2}h^{3\beta-1}}{64\pi^{3}y_{3}^{2}}+\frac{h^{3\beta+1}}{1024\pi y_{3}^{4}}+\frac{\varepsilon^{-1}h^{2\beta}}{32\pi y_{3}^{2}}+\frac{h^{2\beta+1}}{64y_{3}^{3}}+\frac{\varepsilon^{-2}h^{2\beta-1}}{8\pi^{2}y_{3}}.
\end{equation*}
For $\beta>1$, the dominant term for small $h$ are the first and last term. More specifically the first is dominant for $\beta\ge 2$ and the other for $1<\beta<2$. What said implies that for $h$ small enough
\begin{equation*}
\abs{A}\le\frac{h^{\beta+1}}{8\pi y_{3}^{2}}\ln{\Bigl(8\pi y_{3}h^{-\beta}\Bigr)}+\frac{\varepsilon^{-2}h^{2\beta-1}}{4\pi^{2}y_{3}},
\end{equation*}
which implies
\begin{equation*}
\begin{split}
\abs{\Im{z_{k}}+\frac{h}{2y_{3}}\ln{\Bigl(8\pi y_{3}h^{-\beta}\Bigr)}} \le \frac{h^{2\beta+1}\ln^{2}{\Bigl(h^{-\beta}\Bigr)}}{16\pi^{2}y_{3}^{3}}+\frac{h^{\beta+1}}{8\pi y_{3}^{2}}\ln{\Bigl(8\pi y_{3}h^{-\beta}\Bigr)}+\frac{\varepsilon^{-2}h^{2\beta-1}}{4\pi^{2}y_{3}}
                                                                                                                \le \frac{h^{\beta+1}}{4\pi y_{3}^{2}}\ln{\Bigl(8\pi y_{3}h^{-\beta}\Bigr)}+\frac{\varepsilon^{-2}h^{2\beta-1}}{2\pi^{2}y_{3}},
\end{split}
\end{equation*}
for $h$ sufficiently small.

Now we look at the real part of \cref{realimaginarypartzkbeta>1negative}
\begin{multline*}
\abs{\Re{z_{k}}+\frac{k\pi h}{y_{3}}}=\frac{h}{2y_{3}}\abs{\Arg{\biggr(1-\frac{i(2k+1)}{8y_{3}}h^{\beta}\biggr)}-\Im{R_{k}}}\\
\le\frac{h}{2y_{3}}\abs{\Arg{\biggr(1-\frac{i(2k+1)}{8y_{3}}h^{\beta}\biggr)}}+\frac{h}{2y_{3}}\abs{\frac{\ln{\Bigl(\ln{(w)}+(2k+1)i\pi\Bigr)}}{\ln{(w)}+(2k+1)i\pi}-R_{k}}
+\frac{h}{2y_{3}}\abs{\frac{\ln{\Bigl(\ln{(w)}+(2k+1)i\pi\Bigr)}}{\ln{(w)}+(2k+1)i\pi}}.
\end{multline*}
Using that $\abs{\Arg{(1+is)}}\le 2\abs{s}$ for small $s$, the second inequality in \cref{inequalityk}, \cref{bounditerativelogarithmbeta>1} and \cref{boundrestobeta>1} we get the bound
\begin{equation}
\label{boundRezkbeta>1negative}
\abs{\Re{z_{k}}+\frac{k\pi h}{y_{3}}}\le\frac{\varepsilon^{-1}h^{\beta}}{2\pi y_{3}}+\frac{h^{\beta+1}}{8y_{3}^{2}}+\frac{h^{2\beta+1}\ln^{2}{\Bigl(h^{-\beta}\Bigr)}}{16\pi^{2}y_{3}^{3}}+\frac{h^{\beta+1}\ln{\Bigl(h^{-\beta}\Bigr)}}{8\pi y_{3}^{2}}\le\frac{\varepsilon^{-1}h^{\beta}}{\pi y_{3}},
\end{equation}
last inequality being valid for $h$ small enough.

This new inequality can be used to bound the quantity 
\begin{equation*}B=\abs{\frac{y_{3}}{2k^{2}\pi^{2}h}\ln{\Bigl(8\pi y_{3}h^{-\beta}\Bigr)}\biggl(\Re{z_{k}}\biggr)^{2}-\frac{h}{2y_{3}}\ln{\Bigl(8\pi y_{3}h^{-\beta}\Bigr)}}.
\end{equation*}
By factorizing, \cref{boundRezkbeta>1negative} and by the inequalities \cref{inequalityk} and \cref{inequalityzknegative2} 
\begin{equation*}
\begin{split}
B & \le \frac{y_{3}}{2k^{2}\pi^{2}h}\ln{\Bigl(8\pi y_{3}h^{-\beta}\Bigr)}\Biggl(\abs{z_{k}}+\frac{\abs{k}\pi h}{y_{3}}\Biggr)\abs{\Re{z_{k}}+\frac{k\pi h}{y_{3}}}\\
  & \le \frac{6\varepsilon^{-4}}{\pi y_{3}^{2}}h^{\beta+1}\ln{\Bigl(8\pi y_{3}h^{-\beta}\Bigr)}.
\end{split}
\end{equation*}
Finally we can find the desired bound
\begin{align*}
\abs{\Im{z_{k}+\frac{y_{3}\ln{\Bigl(8\pi y_{3}h^{-\beta}\Bigr)}}{2k^{2}\pi^{2}h}\biggl(\Re{z_{k}}\biggr)^{2}}}&\le\abs{\Im{z_{k}}+\frac{h}{2y_{3}}\ln{\Bigl(8\pi y_{3}h^{-\beta}\Bigr)}}
+\abs{\frac{2y_{3}\ln{\Bigl(8\pi y_{3}h^{-\beta}\Bigr)}}{h(2k+1)^{2}\pi^{2}}\biggl(\Re{z_{k}}\biggr)^{2}-\frac{h}{2y_{3}}\ln{\Bigl(8\pi y_{3}h^{-\beta}\Bigr)}}\\
&\le\frac{1+24\varepsilon^{-4}}{4\pi y_{3}^{2}}h^{\beta+1}\ln{\Bigl(8\pi y_{3}h^{-\beta}\Bigr)}+\frac{\varepsilon^{-2}h^{2\beta-1}}{2\pi^{2}y_{3}}.
\end{align*}
\end{proof}
\subsection{Neumann Boundary Condition}
By the definition of $-h^{2}\Delta_{\pm h^{-\beta}, y}^{\textup{N}}$ it follows that the resonances are the solutions of
\begin{equation}
\label{semiclassicalequationN}
\pm h^{-\beta}-\frac{iz}{4\pi h}-\frac{e^{2iy_{3}\frac{z}{h}}}{8\pi y_{3}}=0,
\end{equation}
which can be rewritten as $w_{\pm}=x_{\pm}e^{x_{\pm}}$, where $x_{\pm}=\pm 8\pi y_{3}h^{-\beta}-2iy_{3}\frac{z}{h}$ and $w_{\pm}=e^{\pm 8\pi y_{3}h^{-\beta}}$. So the solutions are still expressed in terms of the Lambert W function, with the only difference of it having a positive argument this time. A corresponding asymptotic expansion is still valid (one has only to substitute $2k+1$ with $2k$, reason being the change in sign of $w$)
\begin{gather}
W_{k}(y)= \ln{(w)}+2ik\pi -\ln{\Bigl(\ln{(w)}+2ik\pi\Bigr)}+R_{k}\nonumber\\
\label{seriesLambertN}
R_{k}=\sum_{j=0}^{+\infty}\sum_{m=1}^{+\infty}c_{j, m}\frac{\ln^{m}{\Bigl(\ln{(w)}+2ik\pi \Bigr)}}{\Bigl(\ln{(w)}+2ik\pi\Bigr)^{j+m}}.
\end{gather}

The expression for the resonances $z_{k}^{\pm}$ in term of the Lambert $W$ function is
\begin{equation}
\label{zkexpressionN}
z_{k}^{\pm}= \frac{ih}{2y_3}\biggl(2ik\pi-\ln{\Bigl(\ln{(y_{\pm})}+2ik\pi\Bigr)}+R_{k}\biggr).
\end{equation}

Properties of semi-classic asymptotic are here reported without explicit proofs, amounting to slight modifications of the ones of the Dirichlet case.
\begin{lemma}
\label{lemmaboundresto2}
The series \cref{seriesLambertN} is absolutely convergent for $w$ large (or small) enough and $k\in\Z$. More precisely we have the tail estimate
\begin{equation*}
\begin{split}
\abs{R_{k}-\frac{\ln{\Bigl(\ln{(w)}+2ik\pi\Bigr)}}{\ln{(w)}+2ik\pi}} \le \sum_{j\ge 0, m\ge 1, (j, m)\ne (0, 1)}\abs{c_{j, m}\frac{\ln^{m}{\Bigl(\ln{(w)}+2ik\pi\Bigr)}}{\Bigl(\ln{(w)}+2ik\pi\Bigr)^{j+m}}}
                                                                                                            \le 2\abs{\frac{\ln{\Bigl(\ln{(w)}+2ik\pi\Bigr)}}{\ln{(w)}+2ik\pi}}^{2}
\end{split}
\end{equation*}
and
\begin{equation*}
\abs{\frac{\ln{\Bigl(\ln{(w)}+2ik\pi\Bigr)}}{\ln{(w)}+2ik\pi}}\le\frac{1}{2}.
\end{equation*}
\end{lemma}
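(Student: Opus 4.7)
The plan is to reproduce the structure of Lemma \ref{lemmaboundresto} verbatim, making only the notational substitution $2k+1 \mapsto 2k$. The underlying reason for the lemma's validity is identical in both cases: the Lambert $W_k$ expansion converges absolutely once the iterated-logarithm ratio $q := |\ln(\ln(w)+2ik\pi)/(\ln(w)+2ik\pi)|$ is less than $1/2$, and the Neumann formula \eqref{seriesLambertN} differs from the Dirichlet one \eqref{seriesLambert} only through the imaginary shift in the branch selection. Consequently, the same Stirling-number identities $c_{0,m} = 1/m$ (from $\stirlingone{m}{1} = (m-1)!$) and the same geometric-series tail bounds are available.

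First I would verify the pointwise estimate $q \le 1/2$. Since $|\ln(w) + 2ik\pi| \to \infty$ as $|w|\to\infty$ or $|w|\to 0$ (with the dominant term being $\ln(w)$ or $2k\pi$, whichever is larger), and $\ln|\zeta|/|\zeta|\to 0$ as $|\zeta|\to\infty$, the bound follows for $|w|$ large or small enough, uniformly in $k \in \mathbb{Z}$.

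Next, for the tail estimate I would split the series as in the proof of Lemma \ref{lemmaboundresto}:
\begin{equation*}
S_1 = \sum_{m=2}^{+\infty}\abs{c_{0,m}\frac{\ln^m(\ln(w)+2ik\pi)}{(\ln(w)+2ik\pi)^m}}, \qquad S_2 = \sum_{m=1}^{+\infty}\sum_{j=1}^{+\infty}\abs{c_{j,m}\frac{\ln^m(\ln(w)+2ik\pi)}{(\ln(w)+2ik\pi)^{j+m}}}.
\end{equation*}
The $S_1$ piece is bounded by $\frac{1}{2}\sum_{m\ge 2} q^m = q^2/(2(1-q)) \le q^2$, using $c_{0,m} = 1/m$ and $q \le 1/2$. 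For $S_2$, the absolute convergence of the double Lambert series for large or small $w$ guarantees that the inner sum over $j$, multiplied by $q^m$, tends to zero as $m\to\infty$; moreover, by the monotonicity of each term in $w$ (increasing as $w\to 0$ and decreasing as $w\to\infty$), one can select a threshold $N$ such that at the base point $w = N$ (resp. $w = 1/N$) every $m$-th term is bounded by $N$. Propagating this via the $\sgn(\ln(w))$ device exactly as in Lemma \ref{lemmaboundresto} yields a further geometric series with ratio $p = q \cdot |(\sgn(\ln(w))\ln(N)+2ik\pi)/\ln(\sgn(\ln(w))\ln(N)+2ik\pi)| < 1/2$, and summing gives $S_2 \le q^2$ as well. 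Combining, one obtains $S_1 + S_2 \le 2q^2$, which is the claimed tail estimate.

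The only potentially delicate step is to check that the monotonicity arguments in $w$ survive intact under the shift $2k+1 \mapsto 2k$; since this substitution does not alter the dependence on $w$ of any of the factors involved (only the constant imaginary offset changes), the Dirichlet proof carries over without modification. This is exactly why the author can omit the proof: no new analytic ingredient is required beyond what was already deployed in Lemma \ref{lemmaboundresto}.
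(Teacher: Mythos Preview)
Your proposal is correct and matches the paper's approach exactly: the paper omits the proof of Lemma~\ref{lemmaboundresto2}, stating that it amounts to a slight modification of Lemma~\ref{lemmaboundresto}, and you have identified precisely what that modification is (the substitution $2k+1\mapsto 2k$) and verified that every step---the Stirling identity $c_{0,m}=1/m$, the geometric bound on $S_1$, and the monotonicity/threshold argument for $S_2$---carries over unchanged since only the constant imaginary offset is affected.
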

\begin{lemma}
\label{lemmainequalitykN}
Let $\varepsilon\in (0, 1)$ ve given. Then, for $k$ such that $z_{k}$ is given by \cref{zkexpressionN}  and 
$\varepsilon\le\abs{z_{k}}\le1/\varepsilon$
we have
\begin{equation}
\label{inequalitykN}
\frac{\varepsilon}{2}\le\frac{\abs{k}\pi h}{y_{3}}\le\frac{2}{\varepsilon}.
\end{equation}
\end{lemma}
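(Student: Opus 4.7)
The plan is to imitate the proof of Lemma~\ref{lemmainequalityk} essentially verbatim, with the single substitution $(2k+1)\pi\leadsto 2k\pi$ forced by the Neumann form \eqref{zkexpressionN} of the Lambert $W$ expansion. The starting ingredient is $\abs{R_k}\le 1$, obtained exactly as in the Dirichlet case: Lemma~\ref{lemmaboundresto2} provides the tail estimate together with the inequality $\abs{\ln(\ln(w)+2ik\pi)/(\ln(w)+2ik\pi)}\le 1/2$, and the reverse triangle inequality then forces $\bigl|\,\abs{R_k}-1/2\,\bigr|\le 1/2$.

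For the lower bound in \eqref{inequalitykN}, I would argue by contradiction. Assume $\abs{k}\pi h/y_3 < \varepsilon/2$. The triangle inequality applied to \eqref{zkexpressionN}, together with $\abs{R_k}\le 1$ and $\varepsilon\le\abs{z_k}$, yields (for $h$ small) an estimate of the form $\varepsilon/2 < (h/y_3)\abs{\ln(\ln(w)+2ik\pi)}$, and hence $\abs{\ln(w)+2ik\pi}\ge e^{\varepsilon y_3/(2h)}$. On the other hand $\abs{\ln(w)+2ik\pi}\le \abs{\ln(w)}+2\abs{k}\pi\le 8\pi y_3 h^{-\beta}+\varepsilon y_3/h$, which is only polynomial in $1/h$, and the exponential-versus-polynomial comparison produces a contradiction as $h\to 0^+$.

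For the upper bound, assume $\abs{k}\pi h/y_3 > 2/\varepsilon$. The reverse triangle inequality applied to \eqref{zkexpressionN}, combined with $\abs{z_k}\le 1/\varepsilon$, gives, for $h$ small, the bound $2\abs{k}\pi - 3y_3/(\varepsilon h)\le \bigl|\ln\abs{\ln(w)+2ik\pi}\bigr|$, and consequently $e^{2\abs{k}\pi - 3y_3/(\varepsilon h)} - 2\abs{k}\pi \le \abs{\ln(w)}+\pi = 8\pi y_3 h^{-\beta}+\pi$. Monotonicity of $f(x)=e^{x-a}-x$ on $(a,\infty)$, taken with $a=3y_3/(\varepsilon h)$ and the hypothesis $2\abs{k}\pi>4y_3/(\varepsilon h)>a$, then gives $e^{y_3/(\varepsilon h)}\le 8\pi y_3 h^{-\beta}+4y_3/(\varepsilon h)+\pi$, again absurd for $h$ small.

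The whole argument is essentially mechanical once $\abs{R_k}\le 1$ is in place; the shift from $(2k+1)$ to $2k$ preserves the leading behaviour $\sim 2\abs{k}\pi$ on which each exponential-polynomial comparison hinges, so no genuinely new obstacle arises. The only care required is bookkeeping the numerical constants after the substitution, which, as the paper itself remarks, amounts to a ``slight modification'' of the Dirichlet case.
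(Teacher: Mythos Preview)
Your proposal is correct and follows exactly the approach the paper intends: the paper does not actually prove Lemma~\ref{lemmainequalitykN} but states that the Neumann results ``are here reported without explicit proofs, amounting to slight modifications of the ones of the Dirichlet case,'' and your argument is precisely that modification---the proof of Lemma~\ref{lemmainequalityk} with $(2k+1)\pi$ replaced by $2k\pi$ and Lemma~\ref{lemmaboundresto2} in place of Lemma~\ref{lemmaboundresto}. The bookkeeping is right (in fact the $+\pi$ in your bound $e^{2\abs{k}\pi-3y_3/(\varepsilon h)}-2\abs{k}\pi\le\abs{\ln(w)}+\pi$ can even be dropped in the Neumann case, though keeping it does no harm).
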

\begin{proposition}
Let $\beta\in(0, 1)$ and $\varepsilon\in(0, 1)$ be given. Then there is $h_{0}>0$ such that, when $h\in(0, h_{0}]$, all solutions of \cref{semiclassicalequationN} satisfying
$\varepsilon\le\abs{z}\leq {1}/{\varepsilon}$
obey
\begin{equation*}
0\le-\Im{z}-\frac{h}{2y_{3}}\ln{\Bigl(2y_{3}h^{-1}\abs{\Re{z}}\Bigr)}\le\frac{72\pi^{2}}{y_{3}}\varepsilon^{-2}h^{3-2\beta}.
\end{equation*}
\end{proposition}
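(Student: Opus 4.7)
The plan is to mimic exactly the structure of the Dirichlet proof for $\beta\in(0,1)$, keeping track of the two differences introduced by the Neumann boundary condition: a sign change in front of the exponential term in equation \eqref{semiclassicalequationN}, and the replacement of $(2k+1)i\pi$ by $2ik\pi$ in the Lambert $W$ expansion \eqref{seriesLambertN}. Neither change affects the leading asymptotics used in the estimate, so the same target inequality should follow.

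First I would start from the representation \eqref{zkexpressionN}, namely
$z_{k}^{\pm}=\frac{ih}{2y_3}\bigl(2ik\pi-\ln(\ln(w)+2ik\pi)+R_{k}\bigr)$
with $w=e^{\pm 8\pi y_3 h^{-\beta}}$, and invoke \cref{lemmainequalitykN} to obtain the two-sided control $\varepsilon/2\le |k|\pi h/y_3\le 2/\varepsilon$. Since $\beta<1$, this forces
$|2ik\pi|\ge 2|k|\pi\ge \varepsilon y_3 h^{-1}>8\pi y_3 h^{-\beta}=|\ln(w)|$
for $h$ small enough, so the dominant term in $\ln(\ln(w)+2ik\pi)$ is $\ln(2ik\pi)$. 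I would therefore write
$\ln(\ln(w)+2ik\pi)=\ln(2ik\pi)+\ln\bigl(1+\ln(w)/(2ik\pi)\bigr)$
and absorb the second logarithm into a modified remainder $R'_k = R_k-\ln(1+\ln(w)/(2ik\pi))$, using \cref{lemmaboundresto2} plus $|\ln(w)/(2ik\pi)|<1$ to obtain a uniform bound $|R'_k|\le 4$ exactly as in the Dirichlet argument (the constant is insensitive to the shift $2k+1\mapsto 2k$).

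Next I would split into real and imaginary parts, obtaining
$\Re z_k^{\pm} = -k\pi h/y_3 + (h/(2y_3))\Arg(2ik\pi) - (h/(2y_3))\Im R'_k$
and
$\Im z_k^{\pm} = -(h/(2y_3))\ln|2k\pi|+(h/(2y_3))\Re R'_k$,
which via the bound $|R'_k|\le 4$ and \eqref{inequalitykN} yield $|\Re z_k^{\pm}|\ge \varepsilon/3$ and $|\Im z_k^{\pm}|\le (h/y_3)\ln(4y_3/(\varepsilon h))$ for $h$ small. In parallel, I would square and add the real/imaginary parts of \eqref{semiclassicalequationN} (the only change from the Dirichlet computation is a sign in the middle cross-term, which is absorbed in a similar way), yielding
\begin{equation*}
-\tfrac{4y_3}{h}\Im z_k^{\pm} = 2\ln\!\bigl(2y_3 h^{-1}|\Re z_k^{\pm}|\bigr)+\ln(1+t),
\end{equation*}
with $t$ a sum of three nonnegative terms of orders $h^2\ln^2$, $h^{2-\beta}\ln$, and $h^{2-2\beta}$ divided by $4y_3^2(\Re z_k^{\pm})^2$. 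Using $\ln(1+t)\in[0,t]$ gives the lower bound $0$ in the thesis for free.

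For the upper bound, I would plug the estimates $|\Re z_k^{\pm}|\ge\varepsilon/3$ and $|\Im z_k^{\pm}|\le (h/y_3)\ln(4y_3/(\varepsilon h))$ into $t$, observing that, since $\beta<1$, the term $64\pi^2 y_3^2 h^{2-2\beta}$ dominates the other two as $h\to 0^+$. This yields $t\le 288\pi^2\varepsilon^{-2}h^{2-2\beta}$ exactly as in the Dirichlet case; dividing by $4y_3/h$ produces the stated constant $72\pi^2 \varepsilon^{-2}h^{3-2\beta}/y_3$. The main technical obstacle is checking that the Neumann sign flip does not spoil the $\ge 0$ lower inequality: one must verify that the cross-term in $t$ arising from $\mp 32\pi y_3^2 h^{-\beta-1}\Im z_k^{\pm}$ has the right sign (since $\Im z_k^{\pm}<0$, the Neumann sign still yields a nonnegative contribution for the $+$ case and is dominated in modulus by the $h^{2-2\beta}$ term for the $-$ case), so that the inequality $-4y_3 h^{-1}\Im z_k^{\pm}\ge 2\ln(2y_3h^{-1}|\Re z_k^{\pm}|)$ still holds. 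Apart from this book-keeping of signs, the proof is a transcription of the Dirichlet $\beta\in(0,1)$ argument.
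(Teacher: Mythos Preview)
Your proposal is correct and follows exactly the approach the paper intends: the paper omits the Neumann proof and states that it ``amount[s] to slight modifications'' of the Dirichlet argument, which is precisely what you carry out by replacing $(2k+1)i\pi$ with $2ik\pi$ and tracking the sign change in \eqref{semiclassicalequationN}. One simplification: your ``main technical obstacle'' about the cross-term sign in $t$ is not actually an obstacle, since after squaring and adding the real and imaginary parts of \eqref{semiclassicalequationN} the numerator of $t$ is the perfect square $\bigl(2y_3 h^{-1}\Im z_k^{\pm}\pm 8\pi y_3 h^{-\beta}\bigr)^2\ge 0$ regardless of the Neumann sign flip, so $t\ge 0$ and the lower bound $0\le -\Im z-\tfrac{h}{2y_3}\ln(2y_3 h^{-1}|\Re z|)$ are automatic.
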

\begin{proposition}
\label{propositionbeta>1N}
Let $\beta>1$ and $\varepsilon\in (0, 1)$ be given. Then there is $h_{0}>0$ such that, when $h\in(0, h_{0}]$, all solutions to \cref{semiclassicalequationN}$^{+}$ satisfying
$\varepsilon\le\abs{z}\leq {1}/{\varepsilon}$
obey
\begin{equation*}
\abs{\Im{z_{k}+\frac{2y_{3}}{hk^{2}\pi^{2}}\ln{\Bigl(8\pi y_{3}h^{-\beta}\Bigr)}\biggl(\Re{z_{k}}\biggr)^{2}}}\le\frac{1+96\varepsilon^{-4}}{4\pi y_{3}^{2}}h^{\beta+1}\ln{\Bigl(8\pi y_{3}h^{-\beta}\Bigr)}+\frac{\varepsilon^{-2}h^{2\beta-1}}{4\pi^{2}y_{3}},
\end{equation*}
which, using \cref{inequalitykN}, implies
\begin{gather*}
\Im{z}+\frac{\varepsilon^{2}}{2y_{3}}h\ln{\Bigl(8\pi y_{3}h^{-\beta}\Bigr)}\Bigl(\Re{z}\Bigr)^{2}\le\frac{1+96\varepsilon^{-4}}{4\pi y_{3}^{2}}h^{\beta+1}\ln{\Bigl(8\pi y_{3}h^{-\beta}\Bigr)}+\frac{\varepsilon^{-2}h^{2\beta-1}}{4\pi^{2}y_{3}}\\
\Im{z}+\frac{8}{\varepsilon^{2}y_{3}}h\ln{\Bigl(8\pi y_{3}h^{-\beta}\Bigr)}\Bigl(\Re{z}\Bigr)^{2}\ge-\frac{1+96\varepsilon^{-4}}{4\pi y_{3}^{2}}h^{\beta+1}\ln{\Bigl(8\pi y_{3}h^{-\beta}\Bigr)}-\frac{\varepsilon^{-2}h^{2\beta-1}}{4\pi^{2}y_{3}}.
\end{gather*}
\end{proposition}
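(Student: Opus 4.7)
The plan is to adapt the proof of Proposition \ref{propositionbeta>1} (and its negative-sign counterpart Proposition \ref{propositionbeta>1negative}) to the Neumann setting. The essential modification is purely combinatorial: the Neumann resolvent differs from the Dirichlet one by a sign in front of $h_{z,y}^{\textup{N}}$, which flips the sign of $w$ inside the Lambert $W$ equation. As recorded in \eqref{seriesLambertN} and \eqref{zkexpressionN}, this replaces $(2k+1)\pi$ by $2k\pi$ throughout the series expansion and in the resonance representation. I would therefore follow the same five-stage strategy as in Proposition \ref{propositionbeta>1}, replacing $(2k+1)$ by $2k$ at every occurrence.

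First, starting from the expression
\[
z_k^{+}=\frac{ih}{2y_3}\Bigl(2ik\pi-\ln\bigl(8\pi y_3 h^{-\beta}\bigr)-\ln\bigl(1+\tfrac{i\cdot 2k}{8y_3}h^{\beta}\bigr)+R_k\Bigr),
\]
valid because $\beta>1$ and Lemma \ref{lemmainequalitykN} yield $|2k\pi|<8\pi y_3 h^{-\beta}$ for $h$ small, I would split $z_k^+$ into real and imaginary parts exactly as in \eqref{realimaginarypartzkbeta>1}. The logarithmic modulus term expands as $\tfrac{1}{2}\ln(1+t)$ with $t=\tfrac{4k^{2}}{64y_3^{2}}h^{2\beta}=\tfrac{k^{2}}{16y_3^{2}}h^{2\beta}$, slightly simpler than the Dirichlet $t$ because $2k$ is even.

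Second, I would establish the Neumann analogues of the key tail bounds \eqref{bounditerativelogarithmbeta>1} and \eqref{boundrestobeta>1} by identical arguments: the ratio $\bigl|\ln(\ln(w)+2ik\pi)/(\ln(w)+2ik\pi)\bigr|$ is dominated by $h^{\beta}\ln(h^{-\beta})/(4\pi y_3)$ for $h$ small, and Lemma \ref{lemmaboundresto2} then controls the remainder by the square of this quantity. Using these bounds in the imaginary part of $z_k^+$ and applying the triangle inequality yields
\[
\Bigl|\Im z_k+\tfrac{h}{2y_3}\ln\bigl(8\pi y_3 h^{-\beta}\bigr)\Bigr|\le \tfrac{h^{\beta+1}}{4\pi y_3^{2}}\ln\bigl(8\pi y_3 h^{-\beta}\bigr)+\tfrac{\varepsilon^{-2}h^{2\beta-1}}{2\pi^{2}y_3},
\]
exactly as in the Dirichlet case.

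Third, I would obtain the Neumann analogue of \eqref{boundRezkbeta>1}:
\[
\Bigl|\Re z_k+\tfrac{k\pi h}{y_3}\Bigr|\le \tfrac{\varepsilon^{-1}h^{\beta}}{\pi y_3},
\]
valid for $h$ small, using $|\Arg(1+is)|\le 2|s|$ together with Lemma \ref{lemmainequalitykN} and the tail bounds above. Combining the two estimates via the factorization
\[
\tfrac{2y_3}{hk^{2}\pi^{2}}\ln\bigl(8\pi y_3 h^{-\beta}\bigr)(\Re z_k)^{2}-\tfrac{h}{2y_3}\ln\bigl(8\pi y_3 h^{-\beta}\bigr)
=\tfrac{2y_3\ln(8\pi y_3 h^{-\beta})}{hk^{2}\pi^{2}}\Bigl(\Re z_k-\tfrac{k\pi h}{y_3}\Bigr)\Bigl(\Re z_k+\tfrac{k\pi h}{y_3}\Bigr),
\]
and using $|\Re z_k|\le |z_k|\le 1/\varepsilon$ together with Lemma \ref{lemmainequalitykN}, gives the required absolute bound. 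The final estimate follows by the triangle inequality, and the two-sided inequalities are obtained exactly as in Proposition \ref{propositionbeta>1} by applying \eqref{inequalitykN} to turn $1/k^{2}$ into bounds in terms of $h$ and $\Re z_k$.

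The main obstacle is bookkeeping: the constants in the final estimate (namely the factor $1+96\varepsilon^{-4}$ and $1/(4\pi^{2}y_3)$ in place of $1/(2\pi^{2}y_3)$) differ slightly from the Dirichlet case because $2k$ replaces $2k+1$, so lower bounds like $(2k+1)^{2}\ge k^{2}$ are replaced by the sharper $(2k)^{2}=4k^{2}$, and upper bounds like $(2k+1)^{2}\le 16k^{2}$ are replaced by $(2k)^{2}=4k^{2}$. These sharper inequalities are what allow the constant $1/(4\pi^{2}y_3)$ instead of $1/(2\pi^{2}y_3)$, and one must carry the correct power of two through every estimate. No conceptual novelty is required, but care is needed to verify that the resulting coefficients match those stated.
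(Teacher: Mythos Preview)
Your proposal is correct and follows exactly the approach the paper indicates: the paper does not give an explicit proof of this proposition, stating instead that the Neumann results ``are here reported without explicit proofs, amounting to slight modifications of the ones of the Dirichlet case,'' and your plan implements precisely that modification, namely replacing $(2k+1)$ by $2k$ throughout the argument of Proposition~\ref{propositionbeta>1}. One small bookkeeping point to watch: in your displayed factorization the product $\bigl(\Re z_k - \tfrac{k\pi h}{y_3}\bigr)\bigl(\Re z_k + \tfrac{k\pi h}{y_3}\bigr)$ multiplied by $\tfrac{2y_3}{hk^2\pi^2}$ gives a constant term $\tfrac{2h}{y_3}$ rather than $\tfrac{h}{2y_3}$, so you should double-check the coefficient in the quantity $B$ against the stated constant $\tfrac{2y_3}{hk^2\pi^2}$ when you carry out the computation (compare the coefficient $\tfrac{y_3}{2hk^2\pi^2}$ appearing in the analogous Proposition~\ref{propositionbeta>1negative}, where $\Re z_k\approx -\tfrac{k\pi h}{y_3}$ as well).
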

\begin{proposition}
\label{propositionbeta>1negative}
Let $\beta>1$ and $\varepsilon\in (0, 1)$ be given. Then there is $h_{0}>0$ such that, when $h\in(0, h_{0}]$, all solutions to \cref{semiclassicalequationN}$^{-}$ satisfying
\begin{equation*}
\label{inequalityzknegative2N}
\varepsilon\le\abs{z}\leq {1}/{\varepsilon}
\end{equation*}
obey
\begin{equation*}
\abs{\Im{z_{k}+\frac{2y_{3}\ln{\Bigl(8\pi y_{3}h^{-\beta}\Bigr)}}{(2k-1)^{2}\pi^{2}h}\biggl(\Re{z_{k}}\biggr)^{2}}}\le\frac{1+96\varepsilon^{-4}}{4\pi y_{3}^{2}}h^{\beta+1}\ln{\Bigl(8\pi y_{3}h^{-\beta}\Bigr)}+\frac{\varepsilon^{-2}h^{2\beta-1}}{4\pi^{2}y_{3}},
\end{equation*}
which, using \cref{inequalitykN}, implies
\begin{gather*}
\Im{z}+\frac{\varepsilon^{2}}{32y_{3}}h\ln{\Bigl(8\pi y_{3}h^{-\beta}\Bigr)}\Bigl(\Re{z}\Bigr)^{2}\le\frac{1+24\varepsilon^{-4}}{4\pi y_{3}^{2}}h^{\beta+1}\ln{\Bigl(8\pi y_{3}h^{-\beta}\Bigr)}+\frac{\varepsilon^{-2}h^{2\beta-1}}{4\pi^{2}y_{3}}\\
\Im{z}+\frac{8}{\varepsilon^{2}y_{3}}h\ln{\Bigl(8\pi y_{3}h^{-\beta}\Bigr)}\Bigl(\Re{z}\Bigr)^{2}\ge-\frac{1+24\varepsilon^{-4}}{4\pi y_{3}^{2}}h^{\beta+1}\ln{\Bigl(8\pi y_{3}h^{-\beta}\Bigr)}-\frac{\varepsilon^{-2}h^{2\beta-1}}{4\pi^{2}y_{3}}.
\end{gather*}
\end{proposition}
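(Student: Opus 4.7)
The plan is to mirror the proof of the Dirichlet Proposition \ref{propositionbeta>1negative} with appropriate modifications reflecting the sign change in the exponential term and the indexing shift from $(2k+1)$ to $2k$ that characterizes the Neumann Lambert expansion \cref{seriesLambertN}. Starting from \cref{zkexpressionN} with $w=e^{-8\pi y_3 h^{-\beta}}$, the key algebraic step is to factor the argument of the logarithm as
\begin{equation*}
\ln\bigl(-8\pi y_{3}h^{-\beta}+2ik\pi\bigr) = i\pi + \ln\bigl(8\pi y_{3}h^{-\beta}\bigr) + \ln\!\biggl(1-\frac{ik\,h^{\beta}}{4y_{3}}\biggr),
\end{equation*}
so that after collecting terms one obtains
\begin{equation*}
z_{k} = \frac{ih}{2y_{3}}\biggl((2k-1)i\pi - \ln\bigl(8\pi y_{3}h^{-\beta}\bigr) - \ln\!\biggl(1-\frac{ik\,h^{\beta}}{4y_{3}}\biggr) + R_{k}\biggr).
\end{equation*}
This shift from $2k$ to $2k-1$, absent in the Neumann $^{+}$ case, is exactly what produces the $(2k-1)^{2}$ denominator in the thesis.

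Separating real and imaginary parts, the imaginary part contributes a leading $-\tfrac{h}{2y_{3}}\ln(8\pi y_{3}h^{-\beta})$ with corrections controlled by $\ln|1-\tfrac{ik h^{\beta}}{4y_{3}}|$ and by $\Re R_{k}$, while the real part contributes a leading $-\tfrac{(2k-1)\pi h}{2y_{3}}$ with corrections from $\Arg(1-\tfrac{ik h^{\beta}}{4y_{3}})$ and $\Im R_{k}$. Using \cref{lemmaboundresto2} together with the bound $|2k\pi|\le 4y_{3}\varepsilon^{-1}h^{-1}+\pi < 8\pi y_{3}h^{-\beta}$ (which holds for small $h$ by \cref{lemmainequalitykN} and $\beta>1$), one derives the Neumann analogues of \cref{bounditerativelogarithmbeta>1} and \cref{boundrestobeta>1}, namely
\begin{equation*}
\abs{\frac{\ln(\ln(w)+2ik\pi)}{\ln(w)+2ik\pi}} \le \frac{h^{\beta}\ln(h^{-\beta})}{4\pi y_{3}}, \qquad \abs{R_{k}-\frac{\ln(\ln(w)+2ik\pi)}{\ln(w)+2ik\pi}} \le \frac{h^{2\beta}\ln^{2}(h^{-\beta})}{8\pi^{2}y_{3}^{2}}.
\end{equation*}

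The rest of the argument proceeds in parallel with Proposition \ref{propositionbeta>1negative}. First one introduces the auxiliary quantity $A$ measuring the discrepancy between the correction coming from $R_{k}$ and the $\ln(1+t)$-type correction (with $t=\tfrac{(2k-1)^{2}}{64y_{3}^{2}}h^{2\beta}$); triangle inequality, the bounds above, $|\Arg(\cdot)|\le\pi$ and $\ln(1+t)\le t$ give
$\abs{A}\le\tfrac{h^{\beta+1}}{8\pi y_{3}^{2}}\ln(8\pi y_{3}h^{-\beta})+\tfrac{\varepsilon^{-2}h^{2\beta-1}}{4\pi^{2}y_{3}}$, hence the estimate on $|\Im z_{k}+\tfrac{h}{2y_{3}}\ln(8\pi y_{3}h^{-\beta})|$. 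Next one estimates the real-part discrepancy $|\Re z_{k}+\tfrac{(2k-1)\pi h}{2y_{3}}|\lesssim \tfrac{\varepsilon^{-1}h^{\beta}}{\pi y_{3}}$ using $|\Arg(1+is)|\le 2|s|$ for small $s$. These two bounds combine, via the factorization $(\Re z_{k})^{2}-\bigl(\tfrac{(2k-1)\pi h}{2y_{3}}\bigr)^{2}=\bigl(|\Re z_{k}|+\tfrac{|2k-1|\pi h}{2y_{3}}\bigr)\bigl(\Re z_{k}+\tfrac{(2k-1)\pi h}{2y_{3}}\bigr)$ and the bounds $|z_{k}|\le 1/\varepsilon$, $|k|\le 2y_{3}/(\varepsilon\pi h)$, to control
\begin{equation*}
B = \abs{\frac{2y_{3}\ln(8\pi y_{3}h^{-\beta})}{(2k-1)^{2}\pi^{2}h}\bigl(\Re z_{k}\bigr)^{2} - \frac{h}{2y_{3}}\ln(8\pi y_{3}h^{-\beta})};
\end{equation*}
a final triangle inequality yields the stated bound, and the two one-sided inequalities then follow by inserting the bounds $k^{2}\le(2k-1)^{2}\le 16k^{2}$ (for $|k|\ge 1$) together with \cref{inequalitykN}.

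The main difficulty is purely bookkeeping: tracking the constants through all the estimates and ensuring the correct $(2k-1)^{2}$ denominator emerges from the branch-cut identity above, rather than the $(2k+1)^{2}$ of the Dirichlet $^{+}$ case, the $(2k)^{2}$ of the Neumann $^{+}$ case, or the $(2k)^{2}$ of the Dirichlet $^{-}$ case (where the branch contribution cancels differently). Once that identity is in place, the proof is a transcription of the Dirichlet $^{-}$ argument, which is why the proposition is stated without a full proof.
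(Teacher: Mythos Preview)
Your sketch is correct and follows exactly the route the paper intends: the paper does not give an independent proof of this Neumann proposition but states that it is obtained by ``slight modifications'' of the Dirichlet argument, and your outline is precisely that modification. The crucial observation you identify --- that for the Neumann$^{-}$ case the branch identity
\[
\ln\bigl(-8\pi y_{3}h^{-\beta}+2ik\pi\bigr)=i\pi+\ln\bigl(8\pi y_{3}h^{-\beta}\bigr)+\ln\!\Bigl(1-\tfrac{ik\,h^{\beta}}{4y_{3}}\Bigr)
\]
shifts the leading real-part coefficient from $2k$ to $2k-1$ --- is exactly what distinguishes this case and produces the $(2k-1)^{2}$ denominator, in parallel with the $2k+1\to 2k$ shift in the Dirichlet$^{-}$ proof.

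One small bookkeeping slip: the auxiliary quantity $t$ arising from $\ln\bigl|1-\tfrac{ik\,h^{\beta}}{4y_{3}}\bigr|=\tfrac{1}{2}\ln(1+t)$ should be $t=\tfrac{k^{2}}{16y_{3}^{2}}h^{2\beta}$, not $\tfrac{(2k-1)^{2}}{64y_{3}^{2}}h^{2\beta}$, since the logarithmic correction involves $k$ rather than $2k-1$. This does not affect the argument (both are of the same order and bounded via \cref{inequalitykN}), and the final estimates go through unchanged.
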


\section{Resonance Expansion for the wave and Schr\"odinger Equation with Point Interaction}
In the present Section, we prove the resonance expansion for the solution of the wave equation and for the propagator of the Schr\"odinger equation on the half-space in the presence of a point interaction.\\ As regards the wave equation, a similar analysis in the presence of a smooth $L_{\textup{comp}}^{\infty}(\R^{n})$ potential on the whole space is given in \cite{DZ:MathematicalTheoryScatteringResonances} (pages 39-45 and 110-111). We recall that $L^{\infty}_{\textup{comp}}(\R^n)$ is the space of measurable, essentially bounded functions with compact support. In the following Theorem we will treat the abstract wave equation with generator given by a point interaction in $\R^3_+$ and initial data in $L^{2}_{\textup{comp}}(\R^3_+)$ and $H^{1}_{\textup{comp}}(\R^3_+)$, the space of square integrable function in $\R^3_+$ with compact support and the corresponding Sobolev space of order one. The restriction to data with compact support is needed, as in the standard case, to tame the exponentially diverging behavior of resonance at spatial infinity.
\begin{theorem}
\label{resonanceexpansionthm}
Let $\alpha\in\R$, $y=(y_{1}, y_{2}, y_{3})\in\R_{+}^{3}$ and 
suppose $w(t)$ is the solution of
\begin{equation}
\label{waveequationproblem}
\begin{cases}
\biggl(\frac{\partial^{2}}{\partial t^{2}}-\Delta_{\alpha, y}^{\textup{D}}\biggr)w(t)=0\\
w(0)=w_{0}\in H_{\textup{comp}}^{1}(\R_{+}^{3})\\
\frac{\partial w}{\partial t}(0)=w_{1}\in L_{\textup{comp}}^{2}(\R_{+}^{3})
\end{cases}.
\end{equation}
Let $\Set{z_{j}}$ be respectively the set of resonances including the possible eigenvalue of $\Delta_{\alpha, y}^{\textup{D}}$, and $R(z):=\Bigl(-\Delta_{\alpha, y}^{\textup{D}}-z^{2}\Bigr)^{-1}$
Then, for any $A>0$,
\begin{itemize}
\item[i)] if $\alpha\ne -\frac{1}{8\pi y_{3}}$
\begin{equation*}
w(t)=\sum_{\set{\Im{z_{j}}>-A}}e^{-iz_{j}t}f_{j}+\quad E_{A}(t),
\end{equation*}
where the sum is finite and
\begin{equation}
\label{definitionfj}
f_{j}=-\Res_{z=z_{j}}{\biggl(iR(z)w_{1}+zR(z)w_{0}\biggr)};
\end{equation}
\item[ii)] if $\alpha =-\frac{1}{8\pi y_{3}}$
\begin{equation*}
w(t)=\sum_{\set{\Im{z_{j}}>-A}}e^{-iz_{j}t}f_{j}+f_{0, 0}+tf_{0, 1}+\quad E_{A}(t),
\end{equation*}
where the sum is finite, $f_{j}$ are still defined as in \cref{definitionfj} and
\begin{equation*}
f_{0, 0}+tf_{0, 1}=-\Res_{z=0}{\biggl(iR(z)w_{1}+zR(z)w_{0}\biggr)}.
\end{equation*}
\end{itemize}
Moreover, for any compact, connected set $K$ such that $\supp{(w_{j})}\subset K$, $j=0, 1$, there exist constants $C_{K, A}$ and $T_{K, A}$ such that
\begin{equation*}
\norma{E_{A}(t)}_{H^{2}(K)}\le C_{K, A}e^{-tA}(\norma{w_{0}}_{H^{1}(K)}+\norma{w_{1}}_{L^{2}(K)}), \quad t\ge T_{K, A}.
\end{equation*}
\end{theorem}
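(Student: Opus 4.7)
The plan is the standard contour-deformation strategy (cf.\ \cite{DZ:MathematicalTheoryScatteringResonances}, Section 3.2): express $w(t)$ as a Cauchy-type integral of the resolvent along the line $\{\Im z = c > 0\}$ and shift the contour downward to $\{\Im z = -A\}$, collecting residues at the finitely many poles in between. The starting identity is
\begin{equation*}
w(t) \;=\; \frac{1}{2\pi i}\int_{\Im z = c} e^{-izt}\bigl(iR(z)w_1 + zR(z)w_0\bigr)\,dz,\qquad t>0,
\end{equation*}
which follows from Stone's formula applied to $\cos(t\sqrt{H})$ and $\sin(t\sqrt{H})/\sqrt{H}$, with $c$ chosen above the pole at $z = i\sqrt{-\lambda}$ (if the negative eigenvalue $\lambda$ is present) and any positive real otherwise.

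By Proposition \ref{DirSpec}(i) the resonances satisfy $\Im z_j \to -\infty$ as $|\Re z_j|\to\infty$, so only finitely many lie in the strip $\{-A < \Im z < c\}$; for all but countably many $A$ the line $\Im z = -A$ avoids the resonance set. Applying Cauchy's theorem to rectangles $[-R_n, R_n]\times[-A, c]$, with $R_n\to\infty$ chosen in the ``gaps'' between successive resonances so that $|\Gamma_{\alpha,y}^{\textup{D}}|$ stays bounded from below on the vertical sides, yields
\begin{equation*}
w(t) \;=\; -\sum_{\{\Im z_j > -A\}}\Res_{z=z_j}\!\Bigl[e^{-izt}\bigl(iR(z)w_1 + zR(z)w_0\bigr)\Bigr] + E_A(t),
\end{equation*}
where $E_A(t)$ is defined by the analogous integral on $\{\Im z = -A\}$. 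At a simple resonance the residue equals $-e^{-iz_j t}f_j$ with $f_j$ as in \eqref{definitionfj}, recovering the first claim. In the critical case $\alpha = -(8\pi y_3)^{-1}$, using the expansion $(\Gamma_{\alpha,y}^{\textup{D}})^{-1} = -(4\pi/y_3)z^{-2} + (8\pi i/3)z^{-1} + O(1)$ (from the Remark following Proposition \ref{DirSpec}) together with $e^{-izt} = 1 - izt + O(z^2)$, a direct computation of the $z^{-1}$ coefficient of the integrand yields the $f_{0,0}+tf_{0,1}$ contribution of the second claim.

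For the error estimate, using \eqref{oneparticle3dresolvent} one decomposes
$R(z) = R_D(z) + (\Gamma_{\alpha,y}^{\textup{D}}(z))^{-1}\,[R_D(z)\,\cdot](y)\,G^{\textup{D}}_{z,y}$, with $R_D(z) := (-\Delta^{\textup{D}} - z^2)^{-1}$. Since $\supp w_j \subset K$ and $E_A(t)$ is measured on $K$, only $\chi_K R(z)\chi_K$ is relevant. The free Dirichlet resolvent admits standard polynomial cut-off bounds on $\{\Im z = -A\}$ (by the method of images and the corresponding $\R^3$ estimate, see \cite{DZ:MathematicalTheoryScatteringResonances}, Theorem 3.1), while $|\Gamma_{\alpha,y}^{\textup{D}}(z)|^{-1} \le C/|\Re z|$ for large $|\Re z|$ on $\{\Im z = -A\}$, thanks to the dominant linear term $-iz/(4\pi)$. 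Combined with $|e^{-izt}|=e^{-At}$, this renders the integrand absolutely integrable in $\Re z$ and gives the claimed $L^2(K)$-bound; elliptic regularity via $-\Delta_{\alpha,y}^{\textup{D}}$ then upgrades it to $H^2(K)$ with the $H^1$-norm of $w_0$ on the right-hand side.

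The chief technical obstacle is the vertical-sides step of the Cauchy deformation: because resonances accumulate along the logarithmic curve \eqref{bintermsofahalfspace2}, one must exhibit a sequence $R_n\to\infty$ along which $|\Gamma_{\alpha,y}^{\textup{D}}|$ is bounded below uniformly in the strip $\{-A \le \Im z \le c\}$. The lattice-like resonance structure of Proposition \ref{DirSpec}(i), with horizontal spacing $\pi/y_3$, guarantees the existence of such corridors; once this is granted, the remaining resolvent bounds on the horizontal contour and the elliptic upgrade are routine.
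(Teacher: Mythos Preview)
Your overall architecture (contour shift, residue collection) matches the paper's, but the error term estimate has a genuine gap: the integrand is \emph{not} absolutely integrable on the horizontal line $\{\Im z=-A\}$. On that line $|e^{-izt}|=e^{-At}$ is constant in $\Re z$, while the cut-off free resolvent satisfies only $\|\rho R_D(z)\rho\|_{L^2\to L^2}\le C(1+|z|)^{-1}$, and in the perturbation term the factors $[R_D(z)w_1](y)$ and $\chi_K G^{\textup D}_{z,y}$ are merely bounded (the Green's function does not decay when $\Im z<0$). So each piece of $\chi_K R(z)\chi_K w_1$ is at best $O(|\Re z|^{-1})$, which is not integrable; for the $w_0$ term the extra factor $z$ makes it worse. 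The final assertion that ``elliptic regularity upgrades to $H^2(K)$'' compounds this: the relevant bound is $\|\rho R(z)\rho\|_{L^2\to H^2}\le C(1+|z|)$, which \emph{grows}.

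The paper's fix is precisely to avoid horizontal lines. It deforms instead to the logarithmic contour
\[
\omega=\bigl\{s-i\bigl(A+\varepsilon+\delta\ln(1+|s|)\bigr):s\in\R\bigr\},\qquad \delta<\tfrac{1}{2y_3},
\]
(the constraint on $\delta$ ensures $\omega$ stays above the resonance curve \eqref{bintermsofahalfspace2}, so still only finitely many poles are enclosed). On $\omega$ one gains $|e^{-izt}|=e^{-(A+\varepsilon)t}(1+|s|)^{-t\delta}$, and combining this with the cut-off bound $\|\rho R(z)\rho\|_{L^2\to H^2}\le C(1+|z|)e^{T(\Im z)_-}$ yields an integrand of size $(1+|s|)^{-(t-T)\delta+1}$, which is integrable once $t>T+2/\delta$. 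This is where the threshold time $T_{K,A}$ comes from. Your starting formula on $\{\Im z=c\}$ has the same convergence issue; the paper instead begins from Stone's formula on the real axis and uses the identity $\rho(R(z)-R(-z))\rho=(1+z^2)^{-1}\rho(R(z)-R(-z))(1-\Delta_{\alpha,y}^{\textup D})\rho$ to secure the extra $|z|^{-2}$ decay needed for the preliminary contour manipulations.
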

The following technical lemma allows to bound on the meromorphically continued resolvent on $\C_{-}$ when smoothly truncated by means of a cutoff function $\rho$.
\begin{lemma}
Let $\alpha\in\R$ and $y=(y_{1}, y_{2}, y_{3})\in\R_{+}^{3}$ and $K\subset\R_{+}^{3}$ compact. For any $\rho\in C_{0}^{\infty}(\R_{+}^{3})$ there exist constants $A, C$ and $T$ depending on $\rho$ such that
\begin{equation}
\label{boundtruncatedresolvent}
\norma{\rho(-\Delta_{\alpha, y}^{\textup{D}}-z^{2})^{-1}\rho}_{L^{2}(K)\to H^{j}(K)}\le C(1+\abs{z})^{j-1}e^{T(\Im{z})_{-}}, \ \ \ j=0,1,2
\end{equation}
for
\begin{equation*}
\Im{z}\ge -A-\delta\ln{(1+\abs{z})}, \quad\abs{z}>C_{0}, \quad \delta<\frac{1}{2y_{3}}
\end{equation*}
where $z$ is not a resonance and it is denoted $(x)_{-}=\max\Set{-x, 0}$.\\
\end{lemma}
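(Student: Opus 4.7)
The plan is to exploit the rank-one structure of the resolvent formula \eqref{oneparticle3dresolvent}. Writing $R(z) := (-\Delta_{\alpha,y}^{\textup{D}}-z^{2})^{-1}$ and $R_{0}(z) := (-\Delta^{\textup{D}}-z^{2})^{-1}$, the cutoff operator splits as
\[
\rho R(z) \rho\, \psi \;=\; \rho R_{0}(z)(\rho\psi) \;+\; \Gamma_{\alpha,y}^{\textup{D}}(z)^{-1}\,\bigl[R_{0}(z)(\rho\psi)\bigr](y)\,\bigl(\rho G_{z,y}^{\textup{D}}\bigr),
\]
so it suffices to control each summand separately. I work under the natural geometric assumption $y \notin \supp\rho$, without which $\rho G_{z,y}^{\textup{D}}$ fails to lie in $H^{1}$ due to the $|x-y|^{-1}$ singularity. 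For the free summand, the image representation $G_{z,y}^{\textup{D}}(x) = \tfrac{e^{iz|x-y|}}{4\pi|x-y|} - \tfrac{e^{iz|x-\bar y|}}{4\pi|x-\bar y|}$ reduces the estimate to the standard cutoff-resolvent bound on $\R^{3}$ (see, e.g., \cite{DZ:MathematicalTheoryScatteringResonances}), yielding $\|\rho R_{0}(z) \rho\|_{L^{2}(K) \to H^{j}(K)} \leq C(1+|z|)^{j-1} e^{T(\Im z)_{-}}$ with $T$ determined by the support of $\rho$.

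The core of the argument is the lower bound on $|\Gamma_{\alpha,y}^{\textup{D}}(z)|$. From the explicit expression $\Gamma_{\alpha,y}^{\textup{D}}(z) = \alpha - \tfrac{iz}{4\pi} + \tfrac{e^{2iy_{3}z}}{8\pi y_{3}}$, the region $\Im z \geq -A - \delta \ln(1+|z|)$ with $\delta < 1/(2y_{3})$ gives
\[
|e^{2iy_{3}z}| \;=\; e^{-2y_{3}\Im z} \;\leq\; e^{2y_{3}A}(1+|z|)^{2y_{3}\delta},
\]
with exponent $2y_{3}\delta < 1$. Hence the exponential term is polynomially subdominant relative to the linear term $-iz/(4\pi)$ for $|z|$ large, so $|\Gamma_{\alpha,y}^{\textup{D}}(z)| \geq c|z|$ for $|z|>C_{0}$ and consequently $|\Gamma_{\alpha,y}^{\textup{D}}(z)^{-1}| \leq C/|z|$. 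The two remaining factors are controlled directly from the explicit Green's function: by Cauchy--Schwarz, $|[R_{0}(z)(\rho\psi)](y)| \leq \|\rho G_{z,y}^{\textup{D}}\|_{L^{2}}\|\psi\|_{L^{2}}$, while each derivative of $e^{iz r}/r$ (with $r = |x-y|$ or $|x-\bar y|$ bounded away from $0$ on $\supp\rho$) produces at most one factor of $1+|z|$, giving $\|\rho G_{z,y}^{\textup{D}}\|_{H^{j}(K)} \leq C(1+|z|)^{j} e^{T(\Im z)_{-}}$. Multiplying, the rank-one summand is bounded by $C(1+|z|)^{j-1} e^{2T(\Im z)_{-}}$, which is absorbed into the stated form by redefining $T$.

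The main obstacle is precisely the $\Gamma$-estimate above, together with the sharpness of the condition $\delta < 1/(2y_{3})$: by Proposition \ref{DirSpec}(i), the resonances of $-\Delta_{\alpha,y}^{\textup{D}}$ accumulate along the curve $\Im z \sim -\tfrac{1}{2y_{3}}\ln|\Re z|$, and for $|z|>C_{0}$ large enough the region $\Im z \geq -A - \delta\ln(1+|z|)$ lies strictly above this curve exactly when $\delta < 1/(2y_{3})$. The hypothesis ``$z^{2}$ not a resonance'' is thus automatic for $|z|>C_{0}$ in this region; only finitely many possible resonances at small $|z|$ need be excluded, and these are ruled out by taking $C_{0}$ sufficiently large.
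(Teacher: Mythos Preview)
Your proof is correct and follows the same route as the paper: decompose via the Kre\u{\i}n formula \eqref{oneparticle3dresolvent}, apply the free cutoff-resolvent bound from \cite{DZ:MathematicalTheoryScatteringResonances} to the $R_{0}$ term, and control $|\Gamma_{\alpha,y}^{\textup{D}}(z)|^{-1}$ in the logarithmic region by exploiting the subdominance of the exponential term when $\delta<1/(2y_{3})$. Your version is in fact more careful than the paper's---you obtain the quantitative bound $|\Gamma_{\alpha,y}^{\textup{D}}(z)|^{-1}\leq C/|z|$ (the paper writes only ``bounded''), which is what is actually needed to recover the exponent $j-1$ in the rank-one summand, and you flag the tacit hypothesis $y\notin\supp\rho$ required for the $H^{j}$ estimates with $j\geq 1$.
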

\begin{proof}
We start by proving that for any $A>0$ there are only finitely many resonances in the region 
\begin{equation}
\label{regionfiniteresonances}
\Set{z\in\C|\Im{z}\ge -A-\delta\ln{(1+\abs{z})}}\ .
\end{equation}
 As shown previously, all resonances are on the curve \cref{bintermsofahalfspace2}. This means that the curve
\begin{equation}\label{auxcurve}\Im{z}=\frac{1}{2y_{3}}\ln{\Biggl(\frac{1}{2y_{3}\abs{\Re{z}}}\Biggr)} \ \ \ \ \ \text{equivalent to}\ \ \ \ 
\abs{\Re{z}}=\frac{1}{2y_{3}}e^{-2y_{3}\Im{z}}
\end{equation}
is above all resonances. The equation of the curve which delimits the region \cref{regionfiniteresonances} can be written in explicit form as follows
\begin{equation}
\label{curveregionfiniteresonanceexplicit}
\abs{\Re{z}}=\sqrt{\biggl(e^{-\frac{\Im{z}+A}{\delta}}-1\biggr)^{2}-\Bigl(\Im{z}\Bigr)^{2}}.
\end{equation}
Both curves are symmetric with respect to the imaginary axis, so we limit ourself to $\Re{z}>0$. If it happens that for $\Im{z}$ large enough, curve \cref{curveregionfiniteresonanceexplicit} is above curve \cref{auxcurve}, then for $\Im{z}$ large enough all $z$ which generate a resonance and obey this constraint are below \cref{curveregionfiniteresonanceexplicit} and hence only a finite quantity of them is in the region described by \cref{regionfiniteresonances}. This scenario happens for $\frac{1}{\delta}>2y_{3}$ regardless of the value of $A$.

In order to prove \cref{boundtruncatedresolvent} we use \cref{dirichletoperator} and write 
\begin{multline}
\label{triangleinequalityresolvent}
\norma{\rho(-\Delta_{\alpha, y}^{\textup{D}}-z^{2})^{-1}\rho}_{L^{2}(K)\to H^{j}(K)}\le \norma{\rho(-\Delta^{\textup{D}}-z^{2})^{-1}\rho}_{L^{2}(K)\to H^{j}(K)}\\
+\abs{\Gamma_{\alpha, y}^{\textup{D}}(z)}^{-1}\norma{\rho G_{z, y}^{\textup{D}}(-\Delta^{\textup{D}}-z^{2})^{-1}\rho}_{L^{2}(K)\to H^{j}(K)}.
\end{multline}
We recall the following bound for the free truncated resolvent (\cite{DZ:MathematicalTheoryScatteringResonances} Theorem 3.1)
\begin{equation*}
\norma{\rho(-\Delta-z^{2})^{-1}\rho}_{L^{2}(K)\to H^{j}(K)}\le C(1+\abs{z})^{j-1}e^{T(\Im{z})_{-}},\quad j=0, 1, 2,
\end{equation*}
where $L>\diam{(\supp{(\rho)})}$ and that is also valid if we replace $-\Delta$ with $-\Delta^{\textup{D}}$. This directly bounds the first term in \cref{triangleinequalityresolvent}. It does the same for the second one if we observe that $G_{z, y}^{\textup{D}}$ is in $L_{\textup{loc}}^{2}(\R_{+}^{3})$ and that outside of resonances $\abs{\Gamma_{\alpha, y}^{\textup{D}}(z)}^{-1}$ is bounded. Hence \cref{boundtruncatedresolvent} follows.
\end{proof}
\begin{proof}[Proof of \cref{resonanceexpansionthm}]
We start by considering the case $w_{0}=0$ and $\supp{(w_{1})}\subset K$.

By the functional calculus, the propagator of \cref{waveequationproblem} can be written as 
\begin{equation}
\label{definitionU}
U(t)=\int_{0}^{+\infty}\frac{\sin{tz}}{z}\diff E_{z}+\frac{\sin{t\mu}}{\mu}\biggl(\cdot, G_{\mu, y}^{\textup{D}}\biggr)G_{\mu, y}^{\textup{D}},
\end{equation}
where $-\mu^{2}$ is the possible eigenvalue of $-\Delta_{\alpha, y}^{\textup{D}}$ (the term is not present when there is no eigenvalue).If the eigenvalue is zero (case $ii)$ in the statement), the last term is replaced by
\begin{equation*}
t\,\biggl(\cdot, G_{0, y}^{\textup{D}}\biggr)G_{0, y}^{\textup{D}}.
\end{equation*}

{Since for $z$ near $\mu\ne 0$, $R(z)=(-\mu^{2}-z^{2})^{-1}\biggl(\cdot, G_{\mu, y}^{\textup{D}}\biggr)G_{\mu, y}^{\textup{D}}+Q(z)$, where $Q(z)$ is holomorphic near $-\mu^{2}$ and $\Res_{z=i\mu}{(-\mu^{2}-z^{2})}=-(2\mu)^{-1}$, we have
\begin{equation}
\label{eigenfunctionterm}
\frac{\sin{t\mu}}{\mu}\biggl(\cdot, G_{\mu, y}^{\textup{D}}\biggr)G_{\mu, y}^{\textup{D}}=-\Res_{z=i\mu}{(iR(z)e^{-izt})}-\Res_{z=-i\mu}{(iR(-z)e^{-izt})}=\Pi_{\mu}+\Pi_{-\mu}.
\end{equation}
If $\mu=0$, instead
\begin{equation}
\label{eigenfunctionterm0}
t\,\biggl(\cdot, G_{0, y}^{\textup{D}}\biggr)G_{0, y}^{\textup{D}}=-t\Res_{z=0}{(iR(z))}.
\end{equation}
In the following, the term for $\mu=0$ will not be reported whenever it is not needed to distinguish the two cases.

By Stone's formula for the spectral measure $\diff E_{z}$ in terms of $R(z)$, we have
\begin{equation*}
\diff E_{z}=\frac{1}{\pi i}(R(z)-R(-z))z\diff z.
\end{equation*}
Hence
\begin{equation*}
\begin{split}
w(t, x)-\frac{\sin{t\mu}}{\mu}\biggl(\cdot, G_{\mu, y}^{\textup{D}}\biggr)G_{\mu, y}^{\textup{D}} & = \frac{1}{\pi i}\int_{0}^{+\infty}\sin{tz}(R(z)-R(-z))w_{1}(x)\diff z\\
                 & = \frac{1}{2\pi}\lim_{\varepsilon\to 0^{+}}\int_{\R\setminus(-\varepsilon, \varepsilon)}e^{-itz}(R(z)-R(-z))w_{1}(x)\diff z\\
                 &
\begin{multlined}
=\frac{1}{2\pi}\int_{\Sigma_{\varepsilon_{0}}}e^{-itz}(R(z)-R(-z))w_{1}(x)\diff z\\
+\frac{1}{2\pi}\lim_{\varepsilon\to 0^{+}}\int_{\sigma_{\varepsilon}}e^{-itz}(R(z)-R(-z))w_{1}(x)\diff z,
\end{multlined}
\end{split}
\end{equation*}
where $\varepsilon_{0}$ is such that there are no non-zero poles of $R(z)$ in $D(0, \varepsilon_{0})$, $\Sigma_{\varepsilon_{0}}$ is the union of $\R\setminus(-\varepsilon_{0}, \varepsilon_{0})$ with the semicircle in the upper complex half-plane of radius $\varepsilon_{0}$ and centered at zero oriented counterclockwise. $\sigma_{\varepsilon}$ instead is the semicircle in the upper complex half-plane of radius $\varepsilon$, centered at zero and oriented clockwise.

To prove that the integrand decays fast enough for the integral on $\Sigma_{\varepsilon_{0}}$ to be convergent we observe that, by the definition of $R(z)$, it follows that
\begin{equation*}
\Bigl(R(z)-R(-z)\Bigr)\biggl(-\Delta_{\alpha, y}^{\textup{D}}\biggr)=(R(z)-R(-z))\biggl(-\Delta_{\alpha, y}^{\textup{D}}-z^{2}+z^{2}\biggr)=z^{2}\Bigl(R(z)-R(-z)\Bigr).
\end{equation*}
From this we can conclude that for $\rho\in C_{0}^{\infty}(\R_{+}^{3})$ equal to one on $\supp{(w_{1})}$,
\begin{equation*}
\begin{split}
\rho\Bigl(R(z)-R(-z)\Bigr)\biggl(1-\Delta_{\alpha, y}^{\textup{D}}\biggr)\rho w_{1} & = \rho\Bigl(R(z)-R(-z)\Bigr)w_{1}+\rho\Bigl(R(z)-R(-z)\Bigr)\biggl(-\Delta_{\alpha, y}^{\textup{D}}\biggr)w_{1}\\
                                                                                                                                                   & = (1+z^{2})\rho\Bigl(R(z)-R(-z)\Bigr)w_{1}
\end{split}
\end{equation*}
and so
\begin{equation}
\label{identityresolvent}
\rho\Bigl(R(z)-R(-z)\Bigr)\rho w_{1}=\rho\Bigl(R(z)-R(-z)\Bigr)(1+z^{2})^{-1}\biggl(1-\Delta_{\alpha, y}^{\textup{D}}\biggr)\rho w_{1}.
\end{equation}
Since $\rho\Bigl(R(z)-R(-z)\Bigr)\rho=O(|z|^{-2})$ as $|z|\to\infty$ as an operator $L^{2}(\R_{+}^{3})\to L^{2}(\R_{+}^{3})$, this shows that the integral converges in $L_{\textup{loc}}^{2}(\R_{+}^{3})$.

The resolvent is holomorphic in a neighborhhod of the origin and hence the integral over $\sigma_{\varepsilon}$ converges to $0$ as $\varepsilon\to 0^{+}$, unless zero is an eigenvalue for $-\Delta_{\alpha, y}^{\textup{D}}$. In that case, we study the expression $R(z)-R(-z)$ near zero. We recall that $0$ is an eigenvalue if and only if $\alpha=-\frac{1}{8\pi y_{3}}$. In this case
\begin{equation*}
R(z)-R(-z)=\frac{e^{iz\abs{x-y}}}{4\pi\abs{x-y}}\frac{e^{iz\abs{x'-y}}}{4\pi\abs{x'-y}}\Bigl(\Gamma_{\alpha, y}^{\textup{D}}(z)\Bigr)^{-1}-\frac{e^{-iz\abs{x-y}}}{4\pi\abs{x-y}}\frac{e^{-iz\abs{x'-y}}}{4\pi\abs{x'-y}}\Bigl(\Gamma_{\alpha, y}^{\textup{D}}(-z)\Bigr)^{-1}.
\end{equation*}
Some computations show that
\begin{equation*}
\Res_{z=0}{\Bigl(R(z)-R(-z)\Bigr)}=\frac{i}{\pi y_{3}\abs{x-y}\abs{x'-y}}\Biggl(\frac{y_{3}}{3}-\frac{\abs{x-y}+\abs{x'-y}}{2}\Biggr),
\end{equation*}
which means that
\begin{equation*}
\begin{split}
\frac{1}{2\pi}\lim_{\varepsilon\to 0^{+}}\int_{\sigma_{\varepsilon}}e^{-itz}(R(z)-R(-z))w_{1}(x)\diff z & = \frac{w_{1}(x)}{2\pi y_{3}\abs{x-y}\abs{x'-y}}\Biggl(\frac{y_{3}}{3}-\frac{\abs{x-y}+\abs{x'-y}}{2}\Biggr)\\
                                                                                                                                                      & = \Pi_{0}w_{1}(x).
\end{split}
\end{equation*}

Let $\rho\in C_{0}^{\infty}(\R_{+}^{3})$ satisfy $\rho=1$ on $K$. We choose $r$ large enough so that all the resonances $z^{2}$ with $\Im{z}>-A-\delta\ln{(1+\abs{\Re{z}})}$ are contained in the ball $\abs{z}\le r$. We deform the contour of integration in the integral over $\Sigma_{\varepsilon_{0}}$ using the following contours:
\begin{gather*}
\omega_{r}=\Set{s-i\,\Bigl(A+\varepsilon+\delta\ln{(1+\abs{s})}\Bigr)|s\in [-r, r]}\\
\gamma_{r}^{\pm}=\Set{\pm r-is|0\le s\le A+\varepsilon+\delta\ln{(1+r)}}, \quad \gamma_{r}=\gamma_{r}^{+}\cup\gamma_{r}^{-}\\
\gamma_{r}^{\infty}=(-\infty, r)\cup (r, +\infty).
\end{gather*}
$\varepsilon$ is chosen in such a way that $\omega_{r}$ does not cross resonances. We also define
\begin{equation*}
\Omega_{A}=\Set{z|\Im{z}\ge -A-\varepsilon-\delta\ln{(1+\abs{\Re{z}})}}\setminus\Set{0}
\end{equation*}
and define
\begin{equation*}
\Pi'_{A}(t)=\sum_{\Set{z_{j}\in\Omega_{A}}}\Res_{z=z_{j}}{\Bigl(\rho R(z)\rho e^{-izt}\Bigr)}.
\end{equation*}
By combining \cref{definitionU}, \cref{eigenfunctionterm} (or \cref{eigenfunctionterm0} when zero is an eigenvalue) and the residue theorem, we can write that
\begin{equation}
\label{decompositionU}
\rho U(t)\rho=\Pi'_{A}(t)+E_{\omega_{r}}(t)+E_{\gamma_{r}}(t)+E_{\gamma_{r}^{\infty}}(t)+\Pi_{\mu}+\Pi_{-\mu},
\end{equation}
where
\begin{equation*}
E_{\gamma}(t)=\frac{1}{2\pi}\int_{\gamma}e^{-itz}\rho\Bigl(R(z)-R(-z)\Bigr)\rho w_{1}\diff z.
\end{equation*}
We note that $\Pi_{-\mu}$ cancels out with the term in $\Pi'_{A}$ containing the residue at $z=-\mu$. So, by putting together this cancellations in the new term, $\Pi_{A}$, \cref{decompositionU} becomes
\begin{equation*}
\rho U(t)\rho=\Pi_{A}(t)+E_{\omega_{r}}(t)+E_{\gamma_{r}}(t)+E_{\gamma_{r}^{\infty}}(t).
\end{equation*}

Now we prove that the the last two terms in \cref{decompositionU} are negligible in the limit $r\to +\infty$. To do so, we consider $w_{1}\in H^{2}(K)$. This allows us to use \cref{identityresolvent}  to \cref{boundtruncatedresolvent} with $j=1$ and obtain the following bound
\begin{equation*}
\begin{split}
\norma{E_{\gamma_{r}^{\infty}}(t)w_{1}}_{H^{1}(K)} & \le C\norma{\int_{r}^{+\infty}e^{-itz}\rho\Bigl(R(z)-R(-z)\Bigr)\rho w_{1}\diff z}_{H^{1}(K)}\\
                                                                               & \le C\norma{\int_{r}^{+\infty}\rho\Bigl(R(z)-R(-z)\Bigr)(1+z^{2})^{-1}\Bigl(1-\Delta_{\alpha, y}^{\textup{D}}\Bigr)\rho w_{1}\diff z}_{H^{1}(K)}\\
                                                                               & \le C \int_{r}^{+\infty}(1+z^{2})^{-1}\norma{w_{1}}_{H^{2}(K)}\diff z=C\biggl(\frac{\pi}{2}-\arctan{r}\biggr)\norma{w_{1}}_{H^{2}(K)}\xrightarrow[r\to+\infty]{}0.
\end{split}
\end{equation*}
Instead, for the other term and for $t>T$ one has
\begin{equation*}
\begin{split}
\norma{E_{\gamma_{r}}(t)w_{1}}_{H^{1}(K)} & = C\norma{-i\int_{0}^{A+\varepsilon+\delta\ln{(1+r)}}e^{-it(r-is)}\rho\Bigl(R(r-is)-R(-r+is)\Bigr)\rho w_{1}\diff s}_{H^{1}(K)}\\
                                                                                            & \textup{\footnotesize $\le C\norma{\int_{0}^{A+\varepsilon+\delta\ln{(1+r)}}e^{-it(r-is)}\rho\Bigl(R(r-is)-R(r+is)\Bigr)(1+(r-is)^{2})^{-1}\Bigl(1-\Delta_{\alpha, y}^{\textup{D}}\Bigr)\rho w_{1}\diff s}_{H^{1}(K)}$}\\
                                                                                            & \le C\int_{0}^{A+\varepsilon+\delta\ln{(1+r)}}(1+(r-is)^{2})^{-1}e^{-(t-T)s}\norma{w_{1}}_{H^{2}(K)}\diff s\\
                                                                                            & \le C\int_{0}^{A+\varepsilon+\delta\ln{(1+r)}}(1+(r-is)^{2})^{-1}\norma{w_{1}}_{H^{2}(K)}\diff s\\
                                                                                            & =C\Biggl(\arctan{\biggl(r-i\Bigl(A+\varepsilon+\delta\ln{(1+r)}\Bigr)\biggr)}-\arctan{r}\Biggr)\norma{w_{1}}_{H^{2}(K)}\xrightarrow[r\to+\infty]{}0.
\end{split}
\end{equation*}
We have now proved that
\begin{equation*}
\rho U(t)\rho=\Pi_{A}(t)+E_{\omega}(t), \quad\textup{for}\quad w_{1}\in H^{2}(K)
\end{equation*}
($\omega$ is the curve $\omega=\Set{s-i\,\Bigl(A+\varepsilon+\delta\ln{(1+\abs{s})}\Bigr)|s\in\R}$).

We now show that for $t\gg 1$ and $w_{1}\in H^{2}(K)$ holds that
\begin{equation}
\label{boundEomegaR}
\norma{E_{\omega}(t)w_{1}}_{H^{2}(K)}\le Ce^{-tA}\norma{w_{1}}_{L^{2}(K)}.
\end{equation}
In order to do so, we use \cref{boundtruncatedresolvent} with $j=2$ and the assumption that there exist a compact set containing $\omega$ in which there are no poles of $R(z)$. We obtain
\begin{equation*}
\begin{split}
\norma{E_{\omega}(t)w_{1}}_{H^{2}(K)} & \le Ce^{-tA}\int_{-\infty}^{+\infty}e^{-t\delta\ln{(1+\abs{s})}}(1+\abs{s})e^{T\delta\ln{(1+\abs{s})}}\norma{w_{1}}_{L^{2}(K)}\diff s\\
                                                              & = Ce^{-tA}\int_{-\infty}^{+\infty}(1+\abs{s})^{-(t-T)\delta+1}\norma{w_{1}}_{L^{2}(K)}\diff s.
\end{split}
\end{equation*}
This last integral is finite if $-(t-T)\delta+1< -1$, which is equivalent to $t>T+\frac{2}{\delta}$.

In \cref{boundEomegaR} the above bound is in terms of the $L^{2}(K)$ norm of $w_{1}$ for $w_{1}\in H^{2}(K)$. Since, $H^{2}(K)$ is dense in $L^{2}(K)$, then the same bound holds for any $w_{1}\in L^{2}(K)$ and the theorem holds for any $w_{1}\in L^{2}(K)$.

The proof for arbitrary $w_{0}\in H_{\textup{comp}}^{1}(\R_{+}^{3})$ and $w_{1}=0$ follows the same steps, with the replacement $\sin{tz}/z\longrightarrow\cos{tz}$ in the formula for $w(t, x)$.}
\end{proof}
Now we turn to the Schr\"odinger equation. In this case, instead of giving the resonance expansion of the solution, we give the expansion of the kernel of the propagator, equivalent from the mathematical point of view and typically more useful in many applications in Quantum Mechanics. Previous early analysis of this kind has been given in \cite{AH-K:ResonanceExpansionGreenFunctionSchrodingerWave}, where the case of $\R^3$ with several point interactions or infinitely many point interactions periodically placed is treated. In the following theorem we consider the resonance expansion for the propagator of the absolutely continuous part of the Hamiltonian of a point interaction on the half-space. For the sake of simplicity in the exposition we exclude in the statement and proof the case where a zero energy eigenvalue is present. This, however, can be treated straightforwardly as already done for the wave equation.\\ The next Theorem treats the Dirichlet case, completely analogous result and proof hold true for Neumann boundary conditions.
\begin{proposition}
Let $H=-\Delta_{\alpha, y}^{\textup{D}}$ with $\alpha\ne -\frac{1}{8\pi y_{3}}$  and $H^+$ its projection on the absolutely continuous spectrum.
 Then there exists some $t_{0}(x, x', \alpha, y)$ such that for $t>t_{0}$ the following expansion holds
\begin{equation}
\label{resonanceexpansion}
\begin{split}
e^{-itH_{+}}(x, x')&=(2\pi it)^{-\frac{3}{2}}e^{\frac{i}{2t}\abs{x-x'}^{2}}+\sum_{n}2z_{n}e^{-itz_{n}^{2}}\Res_{z=z_{n}}{\biggl(\Bigl(\Gamma^{\textup D}_{\alpha, y}(z)\Bigr)^{-1}\biggr)}G^{\textup D}_{z_{n}, y}(x)G^{\textup D}_{z_{n}, y}(x')\\ &-\frac{1}{2\pi}\int_{0}^{+\infty}\biggl(\hat{R}\Bigl(e^{-i\frac{\pi}{4}}s\Bigr)-\hat{R}\Bigl(-e^{-i\frac{\pi}{4}}s\Bigr)\biggr)e^{-ts}\diff s,
\end{split}
\end{equation}
with $\hat{R}(z)=(H-z^{2})^{-1}-(-\Delta^D-z^{2})^{-1}$. The sum in \cref{resonanceexpansion} is taken over all resonances $z_{n}$ with $-\frac{\pi}{4}<\Arg{z_{n}}<0$.
\end{proposition}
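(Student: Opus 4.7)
The plan is to combine Stone's formula for the absolutely continuous part of $H=-\Delta_{\alpha,y}^{\textup{D}}$ with a contour deformation of the spectral integral into the antidiagonal rays, so that the resonances in the sector $\{-\pi/4<\Arg z<0\}$ appear as residues and the remaining contour yields the exponentially damped integral term.

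First I would split the resolvent as $R(z)=R_{0}(z)+\hat{R}(z)$ with $R_{0}(z)=(-\Delta^{\textup{D}}-z^{2})^{-1}$ and, by \cref{oneparticle3dresolvent},
\begin{equation*}
\hat{R}(z)(x,x')=\bigl(\Gamma_{\alpha,y}^{\textup{D}}(z)\bigr)^{-1}G_{z,y}^{\textup{D}}(x)\,G_{z,y}^{\textup{D}}(x'),
\end{equation*}
so that $\hat{R}$ is rank one in $(x,x')$ and its poles in $z$ are precisely the resonances described by \cref{DirSpec}. Stone's formula applied to $e^{-itH_{+}}$, followed by the change of variable $\lambda=z^{2}$ and the unfolding $\int_{0}^{\infty}[R(z)-R(-z)]z\,dz=\int_{\mathbb{R}}R(z)\,z\,dz$, yields
\begin{equation*}
e^{-itH_{+}}(x,x')=e^{it\Delta^{\textup{D}}}(x,x')+\frac{1}{\pi i}\int_{\mathbb{R}}e^{-itz^{2}}z\,\hat{R}(z)(x,x')\,dz.
\end{equation*}
The unperturbed term is computed by the method of images and produces the Gaussian leading contribution stated in \eqref{resonanceexpansion} (the image companion, if present, is of the same structure as the remainder integral).

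Next I would deform the real-line contour for the $\hat{R}$-integral onto the two antidiagonal rays $\{\pm e^{-i\pi/4}\sqrt{s}:s\ge0\}$, on which $z^{2}=-is$ and hence $e^{-itz^{2}}=e^{-ts}$ decays; this directly produces the remainder integral of \eqref{resonanceexpansion}. The ray in the fourth quadrant crosses exactly the resonances $z_{n}$ with $-\pi/4<\Arg z_{n}<0$, whereas the ray at argument $3\pi/4$ stays in the upper half-plane where $\hat{R}$ is holomorphic. Since, by the multiplicity remark after \cref{DirSpec}, each such $z_{n}$ is a simple zero of $\Gamma_{\alpha,y}^{\textup{D}}$, the residue theorem gives
\begin{equation*}
\Res_{z=z_{n}}\bigl(e^{-itz^{2}}z\hat{R}(z)\bigr)=e^{-itz_{n}^{2}}z_{n}\Res_{z=z_{n}}\!\bigl(\Gamma_{\alpha,y}^{\textup{D}}(z)^{-1}\bigr)G_{z_{n},y}^{\textup{D}}(x)G_{z_{n},y}^{\textup{D}}(x'),
\end{equation*}
which, after multiplication by the prefactors from Stone's formula and the Jacobian of $\lambda=z^{2}$, reproduces the sum in \eqref{resonanceexpansion}.

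The main obstacle is to justify the deformation and the convergence of the resulting series. By \eqref{bintermsofahalfspace2} the resonance ordinates satisfy $|\Im z_{n}|\sim \log|\Re z_{n}|/(2y_{3})$, so $|e^{-itz_{n}^{2}}|=e^{2t\Re z_{n}\Im z_{n}}$ decays like $e^{-t|\Re z_{n}|\log|\Re z_{n}|/y_{3}}$, while the Green functions $G_{z_{n},y}^{\textup{D}}(x)$ grow only polynomially, of order $|\Re z_{n}|^{|x-y|/(2y_{3})}$. Thus absolute convergence of the series and the vanishing of the large arcs in the deformation hold once $t$ exceeds a finite threshold $t_{0}=t_{0}(x,x',\alpha,y)$ controlled by the spatial distances $|x-y|$ and $|x'-y|$; this same threshold is what makes the remainder integral absolutely convergent once the bound $|e^{iz|x-y|}|\sim e^{(|x-y|/\sqrt{2})\sqrt{s}}$ on the rotated rays is dominated by the Gaussian factor $e^{-ts}$. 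A final bookkeeping step compares the orientation-dependent sign of the residue sum against the factor $2z_{n}$ appearing in \eqref{resonanceexpansion}, using the symmetry $\hat{R}(-z)=\overline{\hat{R}(\bar z)}$ that underlies the unfolding above.
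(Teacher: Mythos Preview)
Your approach is essentially the same as the paper's: Stone's formula for $e^{-itH_{+}}$, splitting off the free part (which gives the Gaussian), and then a contour deformation of the $\hat{R}$-integral from the real axis to the ray of argument $-\pi/4$, collecting the resonances in the sector $-\pi/4<\Arg z<0$ as residues. The only cosmetic difference is that the paper keeps the integrand in the form $e^{-itz^{2}}z\bigl(\hat{R}(z)-\hat{R}(-z)\bigr)$ on $[0,\infty)$ and deforms just that half-line into the fourth quadrant, whereas you first unfold to $\int_{\R}e^{-itz^{2}}z\hat{R}(z)\,dz$ and deform the two halves separately; after the substitution $z\mapsto -z$ the two computations coincide. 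One point the paper treats that you skip is the case where the large arc $|z|=R$ happens to pass through a resonance: the paper detours around it on a small semicircle and checks that this extra contribution also vanishes as $R\to\infty$ for $t$ large enough, which is where the threshold $t_{0}(x,x',\alpha,y)$ actually enters (your remark that the remainder integral on the ray needs $t>t_{0}$ is not quite right, since $e^{-ts}$ dominates $e^{c\sqrt{s}}$ for every $t>0$).
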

\begin{proof}
By the definition of $\hat{R}$ and functional calculus it holds
\begin{equation*}
e^{-itH_{+}}(x, x')=-\frac{1}{2\pi i}\int_{0}^{+\infty}e^{-itz^{2}}\hat{R}(z)(x, x')2z\diff z-\frac{1}{2\pi i}\int_{0}^{+\infty}e^{-itz^{2}}G_{z, y}^{\textup{D}}(x, x')2z\diff z.
\end{equation*}
It holds that
\begin{equation*}
\frac{1}{2\pi i}\int_{0}^{+\infty}e^{-itz^{2}}G_{z, y}^{\textup{D}}(x, x')2z\diff z=(2\pi it)^{-\frac{3}{2}}e^{i\abs{x-x'}^{2}{2t}}.
\end{equation*}
Instead, because of the parity of $\hat{R}(z)$, we can rewrite the other integral as
\begin{equation*}
-\frac{1}{2\pi i}\int_{0}^{+\infty}e^{-itz^{2}}\Bigl(\hat{R}(z)-\hat{R}(-z)\Bigr)(x, x')2z\diff z.
\end{equation*}
Using the residue theorem, this integral can be decomposed as
\begin{equation}
\begin{split}
\label{residueCRgamma}
\frac{1}{2\pi i}\int_{C_{R}}e^{-itz^{2}}\Bigl(\hat{R}(z)-\hat{R}(-z)\Bigr)(x, x')2z\diff z\ +\ &\frac{1}{2\pi i}\int_{\gamma_{R}}e^{-itz^{2}}\Bigl(\hat{R}(z)-\hat{R}(-z)\Bigr)(x, x')2z\diff z\\
+\ &\sum_{n}2z_{n}e^{-itz_{n}^{2}}\Res_{z=z_{n}}{\Bigl(\hat{R}(z)-\hat{R}(-z)\Bigr)}(x, x'),
\end{split}
\end{equation}
where $C_{R}$ is the arc of circumference of radius $R$ going clockwise from $z=R$ to $z=Re^{-i\pi/4}$, $\gamma_{R}$ is the segment going from $z=Re^{-i\pi/4}$ towards the origin and the sum is over the $z_{n}$ in the region delimited by $C_{R}$, $\gamma_{R}$ and the real positive semi-axis.

We prove that the integral over $C_{R}$ vanishes for $R\to +\infty$. We prove this just for the term with $\hat{R}(z)$. Analogous observations prove the same for $\hat{R}(-z)$. By  \cref{DirSpec}, there are infinite resonances $z_{n}$ and $\abs{z_{n}}\to +\infty$ for $n\to +\infty$. This means that, in order to study the limit $R\to +\infty$, we have to distinguish the cases
\begin{itemize}
\item $R\ne\abs{z_{n}}$, for all $n\in\N$;
\item $R=\abs{z_{n}}$, for some $n\in\N$.
\end{itemize}
In the former case, the integral is given by
\begin{equation}
\label{integralarcnores}
\int_{C_{R}}e^{-itz^{2}}\biggl(\alpha-\frac{iz}{4\pi}+\frac{e^{2izy_{3}}}{8\pi y_{3}}\biggr)^{-1}\frac{e^{iz\abs{x-y}}}{4\pi\abs{x-y}}\frac{e^{iz\abs{x'-y}}}{4\pi\abs{x'-y}}2z\diff z.
\end{equation}
We substitute $z=Re^{i\theta}$ and estimate the absolute value of the integrand
\begin{equation*}
\abs{e^{-itR^{2}e^{2i\theta}}\biggl(\alpha-\frac{iRe^{i\theta}}{4\pi}+\frac{e^{2iRe^{i\theta}y_{3}}}{8\pi y_{3}}\biggr)^{-1}\frac{e^{iRe^{i\theta}\abs{x-y}}}{4\pi\abs{x-y}}\frac{e^{iRe^{i\theta}\abs{x'-y}}}{4\pi\abs{x'-y}}2ie^{i\theta}R^{2}}=\abs{I}.
\end{equation*}
The second factor is bounded, because the path does not cross any resonance, hence it holds that
\begin{equation*}
\abs{I}\le C(x, x', y)R^{2}e^{R\sin{\theta}(2tR\cos{\theta}-\abs{x-y}-\abs{x'-y})}.
\end{equation*}
Since we are in the complex lower half-plane, $\sin{\theta}<0$. So if the other factor in the exponent is positive, then in the limit the integral vanishes. By recalling that $-\frac{\pi}{4}<\theta<0$, we have that
\begin{equation*}
2tR\cos{\theta}-\abs{x-y}-\abs{x'-y}>\sqrt{2}Rt-\abs{x-y}-\abs{x'-y},
\end{equation*}
which is positive for $t>\frac{\sqrt{2}}{2R}\Bigl(\abs{x-y}+\abs{x'-y}\Bigr)$ and hence the integral vanishes in the limit. 

In the other case, the path of integration has to be modified because the previous path would pass across the pole $z_{n}$. So, the integral \cref{integralarcnores} has to be replaced by the following sum of integrals of the same function
\begin{equation*}
-\frac{1}{2\pi i}\int_{C_{R}^{1}}\dots\diff z-\frac{1}{2\pi i}\int_{C_{\varepsilon}}\dots\diff z-\frac{1}{2\pi i}\int_{C_{R}^{2}}\dots\diff z,
\end{equation*}
where $C_{\varepsilon}$ is the semicircle of radius $\varepsilon$ centered at $z_{n}$, $C_{R}^{1}$ is the arc of circle of radius $R$ centered at the origin starting at $Re^{-i\frac{\pi}{4}}$ and ending when intersecting $C_{\varepsilon}$, and $C_{R}^{2}$ is the arc of circle of radius $R$ centered at the origin between the point $R$ and the intersection with $C_{\varepsilon}$. The path is run clockwise. The integrals on $C_{R}^{1}$ and $C_{R}^{2}$ are proven to vanish in the limit $n\to +\infty$ for $t$ large enough in the same way as in the first case. We consider the remaining integral, it holds that
\begin{equation*}
-\frac{1}{2\pi i}\int_{C_{\varepsilon}}\dots\diff z=-\frac{1}{2}\Res_{z=z_{n}}{\Biggl(e^{-itz^{2}}\biggl(\alpha-\frac{iz}{4\pi}+\frac{e^{2izy_{3}}}{8\pi y_{3}}\biggr)^{-1}\frac{e^{iz\abs{x-y}}}{4\pi\abs{x-y}}\frac{e^{iz\abs{x'-y}}}{4\pi\abs{x'-y}}2z\Biggr)}.
\end{equation*}
In order to find the residue, we study the divergent term
\begin{equation*}
\Bigl(\Gamma_{\alpha, y}^{\textup{D}}(z)\Bigr)^{-1}=\biggl(\alpha-\frac{iz}{4\pi}+\frac{e^{2izy_{3}}}{8\pi y_{3}}\biggr)^{-1}.
\end{equation*}
Using the definition of $z_{n}$ we have that
\begin{equation*}
\begin{split}
\Bigl(\Gamma_{\alpha, y}^{\textup{D}}(z)\Bigr) & = \alpha-\frac{i}{4\pi}(z-z_{n})-\frac{iz_{n}}{4\pi}+\frac{e^{2iy_{3}(z-z_{n})}}{8\pi y_{3}}e^{2iy_{3}z_{n}}\\
                                                                                                & = \frac{e^{2iy_{3}z_{n}}}{8\pi y_{3}}\biggl(e^{2iy_{3}(z-z_{n})}-1\biggr)-\frac{i}{4\pi}(z-z_{n})\\
                                                                                                & = \frac{e^{2iy_{3}z_{n}}}{8\pi y_{3}}\sum_{k=1}^{+\infty}\frac{2^{k}i^{k}y_{3}^{k}}{k!}(z-z_{n})^{k}-\frac{i}{4\pi}(z-z_{n})\\
                                                                                                & = \frac{i}{4\pi}\biggl(e^{2iy_{3}z_{n}}-1\biggr)(z-z_{n})+O((z-z_{n})^{2}).
\end{split}
\end{equation*}
This means that
\begin{equation*}
\Res_{z=z_{n}}{\biggl(\Bigl(\Gamma_{\alpha, y}^{\textup{D}}(z)\Bigr)^{-1}\biggr)}=\lim_{z\to z_{n}}\frac{z-z_{n}}{\alpha-\frac{iz}{4\pi}+\frac{e^{2iy_{3}z_{n}}}{8\pi y_{3}}}=\frac{4\pi i}{1-e^{2iy_{3}z_{n}}}.
\end{equation*}
The denominator would vanish only if $z_{n}=\frac{k\pi}{y_{3}}$ for some $k\in\Z$. In order for this value to be a resonance, it has to solve
\begin{equation*}
\frac{ik}{4y_{3}}=\alpha+\frac{1}{8\pi y_{3}},
\end{equation*}
but the the left side is purely imaginary, while the right one is real. So the equation holds only if both sides vanish, which happens if and only if $z_{n}=0$ and $\alpha=-\frac{1}{8\pi y_{3}}$.
Hence, if we state $z_{n}=r_{n}e^{i\theta_{n}}$, we have
\begin{equation*}
\begin{split}
\abs{-\frac{1}{2\pi i}\int_{C_{\varepsilon}}\dots\diff z} & = \abs{e^{-itz_{n}^{2}}\frac{4\pi i}{e^{2iy_{3}z_{n}}-1}\frac{e^{iz_{n}\abs{x-y}}}{4\pi\abs{x-y}}\frac{e^{iz_{n}\abs{x'-y}}}{4\pi\abs{x'-y}}z_{n}}\\
                                                                                & \le C'(x, x', y) e^{r_{n}\sin{\theta_{n}}(2tr_{n}\cos{\theta_{n}}-\abs{x-y}-\abs{x'-y})}.
\end{split}
\end{equation*}
Also in this case
\begin{equation*}
2tr_{n}\cos{\theta_{n}}-\abs{x-y}-\abs{x'-y}\ge \sqrt{2}tr_{n}-\abs{x-y}-\abs{x'-y},
\end{equation*}
with the last expression being positive for $t>\frac{\sqrt{2}}{2r_{n}}\Bigl(\abs{x-y}+\abs{x'-y}\Bigr)$. So, in the limit $n\to +\infty$ also the integral on $C_{\varepsilon}$ vanishes for $t$ large enough.

Going back to \cref{residueCRgamma}, we consider the sum in it. First we observe that $\hat{R}(-z)$ has no poles in the region considered, because all its poles are in the upper complex plane. The expression of $(-\Delta_{\alpha, y}^{\textup{D}}-z^{2})^{-1}$ and the fact that for $R\to +\infty$ the region of interest becomes $-\frac{\pi}{4}<\Arg{z}<0$ let us recollect the expression present in \cref{resonanceexpansion}.

The integral over $\gamma_{R}$ can be expressed, with the substitution $z=e^{-i\pi/4}\sqrt{s}$, as follows
\begin{equation*}
\frac{1}{2\pi i}\int_{0}^{+\infty}e^{-ts}\biggl(\hat{R}\Bigl(e^{-i\frac{\pi}{4}}\sqrt{s}\Bigr)-\hat{R}\Bigl(-e^{-i\frac{\pi}{4}}\sqrt{s}\Bigr)\biggr)(x, x')2e^{-i\frac{\pi}{4}}\sqrt{s}e^{-i\frac{\pi}{4}}\frac{\diff s}{2\sqrt{s}},
\end{equation*}
which returns the missing term. 
\end{proof}

\section*{Appendix - Proof of \cref{definition:onepointD}}
Here we construct the self-adjoint extensions $-\Delta_{\alpha, y}^{\textup{D}}$ of $-\mathring\Delta^{\textup{D}}$ using von Neumann theory of extensions and boundary value spaces. A very similar construction (not here reported) applies for the Neumann operator. See also, for a different analysis in the context of the description of Casimir effect, the paper \cite{ACSZ08}.
 
In \cref{preliminaries} we noted that $-\mathring\Delta_y^{\textup{D}}$ is a closed symmetric operator. If its deficiency indices $N_{\pm}\left(-\mathring\Delta_y^{\textup{D}}\right)=\dim{\ran{\left(-\mathring\Delta_y^{\textup{D}}\pm i\right)}}$ are equal then it admits an $N_{\pm}^{2}$-parameters family of self-adjoint extensions. The following lemma characterizes the deficiency spaces.
\begin{lemma}
Let $\mathcal{H}_{\pm}=\ran{(-\mathring\Delta_y^{\textup{D}}\pm i)}^{\perp}$ be the deficiency subspaces related to the operator $-\mathring\Delta_y^{\textup{D}}$;
\begin{equation*}
\mathcal{H}_{\pm}=\Span{\Set{G_{\sqrt{\mp i}, y}^{\textup{D}}}}
\end{equation*}
and then  then $N_{\pm}\left(-\mathring\Delta_y^{\textup{D}}\right)=1$.
\end{lemma}
\begin{proof}
$\psi\in\mathcal{H}_{\pm}$ implies
\begin{equation*}
\begin{cases}
\langle\psi, (-\Delta\pm i)\varphi\rangle_{L^{2}(\R_{+}^{3})}=0\\
\psi\in L^{2}(\R_{+}^{3})
\end{cases}
,\quad\forall\varphi\in D(-\mathring\Delta^{\textup{D}}).
\end{equation*}
But since, $\varphi(y)=0$, we have that
\begin{equation*}
\langle G_{\sqrt{\mp i}, y}^{\textup{D}}, (-\Delta\pm i)\varphi\rangle_{L^{2}(\R_{+}^{3})}=\langle(-\Delta\mp i)G_{\sqrt{\mp i}, y}^{\textup{D}}, \varphi\rangle_{L^{2}(\R_{+}^{3})}=\varphi(y)=0,
\end{equation*}
which means that $\Span{\Set{G_{\sqrt{\mp i}, y}^{\textup{D}}}}\subseteq\mathcal{H}_{\pm}$. Let $\psi_{\pm}\in\ran{(-\mathring\Delta_y^{\textup{D}}\pm i)}^{\perp}$ and $\varphi\in D(-\mathring\Delta_y^{\textup{D}})$. Then for some $c^{\pm}$ independent of $\varphi$ it holds that
\begin{equation}
\label{psipmortogonal}
\langle\psi_{\pm}, (-\Delta\pm i)\varphi\rangle_{L^{2}(\R_{+}^{3})}=c^{\pm}\varphi(y).
\end{equation}
In fact, let
\begin{equation}
\label{phitilde}
\tilde{\varphi}=\varphi-\varphi(y)\chi_{j},
\end{equation}
where $\chi_{j}\in C_{0}^{\infty}(\R_{+}^{3})$, $\chi(y)=1$. Then $\tilde{\varphi}\in D(-\mathring\Delta_y^{\textup{D}})$. Substituting \cref{phitilde} in \cref{psipmortogonal} and using this fact we obtain that $c^{\pm}=(\psi_{\pm}, (-\Delta\pm i)\chi_{j})$. But the constants $c^{\pm}$ are uniquely determined by $\psi_{\pm}$. In fact, let $\tilde{\psi_{\pm}}$ be such that
\begin{equation*}
\langle\tilde{\psi}_{\pm}, (-\Delta\pm i)\varphi\rangle_{L^{2}(\R_{+}^{3})}=c^{\pm}\varphi(y).
\end{equation*}
Then $\langle(\psi_{\pm}-\tilde{\psi}_{\pm}), (-\Delta\pm i)\varphi\rangle_{L^{2}(\R_{+}^{3})}=0$ $\forall\varphi\in D(-\mathring\Delta_y^{\textup{D}})$, which implies that $\psi_{\pm}=\tilde{\psi}_{\pm}$.

Finally, we observe that
\begin{equation*}
\psi_{\pm}=c^{\pm}G_{\sqrt{\mp i}, y}^{\textup{D}}
\end{equation*}
satisfies \cref{psipmortogonal}, thereby proving $\mathcal{H}_{\pm}\subseteq\Span{\Set{G_{\sqrt{\mp i}, y}^{\textup{D}}}}$.
\end{proof}
Now in order to construct the family of extensions of $-\mathring\Delta_y^{\textup{D}}$ we will use the theory of boundary value spaces. On this topic we recall only the results essential for our analysis and refer to \cite{GG:BVPOperatorDifferentialEquations} and \cite{BFM:PointInteractionBoundedDomains} for additional details.
\begin{definition}[Boundary Value Space]
Let $A$ be a densely defined, closed symmetric operator in an Hilbert space $\mathscr{H}$ with equal, finite or infinite, deficiency indices. A triple $(\mathscr{V}, \Gamma_{1}, \Gamma_{2})$, where $\mathscr{V}$ is a Hilbert space and $\Gamma_{i}: D(A^{*})\to\mathscr{V}$ $i=1, 2$ are bounded linear operators, is called a boundary value space of the operator $A$ if
\begin{gather}
(\psi, A^{*}\varphi)_{\mathscr{H}}-(A^{*}\psi, \varphi)_{\mathscr{H}}=(\Gamma_{1}\psi, \Gamma_{2}\varphi)_{\mathscr{V}}-(\Gamma_{2}\psi, \Gamma_{1}\varphi)_{\mathscr{V}},\quad\forall\psi, \varphi\in D(A^{*})\\
\textup{the map}\quad(\Gamma_{1}, \Gamma_{2}): D(A^{*})\to\mathscr{V}\oplus\mathscr{V}\quad\textup{is surjective}.
\end{gather}
\end{definition}
Now the following proposition holds.
\begin{theorem}
\label{boundaryvaluespacematrices}
Let $A$ be a densely defined, closed symmetric operator in $\mathscr{H}$ with equal deficiency  $(N, N)$, $N<+\infty$ and let $(\mathscr{V}, \Gamma_{1}, \Gamma_{2})$ be its boundary value space. Let 
\begin{equation*}
W=\Set{E=\begin{pmatrix} B\, C\end{pmatrix}| BC^{*}=CB^{*}, \ran{E}=N},
\end{equation*}
where $B, C$ are complex $N\times N$ matrices and $E=\begin{pmatrix} B\, C\end{pmatrix}$ denotes the $N\times 2N$ matrix obtained by horizontal juxtaposition of $B$ and $C$.
There is a bijective correspondence between the self-adjoint extensions of $A$ and the set $W$ defined above. A self-adjoint extension $A^{B, C}$, corresponding to $\begin{pmatrix} B\, C\end{pmatrix}\in W$, is given by the restriction of $A^{*}$ to those elements $\psi\in D(A^{*})$ satisfying the boundary conditions
\begin{equation}
\label{operatorialbc}
B\Gamma_{1}\psi=C\Gamma_{2}\psi.
\end{equation}
\end{theorem}
\begin{remark}
When $N=1$, the set $W$ corresponding to self-adjoint extensions of the operator in exam is made of vectors $E=(z, w)$ with $z, w\in\C$ such that $z\overline{w}=w\overline{z}$ with $\abs{E}>0$. So the possible cases are $z=0 \land w\neq 0$, $z\neq 0\land w=0$ or $z=\rho e^{i\theta}\land w=re^{i\theta}$ for some $\rho, r>0$ and $\theta\in[0, 2\pi)$. In this latter case, \cref{operatorialbc} takes the form of $\rho e^{i\theta}\Gamma_{1}\psi=re^{i\theta}\Gamma_{2}\psi$ which means that we can assume without loss of generality $z=\alpha\in\R$ and $w=1$ and \cref{operatorialbc} takes the form
\begin{equation}
\label{operatorialbc1point}
\alpha\Gamma_{1}\psi=\Gamma_{2}\psi
\end{equation}
\end{remark}
So in order to construct the sought self-adjoint extensions it is sufficient to find a boundary value space for $H_{0}=-\mathring\Delta_y^{\textup{D}}$. By the definition of $H_{0}$ it is straightforward that
\begin{equation*}
D(H_{0}^{*})=\left\{\psi\in H_{\textup{loc}}^{2}\left(\R_{+}^{3}\setminus\{y\}\right)\cap L^{2}\left(\R_{+}^{3}\right): \Delta\psi\in L^{2}\left(\R_{+}^{3}\right)\right\}.
\end{equation*}
Consider the following operators
\begin{align}
\label{Gamma1}
&\Gamma_{1}: D(H_{0}^{*})\to \C\qquad\Gamma_{1}\psi=\lim_{x\to y}4\pi\abs{x-y}\psi(x),\\
&\label{Gamma2}
\Gamma_{2}: D(H_{0}^{*})\to \C\qquad\Gamma_{2}\psi=\lim_{x\to y}\Biggl(\psi(x)-\frac{\Gamma_{1}\psi}{4\pi\abs{x-y}}\Biggr).
\end{align}
We claim that
\begin{theorem}
The triple $(\C, \Gamma_{1}, \Gamma_{2})$ defined by \cref{Gamma1} and \cref{Gamma2} forms a boundary value space for $H_{0}$.
\end{theorem}
\begin{proof}
By the von Neumann formula, we can write two generic vectors $\psi, \varphi\in D(H_{0}^{*})$ as follows
\begin{equation}
\label{domaindecomposition}
\psi=\psi_{0}+aG_{\sqrt{i}, y}^{\textup{D}}+bG_{\sqrt{-i}, y}^{\textup{D}}, \quad \varphi=\varphi_{0}+\alpha G_{\sqrt{i}, y}^{\textup{D}}+\beta G_{\sqrt{-i}, y}^{\textup{D}},
\end{equation}
where $\psi_{0}, \varphi_{0}\in D(H_{0})$ and $a, b, \alpha, \beta\in\C$. We now consider
\begin{equation}
\label{leftside}
\langle\psi, H_{0}^{*}\varphi\rangle_{L^{2}(\R_{+}^{3})}-\langle H_{0}^{*}\psi, \varphi\rangle_{L^{2}(\R_{+}^{3})}.
\end{equation}
By some computations, the definition of $G_{\sqrt{\mp i}, y}^{\textup{D}}$ and fact that $\psi_{0}, \varphi_{0}\in D(H_{0})$, it holds that (we omit the subscript of the inner product for brevity)
\begin{equation}
\label{leftside2}
\langle\psi, H_{0}^{*}\varphi\rangle-\langle H_{0}^{*}\psi, \varphi\rangle=2i\Bigl(a\overline{\alpha_{l}}-b_{l}\overline{\beta}\Bigr)\langle G_{\sqrt{i}, y}^{\textup{D}}, G_{\sqrt{i}, y}^{\textup{D}}\rangle.
\end{equation}
For convenience we fix some $y'\in\R_{+}^{3}\setminus\{y\}$. By the definition of $G_{\sqrt{\mp i}, y}^{\textup{D}}$, we have
\begin{equation*}
\begin{split}
\langle G_{\sqrt{i}, y}^{\textup{D}}, G_{\sqrt{i}, y'}^{\textup{D}}\rangle & = -i\,\langle iG_{\sqrt{i}, y}^{\textup{D}}, G_{\sqrt{i}, y'}^{\textup{D}}\rangle=-i\,\langle H_{0}G_{\sqrt{i}, y}^{\textup{D}}, G_{\sqrt{i}, y'}^{\textup{D}}\rangle+i\overline{G_{\sqrt{i}, y'}^{\textup{D}}(y)}\\
                                                                                                             & = -i\,\langle G_{\sqrt{i}, y}^{\textup{D}}, H_{0}G_{\sqrt{i}, y'}^{\textup{D}}\rangle+iG_{\sqrt{-i}, y'}^{\textup{D}}(y)\\
                                                                                                             & = -\langle G_{\sqrt{i}, y}^{\textup{D}}, G_{\sqrt{i}, y'}^{\textup{D}}\rangle-iG_{\sqrt{i}, y}^{\textup{D}}(y')+iG_{\sqrt{-i}, y'}^{\textup{D}}(y),
\end{split}
\end{equation*}
equivalent to
\begin{equation}
\label{Gknel}
\langle G_{\sqrt{i}, y}^{\textup{D}}, G_{\sqrt{i}, y'}^{\textup{D}}\rangle = -\frac{i}{2}\Bigl(G_{\sqrt{i}, y}^{\textup{D}}(y')-G_{\sqrt{-i}, y'}^{\textup{D}}(y)\Bigr).
\end{equation}
Now, if $y'\to y$, both terms in the difference in \cref{Gknel} diverge, but the quantity has still a finite limit. We rewrite last equation using the definition for $G_{\sqrt{\mp i}, y}^{\textup{D}}$
\begin{equation*}
\langle G_{\sqrt{i}, y}^{\textup{D}}, G_{\sqrt{i}, y'}^{\textup{D}}\rangle = -\frac{i}{2}\Bigl(G_{\sqrt{i}, y}^{0}(y')-G_{\sqrt{-i}, y'}^{0}(y)-h_{\sqrt{i}, y}^{\textup{D}}(y')+h_{\sqrt{-i}, y'}^{\textup{D}}(y)\Bigr).
\end{equation*}
The last two terms are finite for $y=y'$ and we can write
\begin{equation*}
h_{\sqrt{-i}, y}^{\textup{D}}(y)-h_{\sqrt{i}, y}^{\textup{D}}(y)=h_{\sqrt{-i}, y}^{\textup{D}}(y)-\overline{h_{\sqrt{-i}, y}^{\textup{D}}(y)}=2i\Im{h_{\sqrt{-i}, y}^{\textup{D}}(y)}.
\end{equation*}
For the first two terms instead it holds
\begin{equation*}
G_{\sqrt{i}, y}^{0}(y')-G_{\sqrt{-i}, y'}^{0}(y)=G_{\sqrt{i}, y}^{0}(y')-G_{\sqrt{-i}, y}^{0}(y').
\end{equation*}
So, to evaluate the expression  on the left side, we can take the limit $y'\to y$ of the right one
\begin{equation*}
\lim_{y'\to y}\Bigl(G_{\sqrt{i}, y}^{0}(y')-G_{\sqrt{-i}, y}^{0}(y')\Bigr)=\lim_{y'\to y}\frac{e^{-\sqrt{-i}\abs{y'-y}}-e^{-\sqrt{i}\abs{y'-y}}}{4\pi\abs{y'-y}}=\frac{\sqrt{i}-\sqrt{-i}}{4\pi}.
\end{equation*}
So we found that
\begin{equation*}
\langle G_{\sqrt{i}, y}^{\textup{D}}, G_{\sqrt{i}, y}^{\textup{D}}\rangle_{L^{2}(\R_{+}^{3})}=\Im{h_{\sqrt{-i}, y}^{\textup{D}}(y)}+\frac{i}{8\pi}(\sqrt{-i}-\sqrt{i}).
\end{equation*}
Then \cref{leftside2} can be rewritten as
\begin{equation}
\label{leftside3}
\langle\psi, H_{0}^{*}\varphi\rangle-\langle H_{0}^{*}\psi, \varphi\rangle=-\frac{1}{4\pi}\Bigl(a\overline{\alpha}-b\overline{\beta}\Bigr)\left(\sqrt{-i}-\sqrt{i}\right)-\Bigl(a\overline{\alpha}-b\overline{\beta}\Bigr)\Bigl(h_{\sqrt{i}, y}^{\textup{D}}(y)-h_{\sqrt{-i}, y}^{\textup{D}}(y)\Bigr).
\end{equation}

We consider now the operators $\Gamma_{1}$ and $\Gamma_{2}$ acting on $D(H_{0}^{*})$
\begin{gather}   
\Gamma_{1}\psi=a+b\\
\Gamma_{2}\psi=-\frac{1}{4\pi}(a\sqrt{-i}+b\sqrt{i})-ah_{\sqrt{i}, y}^{\textup{D}}(y)-bh_{\sqrt{-i}, y}^{\textup{D}}(y).
\end{gather}
Substituting these expressions into $\langle\Gamma_{1}\psi, \Gamma_{2}\varphi\rangle_{\C}-\langle\Gamma_{2}\psi, \Gamma_{1}\varphi\rangle_{\C}$ leads to \cref{leftside3} completing the proof.
\end{proof}
\begin{proof}[Proof of \cref{definition:onepointD}]
By the von Neumann decomposition formula, we can write
\begin{equation}
\label{representation1}
\psi=\psi_{0}+aG_{\sqrt{i}, y}^{\textup{D}}+bG_{\sqrt{-i}, y}^{\textup{D}}, \quad\psi_{0}\in D(H_{0}).
\end{equation}
We fix $z\in\C$ with $\Im z>0$ and set
\begin{gather}
\varphi^{z}=\psi_{0}+aG_{\sqrt{i}, y}^{\textup{D}}+bG_{\sqrt{-i}, y}^{\textup{D}}-qG_{z, y}^{\textup{D}}\\
q=\Gamma_{1}\psi=a+b
\end{gather}
We show that $\varphi^{z}\in D(-\Delta^{\textup{D}})$. The Dirichlet boundary condition is satisfied because $\varphi^{z}$ is a sum of terms satisfying it. We then need to prove that $\varphi^{z}\in H^{2}(\R_{+}^{3})$. The decay at infinity is exponentially fast and so we need to check only the possible singularity in $y$. In a neighborhood of $y$ the behavior of the function $\varphi^{z}$ is controlled by the one of
\begin{equation*}
f(x)=a\frac{e^{-\sqrt{-i}\abs{x-y}}}{4\pi\abs{x-y}}+b\frac{e^{-\sqrt{i}\abs{x-y}}}{4\pi\abs{x-y}}-q\frac{e^{iz\abs{x-y}}}{4\pi\abs{x-y}}.
\end{equation*}
It is easy to verify that $f\in H^{1}(B_{r}(y))$ for $r>0$ small enough. Also $\Delta f\in H^{2}(B_{r}(y))$. In fact, using the definition of $G_{\cdot, \cdot}^{\textup{D}}$ we have
\begin{equation*}
\Delta f(x)=-ia\frac{e^{-\sqrt{-i}\abs{x-y}}}{4\pi\abs{x-y}}+ib\frac{e^{-\sqrt{i}\abs{x-y}}}{4\pi\abs{x-y}}-zq\frac{e^{-\sqrt{z}\abs{x-y}}}{4\pi\abs{x-y}}+(-a-b+q)\delta(x-y),
\end{equation*}
but the distributional term is cancelled due to $q=a+b$. So $\varphi^{z}\in D(H_{0})$ and the representations \cref{representation1} and 
\begin{equation}
\label{representation2}
\psi=\varphi^{z}+qG_{z, y}^{\textup{D}}, \quad \varphi^{z}\in D(-\Delta^{\textup{D}})
\end{equation}
are equivalent.

The value $\varphi^{z}(y)$ is linked to $\Gamma_{1}\psi$ and $\Gamma_{2}\psi$. In fact
\small
\begin{equation}
\label{Gammaproof}
\varphi^{z}(y)=\lim_{x\to y}\Bigl(aG_{\sqrt{i}, y}^{\textup{D}}(x)+bG_{\sqrt{-i}, y}^{\textup{D}}(x)-qG_{z, y}^{\textup{D}}(x)\Bigr)=\Gamma_{2}\psi-q\frac{iz}{4\pi}+qh_{z, y}^{\textup{D}}(y)=\left(\alpha-\frac{iz}{4\pi}+h_{z, y}^{\textup{D}}(y)\right)q=\Gamma_{\alpha, y}^{\textup{D}}q
\end{equation}
\normalsize
and so the representation of $D(-\Delta_{\alpha, y}^{\textup{D}})$ in \cref{dirichletoperator} is proved.

Let $\psi\in D\left(-\Delta_{\alpha, y}^{\textup{D}}\right)$ and $\varphi\in D(H_{0})$. In order to prove the formula for the action of $-\Delta_{\alpha, y}^{\textup{D}}$ we consider the inner product
\begin{equation*}
\begin{split}
\langle-\Delta_{\alpha, y}^{\textup{D}}\psi, \varphi\rangle_{L^{2}(\textup{D})} & = \langle-\Delta_{\alpha, y}^{\textup{D}}\Bigl(\varphi^{z}+qG_{z, y}^{\textup{D}}\Bigr), \varphi\rangle_{L^{2}(\Omega)}=\langle H_{0}^{*}\varphi^{z}, \varphi\rangle+\sum_{k=1}^{N}q\langle H_{0}^{*}G_{z, y}^{\textup{D}}, \varphi\rangle\\
                                                           & = \langle H_{0}\varphi^{z},\varphi\rangle+\sum_{k=1}^{N}qz^{2}\langle G_{z, y}^{\textup{D}}, \varphi\rangle.
\end{split}
\end{equation*}
This equation being valid $\forall\varphi\in D(H_{0})$ proves the action formula.

Let $\varphi\in L^{2}(\R_{+}^{3})$. Due to  Kre\u{\i}n's formula the resolvent has the form
\begin{equation}
\label{kreinHAB}
\left(-\Delta_{\alpha, y}^{\textup{D}}-z^{2}\right)^{-1}\varphi=\left(-\Delta^{\textup{D}}-z^{2}\right)^{-1}\varphi+q(\varphi)G_{z, y}^{\textup{D}},
\end{equation}
with $q(\varphi)$ to be determined. Equation \cref{Gammaproof} with $\varphi^{z}(y)$ replaced by $\left(-\Delta^{\textup{D}}-z^{2}\right)^{-1}\varphi$ gives the desired result
\begin{equation*}
q=\left(\Gamma_{\alpha, y}^{\textup{D}}\right)^{-1}\left[\left(-\Delta^{\textup{D}}-z^{2}\right)^{-1}\varphi\right](y).
\end{equation*}
\end{proof}

\section*{Acknowledgments.}
\noindent The first author acknowledges the support of the Next Generation EU - Prin 2022 project "Singular Interactions and Effective Models in Mathematical Physics- 2022CHELC7". The authors thank the anonymous referee for several relevant remarks and corrections. The first author is also grateful to Prof.Tanya Christiansen for discussions and elucidations about the results in \cite{CDG25}.

\end{document}